\definecolor{ao}{rgb}{0.0, 0.5, 0.0}
\newtheorem{remark}{Remark}
\newtheorem{theorem}{Theorem}
\newtheorem{proposition}{Proposition}
\newtheorem{corollary}{Corollary}
\newtheorem{definition}{Definition}
\newtheorem{lemma}{Lemma}
\def\0{\mbox{\tiny $0$}}
\def\1{\mbox{\tiny $1$}}
\def\2{\mbox{\tiny $2$}}
\def\3{\mbox{\tiny $3$}}
\def\4{\mbox{\tiny $4$}}
\def\5{\mbox{\tiny $5$}}
\def\6{\mbox{\tiny $6$}}
\def\7{\mbox{\tiny $7$}}
\def\8{\mbox{\tiny $8$}}
\def\9{\mbox{\tiny $9$}}
\def\k{k_{_{B}}}
\def\r{\rangle}
\def\l{\langle}
\def\m{\bar{m}}
\def\q{\bar{q}}
\def\qq{q_{12}}
\def\b{\beta^{'}}
\newcommand{\SOMMA}[2]{\displaystyle\sum\limits_{#1}^{#2}}
\newcommand{\sommaSigma}[1]{\displaystyle\sum\limits_{\lbrace#1\rbrace}}
\newcommand{\pder}[2]{\ensuremath{\dfrac{\partial #1}{\partial #2}}}
\long\def \beq#1\eeq {\begin{equation} #1 \end{equation}}
\long\def \beaq#1\eeaq {\begin{equation}\begin{aligned} #1 \end{aligned}\end{equation}}
\long\def \bes#1\ees {\begin{equation}\begin{split} #1 \end{split} \end{equation}}
\long\def \bea#1\eea {\begin{eqnarray} #1 \end{eqnarray}}
\long\def \bse[#1]#2\ese {\begin{subequations}\label{#1}\begin{align} #2 \end{align}\end{subequations}}
\newcommand{\dt}{\frac{\partial}{\partial t}}
\title{Rigorous approaches for spin glass and Gaussian spin glass with P-wise interactions}
\author[a, b, c]{Linda Albanese,}
\author[a, b]{Andrea Alessandrelli}
\affiliation[a]{Dipartimento di Matematica e Fisica {\em Ennio De Giorgi},  Universit\`a  del Salento, Via per Arnesano, 73100, Lecce, Italy}
\affiliation[b]{Scuola Superiore ISUFI, Campus Ecotekne, Via Monteroni, 73100, Lecce,Italy}
\affiliation[c]{Istituto Nazionale di Fisica Nucleare, Campus Ecotekne, Via Monteroni, 73100, Lecce,Italy}
\abstract{Purpose of this paper is to face up to P-spin glass and Gaussian P-spin model, i.e. spin glasses with polynomial interactions of degree $P > 2$. We consider the replica symmetry and first step of replica simmetry breaking assumptions and we solve the models via transport equation and Guerra's interpolating technique, showing that we reach the same results. 

Thus, using rigorous approaches, we recover the same expression for quenched statistical pressure and self-consistency equation in both assumption found with other techniques, including the well-known \textit{replica trick} technique. 

At the end, we show that for $P=2$ the Gaussian P-spin glass model is intrinsecally RS. 
}
\begin{document}

\maketitle

\section*{Introduction}
In last few decades statistical mechanics has played an important role in different fields, e.g. spin glasses\cite{MPV, Bovier1} and neural networks\cite{Amit, PRLNN, leonelli}. For this reason there has been a lot of research to tackle increasingly complex models through rigorous mathematical technique\cite{Bovier2}, alternative to the well-known replica-trick\cite{MPV}. Furthermore the growing interest in Replica Symmetry Breaking and its connection with ultrametricity, as just proved in \cite{PanchenkoUltra, GG}, has taken the Scientific Community to take into account the results linked to this assumption\cite{Tala_break, Zecchina1, Remi1, Zecchina2}. 

\par\medskip
In this paper we focus on mathematical methods on spin glasses, considered a real \textit{challenge for mathematicians}\cite{tala}.
In particular, in this paper we want to generalize some computations for Sherrington Kirkpatrick  and Gaussian models with two-spin interactions in a polynomial interactions of degree $P > 2$, with two different rigorous mathematical approaches, namely Guerra's interpolating technique, introduced in \cite{GuerraSum, Guerra}, and transport equation, introduced in Replica Symmetric assumption in \cite{AABF-NN2020} and in Replica Symmetry Breaking \cite{lindaRSB} for the description of Hopfield model and, afterwards, for deep Hopfield networks \cite{trans_DHN}. Thus we prove that both methods are consistent and mathematically valid for these models. 

Furthermore we face to RS and 1RSB assumptions, showing that, as fas as Gaussian spin glass model concernes, we reach the results in \cite{AdrianoGauss, glassy} for $P=2$ and in \cite{crisanti, gardner, Tala3} for $P >2$. For the 1RSB assumption, we use the broken replica interpolation introduced by Francesco Guerra for the Sherrington Kirkpatrick model \cite{Guerra} and, moreover, we adapt it into a PDE framework (standard transport equation). 

We stress that the present paper is focused on mathematical methods to deepen these models. So, we do not face up to physical interpretation of our results, for which we remind to \cite{crisanti, GrossMezard, Tala3}. 

\par \medskip
The paper is structured as follows.

In the first section we analyze P-spin glass model, namely Sherrington Kirkpatrick with P-wise interactions, starting from the main characters until the expression in RS and 1RSB assumption of quenched statistical pressure and self-consistency equations. For both assumption we compute via Guerra's interpolating technique and transport equation, showing that we reach the same results.
We proceed in similar way for Gaussian P-spin glass in the second section, coming to the emergent property for this model too. 
The paper is closed by a section concerning the case $P=2$ for Gaussian P-spin glass, in which we show that it is intrinsecally RS, and some appendices with computations from results of the paper.

\section{P-spin glass model}
In this section we deal with P-spin glass model, namely Sherrington-Kirkpatrick (SK) model with $P$-wise interactions, introduced by Derrida in \cite{Derrida} and meticulously described afterwards, e.g. \cite{gardner, Tala4, Tala5, GrossMezard, MPV}. In particular we find in RS and 1RSB assumption the expressions of statistical quenched pressure and self-consistency equations. 
\subsection{Generalities}
\label{HopfieldSection}
\begin{definition} 
\label{def:pspinham} 
Let $P\in\mathbb{N}$ and $\boldsymbol{\sigma}\in\lbrace-1;+1\rbrace^N$ be a configuration of N spins, the Hamiltonian of the P-spin glass model is defined as
	\begin{equation}
	H_N^{(P)}(\boldsymbol \sigma| \boldsymbol J) \coloneqq -\dfrac{1}{P!}\SOMMA{i_1,\cdots,i_P=1}{N,\cdots,N}\,J_{i_1,\cdots,i_P}\sigma_{i_1}\cdots\sigma_{i_P}
	\label{eq:SK_hbare}
	\end{equation}
where the P-wise quenched couplings $\boldsymbol{J} = \lbrace J_{i_1,\cdots,i_P}\rbrace_{i_1,\cdots,i_P=1,\cdots,N}$ are given by
\begin{equation}
    J_{i_1,...,i_P}\coloneqq\,\dfrac{J_0}{N^{P-1}}+J\sqrt{\dfrac{2}{N^{P-1}}}z_{i_1,\cdots,i_P}
\end{equation}
with $J_0 \in \mathbb{R}^+$ and $z_{i_1,\cdots,i_P}$ i.i.d. standard random variables $z_{i_1, \cdots , i_P} \sim N(0; 1)$.
\end{definition}

\begin{definition}
\label{SK_BareZ}
The partition function related to the Hamiltonian \eqref{eq:SK_hbare} is given by
\begin{equation}
    \begin{array}{lll}
    \label{eq:SK_BareZ}
	Z_N^{(P)}(\beta,\boldsymbol J) &\coloneqq \sum_{ \boldsymbol \sigma } \exp \left[ -\beta H_N^{(P)}(\boldsymbol \sigma | \boldsymbol J)\right]\,,
    \end{array}
\end{equation}
	where $\beta\in \mathbb R^+$ is the inverse temperature in proper units such that for $\beta \to 0$ the probability distribution for the neural configuration is uniformly spread while for $\beta \to \infty$ it is sharply peaked at the minima of the energy function \eqref{eq:SK_hbare}.
\end{definition}

We introduce the \emph{Boltzmann average} induced by the partition function (\ref{eq:SK_BareZ}), denoted with $\omega_{\boldsymbol J}$ and, for an arbitrary observable $O(\boldsymbol \sigma)$, defined as
	\begin{equation}
	\omega_{\boldsymbol J} (O (\boldsymbol \sigma)) : = \frac{\sum_{\boldsymbol \sigma} O(\boldsymbol \sigma) e^{- \beta H_N(\boldsymbol \sigma| \boldsymbol J)}}{Z_N(\beta, \boldsymbol J)}.
	\end{equation}
	This can be further averaged over the realization of the $J_{i_1,\cdots,i_P}$'s (also referred to as \emph{quenched average}) to get
	\begin{equation}
	  \langle O(\boldsymbol \sigma) \rangle \coloneqq \mathbb{E} \omega_{\boldsymbol J} (O(\boldsymbol \sigma)).  
	\end{equation}

\begin{definition}The intensive quenched pressure of the P-spin glass model (\ref{eq:SK_hbare}) is defined as
\begin{equation}
\label{PressureDef}
\mathcal A_N^{(P)}(\beta,\bm J) \coloneqq \frac{1}{N} \mathbb{E} \ln Z_N^{(P)}(\beta, \boldsymbol J),
\end{equation}
and its thermodynamic limit, assuming its existence, is referred to as
\begin{equation}
\mathcal A^{(P)}(\beta,\bm J) \coloneqq \lim_{N \to \infty} \mathcal A^{(P)}_N(\beta, \bm J).
\label{eq:quenched_pressure_def}
\end{equation}
\end{definition}

\begin{remark}
The existence of the thermodynamical limit of free energy density is proved for P-spin glass model in Guerra and Toninelli's work\cite{GuerraTon}.
\end{remark}

\begin{remark}
We stress that the quenched statistical pressure $\mathcal{A}^{(P)}$ is equivalent to free energy, namely 
\begin{align}
\mathcal{A}^{(P)} (\beta, \bm J) = -{\beta} f^{(P)}(\beta, \bm J) = -{\beta} \left(E(\beta, \bm J) - \frac{S(\beta, \bm J)}{\beta}\right)
\end{align}
where $E$ is the internal energy and S the entropy of the system. 
\end{remark}

In order to solve the model we want to find out an explicit expression for the quenched pressure \eqref{eq:quenched_pressure_def} in terms of the natural order parameters of the theory, namely the magnetization $m$ and the two-replica overlap $q_{12}$, defined in the following

\begin{definition} The order parameters used to describe the macroscopic behavior of the model are the standard ones \cite{Amit,Coolen}, namely, the magnetization 
	\begin{equation}
	m \coloneqq \frac{1}{N}\sum_{i=1}^{N} \sigma_i\,,
	\end{equation}
	and the two-replica overlap, introduced as
\begin{equation}
\label{q}
q_{12} \coloneqq \frac{1}{N}\sum_{i=1}^N \sigma_i^{(1)}\sigma_i^{(2)}\,.
\end{equation}
\end{definition}

\subsection{Resolution via Guerra's interpolation}
The purpose of this section is to solve the P-spin glass model through Guerra's interpolating technique. To do so, we compute the derivative w.r.t. the interpolating parameter $t$ and we apply the Fundamental theorem of Calculus. In the end, we find the expression of statistical pressure in the approximation of replica symmetry (RS) and first step of replica symmetry breaking (RSB).

\subsubsection{RS solution}
\begin{definition}
    \label{defn: RSassumption}
	Under the replica-symmetry assumption, the order parameters, in the thermodynamic limit, self-average around their mean values and their distributions get delta-peaked at their equilibrium value (denoted with a bar), independently of the replica considered, namely
	\begin{eqnarray}
	\label{eq:m_ter}
	\lim_{N\to \infty} \langle (m - \bar m)^2 \rangle = 0 &\Rightarrow& \lim_{N\to \infty}  \langle m \rangle = \bar m,\\
	\lim_{N\to \infty} \langle (q_{12} - \bar q)^2 \rangle = 0 &\Rightarrow& \lim_{N \to \infty}  \langle q_{12} \rangle = \bar q.
	\end{eqnarray}
	For the generic order parameter $X$ this can be rewritten as $\langle (\Delta X)^2 \rangle \xrightarrow[]{N\to\infty}0$,
	where
	$$
	\Delta X \coloneqq  X - \bar{X},
	$$
	and, clearly, the RS approximation also implies that, in the thermodynamic limit, $\langle \Delta X \Delta Y \rangle = 0$ for any generic pair of order parameters $X,Y$. Moreover in the thermodynamic limit, we have $\langle (\Delta X)^k \rangle \rightarrow 0$ for $k \geq 2$.
\end{definition}
\begin{definition} 
Given the interpolating parameter $t \in [0,1]$, $A,\ \psi \in \mathbb{R}$  and $z_i \sim \mathcal{N}(0,1)$ for $i=1, \hdots , N$ standard Gaussian variables i.i.d., the Guerra's interpolating partition function is given as 
\begin{equation}
\begin{array}{lll}
     \mathcal{Z}^{(P)}_N(t) &\coloneqq& \sommaSigma{\boldsymbol \sigma} \exp{}\Bigg[t\dfrac{\b J_0 N}{2} m^P(\boldsymbol \sigma)+(1-t)N\psi\,m(\boldsymbol \sigma)+
     \\\\
     & &+\sqrt{t}\b J\sqrt{\dfrac{1 }{2N^{P-1}}}\SOMMA{i_1,\cdots,i_{_{P}}=1}{N,\cdots,N}z_{i_1\cdots,i_{_{P}}}\sigma_{i_1}\cdots\sigma_{_{P}}+
     \\\\
     & &+A\sqrt{1-t}\,\SOMMA{i=1}{N} z_i\sigma_i\,,
     \label{def:partfunct_GuerraRS}
\end{array}
\end{equation}
where $\b=2\beta/P!$.
\end{definition}
\begin{definition} The interpolating pressure for the P-spin glass model (\ref{eq:SK_hbare}), at finite $N$, is introduced as
\begin{eqnarray}
\mathcal{A}^{(P)}_N(t) &\coloneqq& \frac{1}{N} \mathbb{E} \left[  \ln \mathcal{Z}^{(P)}_N(t)  \right],
\label{hop_GuerraAction}
\end{eqnarray}
where the expectation $\mathbb E$ is now meant over $z_{i_1,\cdots,i_P}$ and $z_i$, in the thermodynamic limit,
\begin{equation}
\mathcal{A}^{(P)}(t) \coloneqq \lim_{N \to \infty} \mathcal{A}^{(P)}_N(t).
\label{hop_GuerraAction_TDL}
\end{equation}
By setting $t=1$ the interpolating pressure recovers the original one (\ref{PressureDef}), that is $A_N^{(P)} (\beta,J_0,J) = \mathcal{A}^{(P)}_N(t=1)$.
\end{definition}

\begin{remark}
	The interpolating structure implies an interpolating measure, whose related Boltzmann factor reads as
	\begin{equation}
	\mathcal B (\boldsymbol \sigma; t )\coloneqq  \exp \left[ \beta \mathcal H (\boldsymbol \sigma; t) \right];
	\end{equation}

In this way the partition function is written as $\mathcal Z_N(t) =  \sum_{\boldsymbol \sigma} \mathcal B (\boldsymbol \sigma; t)$ .\\
A generalized average follows from this generalized measure as
\beq
	\omega_{t} (O (\boldsymbol \sigma )) \coloneqq   \sum_{\boldsymbol \sigma} O (\boldsymbol \sigma ) \mathcal B (\boldsymbol \sigma; t)
	\eeq
	and
\beq
\langle O (\boldsymbol \sigma ) \rangle_{t}  \coloneqq \mathbb E [ \omega_{t} (O (\boldsymbol \sigma)) ].
\eeq
where $ \mathbb E$ denotes the average over $z_{i_1,\cdots,i_P}$ and $\lbrace z_i\rbrace_{i=1,\cdots,N}$.

Of course, when $t=1$ the standard Boltzmann measure and related average is recovered.
Hereafter, in order to lighten the notation, we will drop the subindices $t$.
\end{remark}
\begin{lemma} 
The $t$ derivative of interpolating pressure is given by 
\begin{equation}
\small
    \begin{array}{lll}
         \dfrac{d \mathcal{A}^{(P)}(t)}{d t}\coloneqq & \dfrac{\b J_0 }{2} \left(\l m^P\r-\dfrac{2\psi }{\b J_0}\l m \r\right)-\dfrac{\beta'\,^2 J^2}{4} \left(\l q_{12}^P\r-\dfrac{2 A^2 }{\beta'\,^2 J^2 }\l q_{12}\r\right)
         \\\\
         &+\dfrac{\beta'\,^2 J^2}{4}-\dfrac{A^2}{2}
    \end{array}
    \label{eq:streaming_RS_Guerra}
\end{equation}
\end{lemma}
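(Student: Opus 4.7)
The plan is to differentiate $\mathcal{A}_N^{(P)}(t)=\frac{1}{N}\mathbb{E}[\ln\mathcal Z_N^{(P)}(t)]$ directly under the expectation. Since $\mathcal Z_N^{(P)}(t)$ depends on $t$ through four blocks in the exponent, the chain rule produces four contributions:
\begin{equation}
\frac{d\mathcal A_N^{(P)}}{dt}=\frac{\b J_0}{2}\langle m^P\rangle-\psi\langle m\rangle+\frac{\b J}{2N\sqrt{t}}\sqrt{\frac{1}{2N^{P-1}}}\sum_{i_1,\dots,i_P}\mathbb{E}\bigl[z_{i_1\cdots i_P}\,\omega(\sigma_{i_1}\cdots\sigma_{i_P})\bigr]-\frac{A}{2N\sqrt{1-t}}\sum_{i}\mathbb{E}[z_i\,\omega(\sigma_i)].
\end{equation}
The first two pieces are already in the desired form; the work is concentrated in the two disorder terms.

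For each of them I would apply Stein's lemma (Gaussian integration by parts), $\mathbb{E}[z\,F(z)]=\mathbb{E}[\partial_z F]$. Taking the $z_{i_1\cdots i_P}$ derivative of $\omega(\sigma_{i_1}\cdots\sigma_{i_P})$ brings down a factor $\sqrt{t}\,\b J\sqrt{1/(2N^{P-1})}$ and produces the variance $1-\omega(\sigma_{i_1}\cdots\sigma_{i_P})^2$ (using $\sigma_i^2=1$). Summing over all $P$-tuples of indices, one recognises the two-replica identity
\begin{equation}
\sum_{i_1,\dots,i_P}\omega(\sigma_{i_1}\cdots\sigma_{i_P})^2=\sum_{i_1,\dots,i_P}\omega\bigl(\sigma_{i_1}^{(1)}\sigma_{i_1}^{(2)}\cdots\sigma_{i_P}^{(1)}\sigma_{i_P}^{(2)}\bigr)=N^{P}\,\omega(q_{12}^{P}),
\end{equation}
so the $z_{i_1\cdots i_P}$ term collapses to $\tfrac{(\b J)^2}{4}\bigl(1-\langle q_{12}^P\rangle\bigr)$. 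The same IBP applied to the $z_i$ term, combined with $\sum_i\omega(\sigma_i)^2=N\,\omega(q_{12})$, yields $-\tfrac{A^2}{2}\bigl(1-\langle q_{12}\rangle\bigr)$. The $\sqrt{t}$ and $\sqrt{1-t}$ prefactors cancel against the $1/\sqrt{t}$ and $1/\sqrt{1-t}$ coming from the derivative, which is the small book-keeping step to watch.

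Collecting and factoring gives
\begin{equation}
\frac{d\mathcal A_N^{(P)}}{dt}=\frac{\b J_0}{2}\!\left(\langle m^P\rangle-\frac{2\psi}{\b J_0}\langle m\rangle\right)-\frac{(\b J)^2}{4}\!\left(\langle q_{12}^P\rangle-\frac{2A^2}{(\b J)^2}\langle q_{12}\rangle\right)+\frac{(\b J)^2}{4}-\frac{A^2}{2},
\end{equation}
and passing to the thermodynamic limit (justified a priori by the existence result of Guerra--Toninelli cited in the remark) produces the claimed identity.

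The main conceptual obstacle is the combinatorial identification of $\sum_{i_1,\dots,i_P}\omega(\sigma_{i_1}\cdots\sigma_{i_P})^2$ with $N^P\omega(q_{12}^P)$ via the replica-duplication trick; everything else is the Gaussian IBP and algebraic rearrangement. A minor technical point is that the $z_i$ integration by parts is only formally valid for $t<1$ because of the $1/\sqrt{1-t}$ factor, but the cancellation is exact so the formula extends to the closed interval by continuity of $\mathcal A_N^{(P)}(t)$.
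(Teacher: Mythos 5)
Your proof follows exactly the same route as the paper's: differentiate the finite-$N$ interpolating pressure, apply Stein's lemma (Gaussian integration by parts) to the two disorder terms, use $\sigma_i^2=1$ to collapse the diagonal contributions, and recognise the replica-duplicated sums as powers of the overlap. The calculation, including the cancellation of the $\sqrt{t}$ and $\sqrt{1-t}$ factors and the final algebraic regrouping, is correct. One small remark: the final appeal to the thermodynamic limit is superfluous here, since the identity already holds exactly at finite $N$ after the Stein-lemma step (the paper's notation $\mathcal{A}^{(P)}(t)$ rather than $\mathcal{A}^{(P)}_N(t)$ in the lemma is simply loose); the limit is only invoked later, when the relations \eqref{potential_m1}--\eqref{potential_pq} are used to kill the higher central moments.
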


\begin{proof}
Deriving equation \eqref{hop_GuerraAction} with respect to $t$, we get 
\begin{equation}
\footnotesize
\begin{array}{lll}
     \dfrac{d \mathcal{A}^{(P)}(t)}{d t}&=&\dfrac{\b J_0}{2}\l m^P \r - \psi \l m \r+
     \\\\
     &&+\dfrac{\b J}{2N \sqrt{t}}\sqrt{\dfrac{1}{ 2N^{P-1}}}\mathbb{E}\left[\sum\limits_{\boldsymbol{i}}z_{\boldsymbol{i}}\omega(\sigma_{i_1}\cdots \sigma_{i_{P}})\right]+
     \\\\
     &&-\dfrac{1}{2N\sqrt{1-t}}\mathbb{E}\left[A\sum\limits_{i} z_{i}\omega(\sigma_{i})\right]\,.
\end{array}
\label{eq:proof_streaming_Guerra}
\end{equation}
Now, using the Stein's lemma (also known as Wick's theorem) on standard Gaussian variable $z_i$ and $z_{\boldsymbol{i}}$, which is for a standard Gaussian variable $J$, i.e. $J \sim N(0, 1)$, and for a generic
function $f(J)$ for which $\mathbb{E} z f(z)$ and $\mathbb{E} \partial_J f(J)$ both exist, then
\begin{align}
    \label{eqn:gaussianrelation2}
    \mathbb{E} \left( J f(J)\right)= \mathbb{E} \left( \frac{\partial f(J)}{\partial J}\right)\,,
\end{align}
we may rewrite the second and the third member of \eqref{eq:proof_streaming_Guerra} as
\begin{equation*}
\footnotesize
\begin{array}{lll}
     \dfrac{\b J}{2N \sqrt{t}}\sqrt{\dfrac{1}{ 2N^{P-1}}}\mathbb{E}\left[\sum\limits_{\boldsymbol{i}}z_{\boldsymbol{i}}\omega(\sigma_{i_1}\cdots \sigma_{i_{P}})\right]-\dfrac{1}{2N\sqrt{1-t}}\mathbb{E}\left[A\sum\limits_{i} z_{i}\omega(\sigma_{i})\right]=D_1+D_2\,.
\end{array}
\end{equation*}
Let’s investigate those three terms:
\begin{equation}
\footnotesize
\begin{array}{lll}
     D_1&=\dfrac{\b J}{2N \sqrt{t}}\sqrt{\dfrac{1}{ 2N^{P-1}}}\mathbb{E}\left[\sum\limits_{\boldsymbol{i}}\partial_{z_{\boldsymbol{i}}}\omega(\sigma_{i_1}\cdots \sigma_{i_{P}})\right]
     \\\\
     &=\dfrac{\beta'\,^2 J^2}{4 N^{P}}\left(\sum\limits_{\boldsymbol{i}}\mathbb{E}\left[\omega((\sigma_{i_1}\cdots \sigma_{i_{P}})^2)\right]-\sum\limits_{\boldsymbol{i}}\mathbb{E}\left[\omega(\sigma_{i_1}\cdots \sigma_{i_{P}})^2\right]\right)
     \\\\
     &=\dfrac{\beta'\,^2 J^2}{4 }\Big[1-\l\qq^{P}\r\Big]\,;
\end{array}
\label{eq:D1_GuerraRS}
\end{equation}
\begin{equation}
\footnotesize
\begin{array}{lll}
     D_2&=
    -\dfrac{1}{2N\sqrt{1-t}}A\sum\limits_{i}\mathbb{E}\left[ \partial_{z_{i}}\omega(\sigma_{i})\right]
    \\\\
    &=-\dfrac{1}{2N}A^2\left(\sum\limits_{i}\mathbb{E}\left[ \omega(\sigma^{2}_{i})\right]-\sum\limits_{i}\mathbb{E}\left[ \omega(\sigma_{i})^{2}\right]\right)
     \\\\
     &=-\dfrac{1}{2}A^2\Big[1-\l\qq\r\Big]\,.
\end{array}
\label{eq:D2_GuerraRS}
\end{equation}
Rearranging together \eqref{eq:D1_GuerraRS} and \eqref{eq:D2_GuerraRS} we obtain the thesis.
\end{proof}

\begin{remark}
We stress that, for the RS assumption presented in Definition \eqref{defn: RSassumption}, we can use the relations
\footnotesize
\begin{align}
    \langle m^P \rangle &= \sum_{k=2}^P \begin{pmatrix}P\\k\end{pmatrix} \langle (m-\m)^k \rangle \m^{P-k} + \m^P (1-P) + P\m^{P-1}\langle m \rangle\,, \label{potential_m1}
    \\
   \langle q_{12}^P \rangle &= \sum_{k=2}^P \begin{pmatrix}P\\k\end{pmatrix} \langle (q_{12}-\q)^k \rangle \q^{P-k} + \q^P (1-P) + P\q^{P-1}\langle q_{12} \rangle\,,
    \label{potential_pq}
\end{align}
\normalsize
which are proven in Appendix \eqref{app:potenziali}. Using these relations, if we fix the constants as
\begin{equation}
    \begin{array}{lll}
         \psi=\dfrac{P}{2}\b J_0 \m^{^{P-1}}\,,
         \\\\
         A^2=\dfrac{P}{2}\beta'\,^2 J^2\q^{^{P-1}}\,,
    \end{array}
    \label{eq:constant_GuerraRS}
\end{equation}
the \eqref{eq:streaming_RS_Guerra} in the thermodynamical limit reads as
\begin{equation}
\begin{array}{lll}
     \dfrac{d \mathcal{A}^{(P)}(t)}{d t}&\coloneqq &-\dfrac{P-1}{2}\b J_0 \m^P +\dfrac{\beta'\,^2J^2}{4}\left(1-P\q^{P-1}+(P-1)\q^P\right)
\end{array}
\label{eq:streaming_RS_Guerra2}
\end{equation}
which is now independent of $t$.
\end{remark}

\begin{proposition}
\label{SK_P_quenched}
In the thermodynamic limit ($N\to\infty$) and under RS assumption, applying the Fundamental Theorem of Calculus and using the suitable values of $A$ and $\psi$, we find the quenched pressure for the P-spin glass model as
\begin{equation}
\label{eq:pressure_GuerraRS}
\begin{array}{lll}
     \mathcal{A}^{(P)}(\beta,J_0,J) &=& \ln{2} +\left\langle\ln{\cosh{\left[\dfrac{P}{2}\b J_0 \m^{^{P-1}} +z \b J\sqrt{\dfrac{P}{2}\q^{^{P-1}}}\right]}}\right\rangle_z+
    \\\\
    &&-\dfrac{P-1}{2 }\b J_0 \m^P +\dfrac{\beta'\,^2J^2}{4}\left(1-P\q^{P-1}+(P-1)\q^P\right)\,,
\end{array}
\end{equation}
where the values of order parameters are ruled by the following self-consistence equations 
\begin{equation}
    \begin{array}{lll}
         \m=\left\langle\tanh{\left[\dfrac{P}{2}\beta J_0 \m^{^{P-1}} +z \beta J\sqrt{\dfrac{P }{2}\q^{^{P-1}}}\right]}\right\rangle_z \,,
         \\\\
         \q=\left\langle\tanh{}^2{\left[\dfrac{P}{2}\beta J_0 \m^{^{P-1}} +z \beta J\sqrt{\dfrac{P }{2}\q^{^{P-1}}}\right]}\right\rangle_z \,.
    \end{array}
    \label{eq:self_GuerraRS}
\end{equation}
\end{proposition}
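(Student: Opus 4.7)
The plan is to combine the one-body boundary condition at $t=0$ with the $t$-independent streaming derivative from the preceding remark, and then read off the self-consistency equations by extremizing.

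First I would evaluate the interpolating pressure at $t=0$. At $t=0$ the interacting terms in \eqref{def:partfunct_GuerraRS} vanish and only the linear pieces $\psi\,m(\boldsymbol{\sigma})$ and $A\sum_i z_i\sigma_i$ survive, so the partition function factorizes over sites:
\begin{equation}
\mathcal{Z}^{(P)}_N(0)=\prod_{i=1}^{N}\sum_{\sigma_i=\pm1}\exp\bigl[(\psi+Az_i)\sigma_i\bigr]=\prod_{i=1}^{N}2\cosh(\psi+Az_i).
\end{equation}
Taking $\frac{1}{N}\mathbb{E}\ln(\cdot)$ and using the i.i.d. nature of the $z_i$ gives $\mathcal{A}^{(P)}(0)=\ln 2+\mathbb{E}_z\ln\cosh(\psi+Az)$. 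Inserting the prescribed $\psi$ and $A$ from \eqref{eq:constant_GuerraRS} produces exactly the cosh-term appearing in \eqref{eq:pressure_GuerraRS}.

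Next I would invoke the Fundamental Theorem of Calculus, $\mathcal{A}^{(P)}(1)=\mathcal{A}^{(P)}(0)+\int_0^1 \frac{d\mathcal{A}^{(P)}(t)}{dt}\,dt$, and use the key fact, already established in the remark after the lemma, that under the RS ansatz together with the choice \eqref{eq:constant_GuerraRS} the streaming derivative collapses to the $t$-independent expression \eqref{eq:streaming_RS_Guerra2}. The integral is therefore trivial and contributes precisely the algebraic tail $-\tfrac{P-1}{2}\beta'J_0\bar m^P+\tfrac{\beta'^2J^2}{4}(1-P\bar q^{P-1}+(P-1)\bar q^P)$ in \eqref{eq:pressure_GuerraRS}. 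Identifying $\mathcal{A}^{(P)}(1)$ with the original pressure \eqref{PressureDef} closes the representation of $\mathcal{A}^{(P)}(\beta,J_0,J)$.

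Finally, the self-consistency equations \eqref{eq:self_GuerraRS} are obtained by imposing $\partial_{\bar m}\mathcal{A}^{(P)}=0$ and $\partial_{\bar q}\mathcal{A}^{(P)}=0$ on the closed-form RS pressure. Differentiating the cosh-term produces a $\tanh$ weighted by factors of $\bar m^{P-2}$ and $\bar q^{P-2}$; these factors match exactly the $\bar m$- and $\bar q$-derivatives of the algebraic tail, so after cancellation one is left with the clean fixed-point identities for $\bar m$ and $\bar q$ stated in the proposition.

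The conceptually delicate step — and the one I would flag as the main obstacle — is the passage used in the remark, namely justifying that under RS the polynomial $\langle m^P\rangle,\langle q_{12}^P\rangle$ collapse to $\bar m^P,\bar q^P$ plus boundary pieces via \eqref{potential_m1}-\eqref{potential_pq}, so that the $t$-dependence in \eqref{eq:streaming_RS_Guerra} really vanishes in the thermodynamic limit. Once that self-averaging is taken for granted (as encoded in Definition \ref{defn: RSassumption}), the rest of the proof is a direct assembly of the $t=0$ boundary, the FTC, and an elementary extremization.
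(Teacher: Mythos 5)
Your proposal is correct and follows essentially the same route as the paper: evaluate the one-body term at $t=0$ (factorizing over sites to get $\ln 2 + \mathbb{E}_z\ln\cosh(\psi+Az)$), apply the Fundamental Theorem of Calculus using the $t$-independent streaming derivative \eqref{eq:streaming_RS_Guerra2} obtained from the RS collapse of $\langle m^P\rangle$ and $\langle q_{12}^P\rangle$, and then extremize the resulting closed form to read off \eqref{eq:self_GuerraRS}. You also correctly identify the only nontrivial ingredient — the self-averaging encoded in Definition \ref{defn: RSassumption} that makes the potential term vanish — which the paper delegates to the preceding Remark.
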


\begin{proof}
Using the Fundamental Theorem of Calculus:
\begin{equation}
\footnotesize
    \mathcal{A}^{(P)}=\mathcal{A}^{(P)}(t=1)=\mathcal{A}^{(P)}(t=0)+\int\limits_0^1\, \partial_s \mathcal{A}^{(P)}(s
)\Big|_{s=t}\,dt
\label{eq:F_T_Calculus}
\end{equation}
and computing the one-body terms
\begin{equation}
\footnotesize
\begin{array}{lll}
     \mathcal{A}^{(P)}(t=0)&=&\dfrac{1}{N}\mathbb{E}\ln{}\Bigg\lbrace \sommaSigma{\boldsymbol \sigma} \exp{}\Bigg[N\psi\,m(\boldsymbol \sigma)+A\,\SOMMA{i=1}{N} z_i\sigma_i\Bigg]\Bigg\rbrace\,,
     \\\\
         &=&\ln{2} +\left\langle\ln{\cosh{\left[\psi +z A\right]}}\right\rangle_z+
       \\\\
        &=&\ln{2} +\left\langle\ln{\cosh{\left[\dfrac{P}{2}\b J_0 \m^{^{P-1}} +z \b J\sqrt{\dfrac{P}{2}\q^{^{P-1}}}\right]}}\right\rangle_z\,.
    \end{array}
    \label{eq:one_body_GuerraRS}
\end{equation}
Finally, putting \eqref{eq:streaming_RS_Guerra2} and \eqref{eq:one_body_GuerraRS} in \eqref{eq:F_T_Calculus}, we find \eqref{eq:pressure_GuerraRS}.

Extremizing the statistical pressure in \eqref{eq:pressure_GuerraRS} w.r.t. the order parameters we find the self-consistency equations.
\end{proof}

\begin{remark}
If we do not take into account the disordered part of the system, namely $J=0$, in \eqref{eq:pressure_GuerraRS}, we recover for this P-spin ferromagnetic model the same result in \cite{fachechi}.
\end{remark}

\begin{remark}
If we don't consider  in \eqref{eq:pressure_GuerraRS} the ferromagnetic part of the system, namely $J_0=0$, we recover the same expressions found in \cite{Adriano_HJ}.
\end{remark}

\subsubsection{1RSB solution}

In this subsection we turn to the solution of the P-spin glass model via the Guerra's interpolating technique, restricting the description at the first step of RSB.

\begin{definition} \label{def:HM_RSB}
In the first step of replica-symmetry breaking, the distribution of the two-replica overlap $q$, in the thermodynamic limit, displays two delta-peaks at the equilibrium values, referred to as $\bar{q}_1,\ \bar{q}_2$, and the concentration on the two values is ruled by $\theta \in [0,1]$, namely
\begin{equation}
\lim_{N \rightarrow + \infty} P'_N(q) = \theta \delta (q - \bar{q}_1) + (1-\theta) \delta (q - \bar{q}_2). \label{limforq2}
\end{equation}
The magnetization still self-averages at $ \bar{m}$ as in (\ref{eq:m_ter}).
\end{definition}

Following the same route pursued in the previous sections, we need an interpolating partition function $\mathcal{Z}$ and an interpolating quenched pressure $\mathcal{A}$,  that are defined hereafter.
\begin{definition}
Given the interpolating parameter $t$ and the i.i.d. auxiliary fields $\lbrace z_i^{(1)}, z_i^{(2)}\rbrace_{i=1,...,N}$, with $z_i^{(1,2)} \sim \mathcal N(0,1)$ for $i=1, ..., N$ we can write the 1-RSB interpolating partition function $\mathcal Z_N(t)$ for the P-spin glass model (\ref{eq:SK_hbare}) recursively, starting by

\begin{equation}
\begin{array}{lll}
     \mathcal{Z}^{(P)}_2(t) &:=& \sommaSigma{\boldsymbol \sigma} \displaystyle\int \mathcal{D} \bm \tau\exp{}\Bigg[t\dfrac{\b J_0 N}{2} m^P(\boldsymbol \sigma)+(1-t)N\psi\,m(\boldsymbol \sigma)+
     \\\\
     & &+\sqrt{t}\b J\sqrt{\dfrac{1 }{2N^{P-1}}}\SOMMA{i_1,\cdots,i_{_{P}}=1}{N,\cdots,N}z_{i_1\cdots,i_{_{P}}}\sigma_{i_1}\cdots\sigma_{i_{P}}+
     \\\\
     & &+\sqrt{1-t}\SOMMA{a=1}{2}\left(A_a\SOMMA{i=1}{N} z_i^{(a)}\sigma_i\right)
     \label{def:partfunct_GuerraRSB}
\end{array}
\end{equation}
where  the $z_{i_1\cdots,i_{_{P}}}$'s i.i.d. standard Gaussian. 
Averaging out the fields recursively, we define
\begin{align}
\label{eqn:Z1}
\mathcal Z_1^{(P)}(t) \coloneqq& \mathbb E_2 \left [ \mathcal Z_2^{(P)}(t)^\theta \right ]^{1/\theta} \\
\label{eqn:Z0_1RSB}
\mathcal Z_0^{(P)}(t) \coloneqq&  \exp \mathbb E_1 \left[ \ln \mathcal Z_1^{(P)}(t) \right ] \\
\mathcal Z_N^{(P)}(t) \coloneqq & \mathcal Z_0^{(P)}(t) ,
\end{align}
where with $\mathbb E_a$ we mean the average over the variables $z_i^{(a)}$'s, for $a=1, 2$, and with $\mathbb{E}_0$ we shall denote the average over the variables $z_{i_1\cdots,i_{_{P}}}$'s.
\end{definition}

\begin{definition}
\label{def:interpPressRSB}
The 1RSB interpolating pressure, at finite volume $N$, is introduced as
\begin{equation}\label{AdiSK1RSB}
\mathcal A_N^{(P)} (t) \coloneqq \frac{1}{N}\mathbb E_0 \big[ \ln \mathcal Z_0^{(P)}(t) \big],
\end{equation}
and, in the thermodynamic limit, assuming its existence
\begin{equation}
\mathcal A^{(P)} (t) \coloneqq \lim_{N \to \infty} \mathcal A^{(P)}_N (t).
\end{equation}
By setting $t=1$, the interpolating pressure recovers the standard pressure (\ref{PressureDef}), that is, $A^{(P)}_N(\beta,J_0,J) = \mathcal A^{(P)}_N (t =1)$.
\end{definition}

\begin{remark}
\label{rem:medie}
In order to lighten the notation, hereafter we use the following 
\begin{align}
\label{eq:unouno_a}
\langle m \rangle=& \mathbb E_0  \mathbb E_1  \mathbb E_2  \left[\mathcal W_2\frac{1}{N}\sum_{i=1}^N \omega( \sigma_i) \right] \\
\label{eqn:q121_a}
 \langle q_{12} \rangle_1=&\mathbb E_0  \mathbb E_1  \left[\frac{1}{N} \sum_{i=1}^N \left( \mathbb E_2 \left[\mathcal W_2\omega(\sigma_i)\right] \right)^2 \right] \\
\label{eqn:q122_a}
\langle q_{12} \rangle_2=& \mathbb E_0  \mathbb E_1  \mathbb E_2 \left [ \mathcal W_2\frac{1}{N}\sum_{i=1}^N \omega(\sigma_i)^2 \right] 
\end{align}
where the weight $\mathcal W_2$ is defined as
\begin{equation}
\mathcal W_2 :=\frac{\mathcal Z_2^\theta}{\mathbb E_2 \left [\mathcal Z_2^\theta\right]}.
\end{equation}
\end{remark}

Now the next step is computing the $t$-derivative of the statistical pressure. In this way we can apply the fundamental theorem of calculus and find the solution of the original model. 

\begin{lemma}
\label{lemma:tderRSB}
The derivative w.r.t. $t$ of interpolating statistical pressure can be written as 
\begin{equation}
\small
\begin{array}{lll}
      d_t \mathcal{A}^{(P)}_N =& \dfrac{\b J_0}{2} \langle m^P \rangle -\psi \langle m \rangle + \dfrac{\beta'\,^2 J^2 }{4} \left[ 1+ (\theta-1) \langle q_{12}^P \rangle_2 - \theta \langle q_{12}^P \rangle_1 \right]  
 \\\\
    &- (A_1)^2 \left[ 1+(\theta-1) \langle q_{12} \rangle_2 - \theta \langle q_{12} \rangle_1 \right]- (A_2)^2 \left[ 1+(\theta-1) \langle q_{12} \rangle_2 \right]    
\end{array}
\end{equation}
\end{lemma}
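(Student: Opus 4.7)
The plan is to use the recursion of Definition~\ref{def:interpPressRSB} to write $\mathcal{A}_N^{(P)}(t)=\frac{1}{N\theta}\,\mathbb{E}_0\mathbb{E}_1[\ln\mathbb{E}_2[\mathcal{Z}_2^\theta]]$ and then differentiate in $t$. The chain rule removes the factor of $\theta$ and produces
\[
d_t\mathcal{A}_N^{(P)}(t)=\frac{1}{N}\,\mathbb{E}_0\mathbb{E}_1\mathbb{E}_2\!\left[\mathcal{W}_2\,\omega(\partial_t\mathcal{H})\right],
\]
where $\mathcal{H}$ is the interpolating exponent appearing inside $\mathcal{Z}_2^{(P)}(t)$. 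The two non-stochastic pieces of $\partial_t\mathcal{H}$, namely $\tfrac{\b J_0 N}{2}\,m^P$ and $-N\psi\,m$, yield the first two terms $\tfrac{\b J_0}{2}\langle m^P\rangle-\psi\langle m\rangle$ immediately, using $\mathbb{E}_2[\mathcal{W}_2]=1$ and the averages of Remark~\ref{rem:medie}.

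The three remaining pieces each carry a Gaussian family, $z_{i_1\cdots i_P}$, $z_i^{(1)}$ or $z_i^{(2)}$, and are handled by Stein's lemma as in the RS proof. The essential new feature is that these families live at different levels of the nested expectations, and this controls what the $z$-derivative can reach. The variables $z_{i_1\cdots i_P}$ and $z_i^{(1)}$ are integrated in $\mathbb{E}_0$ and $\mathbb{E}_1$ respectively, so the weight $\mathcal{W}_2=\mathcal{Z}_2^\theta/\mathbb{E}_2[\mathcal{Z}_2^\theta]$ depends on them through both its numerator and its denominator; differentiating the denominator generates a term proportional to $\mathbb{E}_2[\mathcal{W}_2\,\omega(\cdot)]$ which, once squared under the outer expectation, is precisely the $-\theta\langle\cdot\rangle_1$ contribution. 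By contrast $z_i^{(2)}$ is the very integration variable of $\mathbb{E}_2$, so the denominator is $z_i^{(2)}$-independent and only the numerator of $\mathcal{W}_2$ is differentiated; this explains why the $A_2$ term lacks the $\langle q_{12}\rangle_1$ piece.

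Concretely, for each family I would use $\partial_z \mathcal{Z}_2=c\,\mathcal{Z}_2\,\omega(X)$ (with $c$ the coefficient of $z$ in $\mathcal{H}$ and $X$ the associated spin monomial), combine it with $\partial_z\omega(X)=c[1-\omega(X)^2]$ (since $\sigma_i^2=1$), and convert the normalised sums $N^{-P}\sum_{i_1,\ldots,i_P}\omega(\sigma_{i_1}\cdots\sigma_{i_P})^2$ and $N^{-1}\sum_i\omega(\sigma_i)^2$ into $\omega(q_{12}^P)$ and $\omega(q_{12})$ respectively by a two-replica rewriting. The $\sqrt{t}$ and $\sqrt{1-t}$ prefactors inside $\mathcal{H}$ cancel against the $1/(2\sqrt{t})$ and $-1/(2\sqrt{1-t})$ arising from $\partial_t$, pinning down the coefficients $\tfrac{\b^2 J^2}{4}$, $A_1^2$ and $A_2^2$. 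The main obstacle is purely bookkeeping: for each Stein application one must decide whether the derivative of $\mathcal{W}_2$ reaches the normalising denominator (generating $-\theta\langle\cdot\rangle_1$) and whether the pure Boltzmann piece merges with the weight's self-derivative into a $(\theta-1)\langle\cdot\rangle_2$ factor. The clean asymmetry between the $A_1$ and $A_2$ lines in the statement is the direct fingerprint of the fact that $z_i^{(2)}$ is integrated under the same $\mathbb{E}_2$ that defines $\mathcal{W}_2$.
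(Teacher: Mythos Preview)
Your proposal is correct and follows essentially the same route as the paper's proof in Appendix~\ref{app:tder1RSB}: write $d_t\mathcal A_N^{(P)}=\frac1N\mathbb E_0\mathbb E_1\mathbb E_2[\mathcal W_2\,\omega(\partial_t\mathcal H)]$, peel off the two deterministic pieces, and treat the three Gaussian contributions by Stein's lemma, with the level of nesting of each family determining whether the denominator of $\mathcal W_2$ is differentiated. The paper simply states the outcome of each of the three integrations by parts (labelled $B_1,B_2,B_3$) without the conceptual commentary you give, but the computation is identical.
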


Since the proof is a bit lenghty, we leave it in Appendix \eqref{app:tder1RSB}. 

\begin{remark}
If we want to use the 1RSB assumpion in such a way that
for $a=1,2$
\footnotesize
\begin{align}
    \langle m^P \rangle &= \sum_{k=2}^P \begin{pmatrix}P\\k\end{pmatrix} \langle (m-\bar{m})^k \rangle \bar{m}^{P-k} + \bar{m}^P (1-P) + P\bar{m}^{P-1}\langle m \rangle\,, \label{potential_m_1RSB}
    \\
    \langle q_{12}^P \rangle_a &= \sum_{k=2}^P \begin{pmatrix}P\\k\end{pmatrix} \langle (q_{12}-\q_a)^k \rangle _a\q_a^{P-k} + \q_a^P (1-P) + P\q_a^{P-1}\langle q_{12} \rangle_a\,,
    \label{potential_pq_1RSB}
\end{align}
\normalsize
we need to fix the constants as 
\begin{align}
\label{value_psi}
    \psi&=  \frac{\b J_0}{2}P \m^{P-1},\\
    A_1 &= \frac{\beta'\,^2 J^2 P }{4} \q_1^{P-1},\\
    A_2 &= \frac{\beta'\,^2 J^2 P}{4} (\q_2^{P-1}-\q_1^{P-1}).
\end{align}
Thus, the derivative w.r.t. $t$ in the thermodynamic limit is computed as 
\begin{equation}
    \label{dtA_1RSB_thermo}
    \begin{array}{lll}
         d_t \mathcal{A}^{(P)}_N =&  \dfrac{\b J_0}{2}\m^P(1-P) + \dfrac{\beta'\,^2 J^2}{4} \left[1-P\q_2^{P-1}+(P-1)\q_2^P \right]
         \\\\
    &-\dfrac{\beta'\,^2 J^2 }{4}(P-1)\theta(\q_2^P-\q_1^P)\,.
    \end{array}
\end{equation}
\end{remark}

\begin{proposition}
At finite size and under 1RSB assumption applying the Fundamental Theorem of Calculus and using the suitable values of $A^{(1)}, A^{(2)}, \psi$, we find the quenched pressure for the P-spin glass model as
\begin{equation}
    \begin{array}{lll}
\label{A_1RSB_finite}
    &\mathcal{A}^{(P)} = \log 2 +\dfrac{\b J_0}{2}\m^P(1-P) + \dfrac{\beta'\,^2 J^2}{4} \left[1-P\q_2^{P-1}+(P-1)\q_2^P \right]
    \\\\
    &+ \dfrac{1}{\theta} \mathbb{E}_1 \log \mathbb{E}_2 \cosh^\theta \left[\b\left( \dfrac{ J_0}{2} P\m^{P-1} + z^{(1)}  J \sqrt{\frac{P}{2} \q_1^{P-1}}+ z^{(2)}  J \sqrt{\frac{P}{2} (\q_2^{P-1}-\q_1^{P-1}})\right) \right] 
    \\\\
    &-\dfrac{\beta'\,^2 J^2 }{4}(P-1)\theta(\q_2^P-\q_1^P) + \dfrac{\b J_0 P}{2} \SOMMA{k=2}{P} \begin{pmatrix}P\\k\end{pmatrix} \langle (\Delta m)^k \rangle \bar{m}^{P-k}  
    \\\\
    &+ \dfrac{\beta'\,^2 J^2}{4}(\theta-1)\SOMMA{k=2}{P} \begin{pmatrix}P\\k\end{pmatrix} \langle (\Delta q_2)^k \rangle _2\q_2^{P-k}-\dfrac{\beta'\,^2 J^2}{4}\theta \SOMMA{k=2}{P} \begin{pmatrix}P\\k\end{pmatrix} \langle (\Delta q_1)^k \rangle _1\q_1^{P-k}
    \end{array}
\end{equation}
where we use $\Delta X_a=X-\bar{X}_a$.
\end{proposition}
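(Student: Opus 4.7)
The strategy is Guerra's scheme: write $\mathcal{A}^{(P)}_N = \mathcal{A}^{(P)}_N(t=0) + \int_0^1 d_t \mathcal{A}^{(P)}_N(t)\,dt$ by the Fundamental Theorem of Calculus, evaluate the Cauchy term at $t=0$ explicitly, and integrate the streaming derivative supplied by Lemma \ref{lemma:tderRSB}.

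\textbf{Cauchy term.} At $t=0$ both the ferromagnetic $m^P$ contribution and the quenched $P$-body coupling in (\ref{def:partfunct_GuerraRSB}) are switched off, so the interpolating Hamiltonian becomes purely one-body. Hence $\mathcal{Z}^{(P)}_2(0)$ factorizes across sites as $\prod_{i=1}^N 2\cosh\!\bigl(\psi + A_1 z_i^{(1)} + A_2 z_i^{(2)}\bigr)$. Since $\mathbb{E}_2$ acts independently on each site, the quantity $\mathcal{Z}^{(P)}_1(0)^\theta$ is again a product over $i$, and the logarithm in the recursive definition turns it into a sum of $N$ i.i.d. contributions; applying $\frac{1}{N}\mathbb{E}_0\mathbb{E}_1$ then yields
\begin{equation*}
\mathcal{A}^{(P)}_N(t=0) = \ln 2 + \frac{1}{\theta}\,\mathbb{E}_1 \ln \mathbb{E}_2 \cosh^\theta\!\bigl[\psi + A_1 z^{(1)} + A_2 z^{(2)}\bigr].
\end{equation*}
Substituting the prescribed values of $\psi, A_1, A_2$ reproduces the $\cosh^\theta$ line in the claimed formula.

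\textbf{Integrated streaming.} I would insert the 1RSB identities (\ref{potential_m_1RSB})--(\ref{potential_pq_1RSB}) into the expressions $\langle m^P\rangle, \langle q_{12}^P\rangle_1, \langle q_{12}^P\rangle_2$ appearing in Lemma \ref{lemma:tderRSB}. With the choices (\ref{value_psi}) and the lines that follow, the coefficients of $\langle m\rangle, \langle q_{12}\rangle_1, \langle q_{12}\rangle_2$ cancel by design, leaving the $t$-independent piece (\ref{dtA_1RSB_thermo}) together with the fluctuation remainders $\binom{P}{k}\langle(\Delta m)^k\rangle \m^{P-k}$ and $\binom{P}{k}\langle(\Delta q_a)^k\rangle_a\q_a^{P-k}$ for $k\ge 2$. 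The $t$-independent piece integrates trivially over $[0,1]$, and the $\Delta$-remainders are retained as the finite-size corrections appearing in the statement (they vanish in the thermodynamic limit under Definition \ref{def:HM_RSB}). Summing this integrated contribution with the Cauchy term reproduces (\ref{A_1RSB_finite}).

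\textbf{Main obstacle.} The delicate point is the bookkeeping of the cancellations: one has to verify that the constants satisfy $A_1^2 = \tfrac{\b^2 J^2 P}{4}\q_1^{P-1}$ and $A_1^2 + A_2^2 = \tfrac{\b^2 J^2 P}{4}\q_2^{P-1}$ so as to offset exactly the coefficients of $\langle q_{12}\rangle_1$ and $\langle q_{12}\rangle_2$ produced by (\ref{potential_pq_1RSB}) inside Lemma \ref{lemma:tderRSB}, and similarly $\psi = \tfrac{\b J_0 P}{2}\m^{P-1}$ must offset the $\langle m\rangle$ coefficient; getting the signs and the weights $(\theta-1)$ vs.\ $-\theta$ right in front of the two overlap blocks is where one most easily makes a mistake. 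Once this algebra is checked, the remaining rearrangement of $\m^P, \q_1^P, \q_2^P, \q_2^{P-1}$ into the compact form of the statement is immediate.
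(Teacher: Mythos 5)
Your proposal follows the same route as the paper's (terse) proof: Fundamental Theorem of Calculus, explicit evaluation of the one‑body Cauchy term at $t=0$, and integration of the streaming derivative from Lemma \ref{lemma:tderRSB} after inserting the expansions \eqref{potential_m_1RSB}--\eqref{potential_pq_1RSB} and fixing the constants so the $\langle m\rangle$, $\langle q_{12}\rangle_1$, $\langle q_{12}\rangle_2$ coefficients cancel. The factorization argument for the Cauchy term is exactly the step the paper omits, and your treatment of it is correct.

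One bookkeeping caveat worth flagging, though it is inherited from the source rather than your own error: Lemma \ref{lemma:tderRSB} writes the streaming with coefficients $-(A_1)^2$ and $-(A_2)^2$, but comparison with the RS computation \eqref{eq:D2_GuerraRS} and with the Appendix derivation shows these should carry an extra factor $\tfrac{1}{2}$, i.e.\ $-\tfrac{1}{2}(A_1)^2$ and $-\tfrac{1}{2}(A_2)^2$. Taking the lemma literally as you did forces $A_1^2 = \tfrac{\beta'^2 J^2 P}{4}\bar q_1^{P-1}$, which does not reproduce the argument $\beta' J\sqrt{\tfrac{P}{2}\bar q_1^{P-1}}\,z^{(1)}$ appearing inside $\cosh^\theta$ in \eqref{A_1RSB_finite}; with the corrected factor one instead gets $A_1^2 = \tfrac{\beta'^2 J^2 P}{2}\bar q_1^{P-1}$ and $A_1^2+A_2^2 = \tfrac{\beta'^2 J^2 P}{2}\bar q_2^{P-1}$, which do match. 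You should normalize your constants accordingly before relying on the Cauchy term as stated.
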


\begin{proof}
We use the fundamental theorem of Calculus
\begin{align}
    \mathcal{A}^{(P)} = \mathcal{A}^{(P)}(t=1) = \mathcal{A}^{(P)}(t=0)+\int_0^1 \partial_s \mathcal{A}^{(P)}(s) \vert_{s=t} dt.
\end{align}
The last step we need is the computation of one-body term. We omit it since it is similar to RS case.  
\end{proof}

\begin{theorem}
In the thermodynamic limit, under 1RSB assumption, the expression of quenched statistical pressure is

\begin{equation}
    \begin{array}{lll}
          \mathcal{A}^{(P)}=& \log 2 -\dfrac{\b J_0}{2}\m^P(P-1) + \dfrac{\beta'\,^2 J^2}{4} \left[1-P\q_2^{P-1}+(P-1)\q_2^P \right]
          \\\\
    &-\dfrac{\beta'\,^2 J^2 }{4}(P-1)\theta(\q_2^P-\q_1^P)+ \dfrac{1}{\theta} \mathbb{E}_1 \log \mathbb{E}_2 \cosh^\theta g(\bm z, \bar{m}) \,,
    \label{A_1RSB_finalissima}
    \end{array}
\end{equation}

where 
\begin{align}
\small
    g(\bm z, \bar{m})&= \frac{\b J_0}{2} \m^{P-1} + z^{(1)} \b J \sqrt{ \frac{P}{2} \q_1^{P-1}}+ z^{(2)} \b J \sqrt{ \frac{P}{2} (\q_2^{P-1}-\q_1^{P-1}})
    \label{def_g}
\end{align}

where the order parameters are ruled by
\begin{align}
&\bar{m}= \mathbb{E}_1 \left[ \frac{\mathbb{E}_2 \cosh^\theta g(\bm z,\bar{m})\tanh g(\bm z,\bar{m})}{\mathbb{E}_2 \cosh^\theta g(\bm z,\bar{m})}\right] \\
&\bar{q}_1 = \mathbb{E}_1 \left[ \frac{\mathbb{E}_2 \cosh^\theta g(\bm z,\bar{m})\tanh g(\bm z,\bar{m})}{\mathbb{E}_2 \cosh^\theta g(\bm z,\bar{m})}\right]^2 \\
&\bar{q}_2 = \mathbb{E}_1 \left[ \frac{\mathbb{E}_2 \cosh^\theta g(\bm z,\bar{m})\tanh^2 g(\bm z,\bar{m})}{\mathbb{E}_2 \cosh^\theta g(\bm z,\bar{m})}\right]
\end{align}
where $g(\bm z, \m)$ is the same defined in \eqref{def_g}.
\end{theorem}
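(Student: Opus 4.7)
The strategy has two parts. First, I take $N\to\infty$ in the finite-$N$ expression \eqref{A_1RSB_finite} to arrive at \eqref{A_1RSB_finalissima}. Second, I extremize the resulting functional with respect to $\bar m, \bar q_1, \bar q_2$ to obtain the three self-consistency equations.

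For the thermodynamic limit, I invoke Definition \ref{def:HM_RSB}: under 1RSB the magnetization self-averages at $\bar m$ and the two-replica overlap concentrates on the two values $\bar q_1, \bar q_2$ with weights $\theta$ and $1-\theta$. Entirely parallel to the RS situation of Definition \ref{defn: RSassumption}, a direct consequence is that for every $k\geq 2$
\[
\langle (\Delta m)^k\rangle \xrightarrow[N\to\infty]{} 0,\qquad \langle (\Delta q_a)^k\rangle_a \xrightarrow[N\to\infty]{} 0,\quad a=1,2.
\]
Plugging these limits into \eqref{A_1RSB_finite} annihilates the three binomial sums appearing in its last two lines, and what remains is precisely \eqref{A_1RSB_finalissima}.

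For the self-consistency equations I extremize \eqref{A_1RSB_finalissima} with respect to each of $\bar m, \bar q_1, \bar q_2$ in turn. Only the last term
\[
F(\bar m,\bar q_1,\bar q_2) \;=\; \frac{1}{\theta}\,\mathbb{E}_1\log\mathbb{E}_2\cosh^\theta g(\bm z,\bar m)
\]
requires actual work; the polynomial pieces are differentiated directly. Applying the chain rule to $F$ produces factors $\partial_{\bar m}g$, $\partial_{\bar q_1}g$, $\partial_{\bar q_2}g$. The magnetization derivative is deterministic and immediately brings $\tanh g$ down after differentiating $\log\mathbb{E}_2\cosh^\theta g$; matching against the polynomial term $\tfrac{\b J_0}{2}(1-P)\bar m^P$ yields the equation for $\bar m$. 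The overlap derivatives carry the Gaussian fields $z^{(1)}, z^{(2)}$, and the resulting averages of the form $\mathbb{E}_a\bigl[z^{(a)}(\,\cdots)\bigr]$ are converted, through Stein's lemma \eqref{eqn:gaussianrelation2}, into averages of $\tanh g$ and $\tanh^2 g$ weighted by the 1RSB tilted measure $\cosh^\theta g/\mathbb{E}_2\cosh^\theta g$. Matching these against the polynomial terms in $\bar q_1^{P-1}$ and $\bar q_2^{P-1}$ in \eqref{A_1RSB_finalissima} produces the two stated equations for $\bar q_1$ and $\bar q_2$.

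The main obstacle is the bookkeeping of the Gaussian integrations by parts: each $\partial_{\bar q_a}g$ couples to \emph{both} fields $z^{(1)}$ and $z^{(2)}$ through square roots of the type $\sqrt{\q_1^{P-1}}$ and $\sqrt{\q_2^{P-1}-\q_1^{P-1}}$, and the inner expectation $\mathbb{E}_2$ is taken with respect to the tilted weight $\cosh^\theta g$. One has to track carefully how the factor $\theta$ produced by differentiating $\cosh^\theta$ combines with the prefactor $1/\theta$ of $F$, and to check that the polynomial contributions in $\bar m^{P-1}, \bar q_1^{P-1}, \bar q_2^{P-1}$ generated by Stein's lemma cancel exactly against the explicit polynomial pieces of \eqref{A_1RSB_finalissima}, so that what survives is the clean fixed-point form asserted in the theorem.
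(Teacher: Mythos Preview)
Your proposal is correct and follows essentially the same route as the paper: the thermodynamic limit is obtained by invoking the 1RSB concentration hypotheses to kill the three residual binomial sums in \eqref{A_1RSB_finite}, and the self-consistency equations are then obtained by extremizing \eqref{A_1RSB_finalissima} with respect to $\bar m,\bar q_1,\bar q_2$. The paper's own proof is in fact more telegraphic than yours --- it simply states that the last three terms vanish and that extremization yields the equations --- so your explicit description of the Stein-lemma bookkeeping for the $\bar q_a$ derivatives is additional detail rather than a different method.
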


\begin{proof}

If we use definition \ref{defn: RSassumption}, the last three terms of expression \eqref{A_1RSB_finite} will vanish in the thermodynamic limit, so we recover \eqref{A_1RSB_finalissima}.

Extremizing the statistical pressure in \eqref{A_1RSB_finalissima} w.r.t. the order parameters we find the self-consistency equations.
\end{proof}

\begin{remark}
The expression \eqref{A_1RSB_finalissima} with $J_0=0$ is the same computed by \cite{gardner, GrossMezard} through replica trick and by \cite{Adriano_HJ} via Hamilton-Jacobi technique in 1RSB assumption. 
\end{remark}

\subsection{Resolution via transport equation}
\label{sec:transport1}
Now the aim is to show that, using a different strategy, such as the transport equation technique, already introduced in \cite{lindaRSB} for both assumption in classic SK model with ferromagnetic contribution, we are able to retrieve the same expression of quenched statistical pressure and self-consistency equations found via Guerra's interpolating technique. 

\subsubsection{RS solution} \label{ssec:HRS}
In this subsection we find the RS solution of P-spin glass model via the transport equation technique. The strategy is to introduce an interpolating pressure $\mathcal A_N$ living in a fictitious space-time framework and recovering the intensive quenched pressure $\mathcal A_N$ of the original model in a specific point of this space, and to show that it fulfills a transport equation in such a way that the solution of the statistical mechanical problem is recast in the solution of a partial differential equation.

The definition of RS assumption for the order parameters is the same as \eqref{defn: RSassumption}.
\begin{definition} 
Given the interpolating parameter $t, x, w$, and $z_i \sim \mathcal{N}(0,1)$  standard Gaussian variables iid, the partition function is given as 
\begin{equation}
\footnotesize
\begin{array}{lll}
     \mathcal{Z}^{(P)}_N(t, \bm r) &\coloneqq& \sommaSigma{\boldsymbol \sigma} \exp{}\Bigg[t\dfrac{\b J_0 N}{2} m^P(\boldsymbol \sigma)+w N\,m(\boldsymbol \sigma)+
     \\\\
     & &+\sqrt{t}\b J\sqrt{\dfrac{1 }{2N^{P-1}}}\SOMMA{i_1,\cdots,i_{_{P}}=1}{N,\cdots,N}z_{i_1\cdots,i_{_{P}}}\sigma_{i_1}\cdots\sigma_{_{P}}+
     \\\\
     & &+\sqrt{x}\,\SOMMA{i=1}{N} z_i\sigma_i\,,
     \label{def:partfunct_transpRS}
\end{array}
\end{equation}
where we set $\beta'=2 \beta/P!$.
\end{definition}
Similar to RS Guerra's interpolation, we can define the interpolating pressure, the Boltzmann factor and the generalized measure. 

\begin{lemma} 
The partial derivatives of the interpolating pressure (\ref{hop_GuerraAction}) w.r.t. $t,x,w$ give the following expectation values:
	\bea
	\label{hop_expvalsa}
	\frac{\partial \mathcal{A}^{(P)}_N}{\partial t} &=& \dfrac{\b J_0}{2}\l m^P \r +\dfrac{\beta'\,^2J^2}{4}\Big(1-\l\qq^{P}\r\Big),
	\\
	\label{hop_expvalsmiddle}
	\frac{\partial \mathcal{A}^{(P)}_N}{\partial x}  &=& \dfrac{1}{2}\Big(1-\l\qq\r\Big),\\
	\frac{\partial \mathcal{A}^{(P)}_N}{\partial w}  &=&  \l m\r.
	\label{hop_expvalsb}
	\eea
\end{lemma}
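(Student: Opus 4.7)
The plan is to differentiate $\mathcal{A}^{(P)}_N(t,\bm r) = (1/N)\,\mathbb{E}\ln \mathcal{Z}^{(P)}_N(t,\bm r)$ directly in each of the three variables, moving the derivative inside the $\mathbb{E}$ and inside $\mathcal{Z}^{(P)}_N$, and then collapsing the resulting combinations of Boltzmann weights into $\omega(\cdot)$ averages. The $w$-derivative is immediate because the exponent depends linearly on $w$ through $N m(\boldsymbol\sigma)$: the $N$ cancels the $1/N$ in front of $\mathbb{E}\ln\mathcal{Z}$, leaving $\mathbb{E}[\omega(m)]=\langle m\rangle$.

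For the $t$-derivative I split the exponent into its two $t$-dependent pieces. The ferromagnetic piece $t\,(\b J_0 N/2)m^P(\boldsymbol\sigma)$ contributes $(\b J_0/2)\langle m^P\rangle$ with no further work. The disordered piece $\sqrt{t}\,\b J\sqrt{1/(2N^{P-1})}\sum z_{\boldsymbol i}\,\sigma_{i_1}\cdots\sigma_{i_P}$ contributes a prefactor $1/(2\sqrt{t})$ times $\mathbb{E}[\sum_{\boldsymbol i}z_{\boldsymbol i}\,\omega(\sigma_{i_1}\cdots\sigma_{i_P})]$. This is exactly the term $D_1$ already treated in the Guerra RS lemma: applying Stein's lemma \eqref{eqn:gaussianrelation2} to each $z_{\boldsymbol i}$ produces $\partial_{z_{\boldsymbol i}}\omega(\sigma_{i_1}\cdots\sigma_{i_P})$, which via the usual identity for derivatives of Boltzmann averages equals $\b J\sqrt{t/(2N^{P-1})}\,[\omega((\sigma_{i_1}\cdots\sigma_{i_P})^2)-\omega(\sigma_{i_1}\cdots\sigma_{i_P})^2]$. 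The first piece sums to $N^P$ (since $\sigma^2=1$), the second to $N^P\,\omega(q_{12}^P)$ after reindexing with a replicated configuration, and the $\sqrt{t}$ factors cancel, delivering $(\beta'^{\,2}J^2/4)(1-\langle q_{12}^P\rangle)$.

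The $x$-derivative is analogous to $D_2$ from the Guerra RS lemma. Differentiating the term $\sqrt{x}\sum_i z_i\sigma_i$ brings out $(1/(2\sqrt{x}))\mathbb{E}[\sum_i z_i\,\omega(\sigma_i)]$; Stein's lemma turns this into $(1/2)\,\mathbb{E}[\omega(\sigma_i^2)-\omega(\sigma_i)^2]$ summed over $i$ and divided by $N$, i.e.\ $(1/2)(1-\langle q_{12}\rangle)$, again via the two-replica trick.

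The main obstacle, if any, is the bookkeeping in the Stein's lemma step for the $P$-body interaction—one must expand $\partial_{z_{\boldsymbol i}}\omega(\sigma_{i_1}\cdots\sigma_{i_P})$ correctly and recognise that the double sum over a replicated configuration produces exactly $N^P\,\omega(q_{12}^P)$. Since this is precisely the calculation carried out in equations \eqref{eq:D1_GuerraRS}--\eqref{eq:D2_GuerraRS}, I would simply indicate that the argument is identical (with the independent parameters $x$ and $w$ now playing the role previously played by the coupled $A^2(1-t)$ and $\psi(1-t)$) and omit the repeated manipulations.
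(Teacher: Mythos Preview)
Your proposal is correct and follows essentially the same approach as the paper: the paper proves only the $t$-derivative explicitly (splitting into the ferromagnetic term and the Gaussian term handled via Stein's lemma, exactly as in $D_1$), and then states that the $x$- and $w$-derivatives are analogous. Your write-up is in fact slightly more explicit than the paper's, since you spell out the $x$- and $w$-cases rather than merely asserting analogy.
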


\begin{proof}
Since the procedures for the derivatives w.r.t. each parameter are analogous, we prove only the derivatives w.r.t. $t$. The  partial  derivative  of  the  interpolating  quenched  pressure  with respect to $t$ reads as 
\begin{equation}
    \footnotesize
    \begin{array}{lll}
           \dfrac{\partial \mathcal{A}^{(P)}_N}{\partial t}&=&\dfrac{1}{N}\mathbb{E}\left[\dfrac{\beta'N}{2}\omega(m^P)\right]+
           \\\\
           & & + \dfrac{\b J}{2N\sqrt{t}}\sqrt{\dfrac{1}{2N^{P-1}}}\sum\limits_{\boldsymbol{i}}\mathbb{E}\left[z_{\boldsymbol{i}}\omega(\sigma_{i_1}\cdots\sigma_{i_{P}})\right]\,.
    \end{array}
\end{equation}
Now, using the Stein's lemma \eqref{eqn:gaussianrelation2} on standard Gaussian variable $z_{\boldsymbol i}$
\begin{equation}
    \footnotesize
    \begin{array}{lll}
           \dfrac{\partial \mathcal{A}^{(P)}_N}{\partial t}&=&\dfrac{\b J_0}{2}\l m^P\r+ \dfrac{\beta'\,^2J^2}{4N^{^{P}}}\sum\limits_{\boldsymbol i}\left(\mathbb{E}\left[\omega((\sigma_{i_1}\cdots\sigma_{i_{P}})^2)\right]-\mathbb{E}\left[\omega(\sigma_{i_1}\cdots\sigma_{i_{P}})^2\right]\right)
           \\\\
           &=&\dfrac{\b J_0}{2}\l m^P\r+ \dfrac{\beta'\,^2 J^2}{4}\left(1-\l\qq^{P}\r\right)\,.
    \end{array}
\end{equation}
\end{proof}
%
%
\begin{proposition} \label{prop:interp_transp_RS}
The interpolating pressure \eqref{def:partfunct_transpRS} at finite size obeys the following differential equation:
	\begin{equation}
	\frac{d \mathcal{A}^{(P)}_N}{dt} =  \pder{\mathcal{A}_N^{(P)}}{t} + \dot{x} \pder{\mathcal{A}^{(P)}_N}{x} +\dot w \pder{\mathcal{A}^{(P)}_N}{w}= S(t, \bm r)+V_N(t, \bm r),
	\label{hop_GuerraAction_DE}
	\end{equation}
	where we set 
	\begin{equation}
	    \begin{array}{lll}
	         \dot x = -\dfrac{P}{2}\beta'\,^2J^2\q^{P-1}\,, && \dot w = -\dfrac{P}{2}\b J_0\m^{P-1}
	    \end{array}
	    \label{eq:deriv_r_trasp_RS}
	\end{equation}
	and
	\begin{equation}
	\begin{array}{lll}
	     S(t, \bm r) &\coloneqq& -\dfrac{P-1}{2}\b J_0 \m^P+\dfrac{\beta'\,^2J^2}{4}\left(1-P\q^{P-1}+(P-1)\q^P\right),
	\end{array}
	\end{equation}
	\begin{equation}
	\begin{array}{lll}
	     V_N(t, \bm r) &\coloneqq& \dfrac{\b J_0}{2}\SOMMA{k=2}{P}\begin{pmatrix}
	    P\\k
	\end{pmatrix} \m^{P-k}\l(\Delta m)^k\r +\dfrac{\beta'\,^2 J^2}{4}\SOMMA{k=2}{P}\begin{pmatrix}
	    P\\k
	\end{pmatrix} \q^{P-k}\l(\Delta q_{12})^k\r.
	\end{array}
	\label{potenziale-RS-Hopfield}
	\end{equation}
	\end{proposition}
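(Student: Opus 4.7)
The plan is to differentiate $\mathcal{A}_N^{(P)}$ along the curve $(t,x(t),w(t))$ via the chain rule, substitute the three partial derivatives from the preceding lemma, and then reorganise the result by means of the centered-moment identities \eqref{potential_m1}--\eqref{potential_pq}. The characteristic velocities $\dot x$ and $\dot w$ are then forced by the requirement that the linear-in-$\l m\r$ and linear-in-$\l\qq\r$ terms disappear, which is exactly what converts the ODE into transport form with source $S$ and potential $V_N$.

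First I would write
\begin{equation*}
\frac{d\mathcal{A}_N^{(P)}}{dt}=\pder{\mathcal{A}_N^{(P)}}{t}+\dot x\,\pder{\mathcal{A}_N^{(P)}}{x}+\dot w\,\pder{\mathcal{A}_N^{(P)}}{w},
\end{equation*}
and insert \eqref{hop_expvalsa}--\eqref{hop_expvalsb}, producing an expression built from $\l m^P\r$, $\l\qq^P\r$, $\l m\r$, $\l\qq\r$ and constants. I would then apply \eqref{potential_m1}--\eqref{potential_pq} to expand each of $\l m^P\r$ and $\l\qq^P\r$ as $P\bar X^{P-1}\l X\r+(1-P)\bar X^P+\sum_{k=2}^P\binom{P}{k}\bar X^{P-k}\l(\Delta X)^k\r$, with $X\in\{m,\qq\}$ and $\bar X\in\{\m,\q\}$. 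After this substitution, the terms proportional to $\l m\r$ and $\l\qq\r$ come with coefficients $\tfrac{P}{2}\b J_0\,\m^{P-1}+\dot w$ and $\tfrac12\bigl(\tfrac{P}{2}\beta'\,^2 J^2\,\q^{P-1}+\dot x\bigr)$ respectively; these vanish precisely when $\dot w$ and $\dot x$ are chosen as in \eqref{eq:deriv_r_trasp_RS}.

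With the characteristics fixed, the remaining contributions split cleanly into two groups. The $\m,\q$-dependent constants coming from the $(1-P)\bar X^P$ summands, together with the Gaussian constant $\beta'\,^2J^2/4$ produced by Stein's lemma in \eqref{hop_expvalsa} and the constant piece $-\tfrac{P}{4}\beta'\,^2 J^2\q^{P-1}$ carried in by $\dot x\,\partial_x\mathcal{A}_N^{(P)}$, combine into $S(t,\bm r)$ as displayed, while the centered-moment sums of order $k\geq 2$ assemble into $V_N(t,\bm r)$. The only non-trivial step is the algebraic bookkeeping of the $(P-1)$ coefficients and the $\tfrac12$ factors coming from $\partial_x\mathcal{A}_N^{(P)}$, so that the constants line up with the stated $S$; there is no analytical subtlety beyond the Stein's-lemma input already exploited in the RS Guerra streaming computation, and this bookkeeping is where I expect the main (though purely computational) obstacle to lie.
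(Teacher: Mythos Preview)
Your proposal is correct and follows essentially the same route as the paper: substitute the three partial derivatives \eqref{hop_expvalsa}--\eqref{hop_expvalsb} into the chain rule, expand $\l m^P\r$ and $\l\qq^P\r$ via the centered-moment identities \eqref{potential_m1}--\eqref{potential_pq}, and then fix $\dot x,\dot w$ so that the contributions linear in $\l m\r$ and $\l\qq\r$ are absorbed as $-\dot w\,\partial_w\mathcal A^{(P)}_N$ and $-\dot x\,\partial_x\mathcal A^{(P)}_N$. The only difference is cosmetic---the paper starts from $\partial_t\mathcal A^{(P)}_N$ alone and recognises the $\partial_x,\partial_w$ pieces a posteriori, whereas you begin with the full chain rule and impose cancellation; the algebra and the resulting $S$ and $V_N$ are the same.
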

	
	\begin{proof}
	Starting to evaluate explicitly $\dt \mathcal{A}_N$ by using (  \eqref{hop_expvalsmiddle} - \eqref{hop_expvalsb} and Definition \eqref{potential_m1}-\eqref{potential_pq}) we write
\begin{equation}
\footnotesize
    \begin{array}{lll}
         \dfrac{\partial}{\partial t}\mathcal{A}^{(P)}_N=\dfrac{\b J_0}{2}\left(\SOMMA{k=2}{P} \begin{pmatrix}P\\k\end{pmatrix} \langle (m-\bar{m})^k \rangle \bar{m}^{P-k} + \bar{m}^P (1-P) + P\bar{m}^{P-1}\langle m\rangle\right) 
         \\\\
         +\dfrac{\beta'\,^2 J^2}{4}\left(1-\SOMMA{k=2}{P} \begin{pmatrix}P\\k\end{pmatrix} \langle (q_{12}-\q)^k \rangle \q^{P-k} + \q^P (1-P) + P\q^{P-1}\langle q_{12} \rangle\right) 
         \\\\
         =V_N(t,\boldsymbol{r})+S(t,\boldsymbol{r})+\dfrac{P}{2}\b J_0\bar{m}^{P-1}\l m \r +\dfrac{\beta'\,^2 J^2}{2}P\q^{P-1}\dfrac{1}{2}\left(1 - \langle q_{12} \rangle\right) 
         \\\\
         =V_N(t,\boldsymbol{r})+S(t,\boldsymbol{r}) +\dfrac{\b J_0}{2}P\bar{m}^{P-1}\Big(\frac{\partial \mathcal{A}^{(P)}_N}{\partial w} \Big)+\dfrac{\beta'\,^2 J^2}{2}P\q^{P-1}\Big(\frac{\partial \mathcal{A}^{(P)}_N}{\partial x} \Big)  
    \end{array}
\end{equation}
Thus, by placing $\dot{\boldsymbol{r}}=(\dot{x},\dot{w})$ as in \eqref{eq:deriv_r_trasp_RS} we reach the thesis.

	\end{proof}
	
	\begin{remark}
	\label{remark:Vthermolim}
	Since the RS assumption, which is linked to the fact that in the thermodynamic limit
	\begin{align*}
	    \langle (m-\bar{m})^k \rangle &= 0 \: \ \: k \geq 2  \\
	    \langle (q_{12} - \bar{q})^k \rangle &= 0 \: \ \: k \geq 2
	\end{align*}
	we can say that $V_N$ defined in \eqref{potenziale-RS-Hopfield} vanishes. 
	\end{remark}
	
	\begin{proposition}
The transport equation associated to the interpolating pressure $\mathcal{A}_N(t, \bm r)$ in the thermodynamic limit and under the RS assumption is
	\begin{equation}
	\footnotesize
	\begin{array}{lll}
	     \pder{\mathcal{A}^{(P)}_{\textrm{RS}}}{t}-\dfrac{\beta'\,^2 J^2}{2}P\q^{P-1} \left(\pder{\mathcal{A}^{(P)}_{\textrm{RS}}}{x}\right)-\dfrac{P}{2}\b J_0\m^{P-1} \left(\pder{\mathcal{A}^{(P)}_{\textrm{RS}}}{w}\right)= \\\ -\dfrac{P-1}{2}\b J_0 \m^P+\dfrac{\beta'\,^2J^2}{4}\left(1-P\q^{P-1}+(P-1)\q^P\right),
	\end{array}
	\label{hop_GuerraAction_RSDE}
	\end{equation}
	whose solution is given by
\begin{equation}
\small
\begin{array}{lll}
     \mathcal A^{(P)}_{\rm RS}(t,\boldsymbol{r})=&\ln{2}+\left\l\ln{\cosh{\left[w+\dfrac{P}{2}\b J_0\m^{P-1}t+z\sqrt{x+\dfrac{P}{2}\beta'\,^2J^2\q^{P-1}t}\right]}}\right\r_{z}
     \\\\
     \textcolor{white}{A^{(P)}_{\rm RS}(t,\boldsymbol{r})=}&-\dfrac{P-1}{2}\b J_0 \m^Pt+\dfrac{\beta'\,^2J^2}{4}t\left(1-P\q^{P-1}+(P-1)\q^P\right)\,.
\end{array}
\label{hop_mechanicalsolution}
\end{equation}

\end{proposition}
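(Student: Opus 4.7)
The plan is to derive the transport equation \eqref{hop_GuerraAction_RSDE} as an immediate consequence of Proposition \ref{prop:interp_transp_RS} combined with Remark \ref{remark:Vthermolim}, and then to integrate the resulting first-order linear PDE by the method of characteristics, using the $t=0$ partition function as Cauchy datum.

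Concretely, I would first observe that in the thermodynamic limit RS self-averaging makes the potential $V_N(t,\bm r)$ in \eqref{potenziale-RS-Hopfield} vanish term by term, since every central moment $\langle (\Delta m)^k \rangle$ and $\langle (\Delta q_{12})^k \rangle$ with $k \geq 2$ tends to $0$ by Remark \ref{remark:Vthermolim}. Moving the $\dot x\, \partial_x \mathcal A$ and $\dot w\, \partial_w \mathcal A$ contributions to the left-hand side of \eqref{hop_GuerraAction_DE} and inserting the explicit values \eqref{eq:deriv_r_trasp_RS} yields exactly \eqref{hop_GuerraAction_RSDE}. Because $\bar m$ and $\bar q$ are constants fixed by the RS extremization and not functions of $(t,\bm r)$, the coefficients $\dot x, \dot w$ and the source $S$ are constant over the whole $(t,\bm r)$-domain.

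The characteristics are then trivial straight lines: through $(t,x,w)$ they are parametrised by $s \in [0,t]$ as $x(s) = x - \dot x\,(t-s)$, $w(s) = w - \dot w\,(t-s)$, along which $\mathcal A$ grows linearly at rate $S$. Pulling back to $s=0$ gives
\begin{equation*}
\mathcal{A}^{(P)}_{\rm RS}(t,x,w) = \mathcal{A}^{(P)}_{\rm RS}\bigl(0,\, x - \dot x\, t,\, w - \dot w\, t\bigr) + S\, t.
\end{equation*}
For the Cauchy datum, setting $t=0$ in \eqref{def:partfunct_transpRS} kills the $P$-spin quenched coupling and leaves a factorized one-body Hamiltonian in an effective external field, whence a direct spin-sum yields
\begin{equation*}
\mathcal{A}^{(P)}_{\rm RS}(0,x,w) = \ln 2 + \langle \ln\cosh(w + z\sqrt{x})\rangle_z.
\end{equation*}
Substituting this into the characteristic formula, with $\dot x = -\tfrac{P}{2}\beta'\,^2 J^2\, \q^{\,P-1}$ and $\dot w = -\tfrac{P}{2}\b J_0\, \m^{\,P-1}$, reproduces \eqref{hop_mechanicalsolution}.

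The only step requiring care is the treatment of $\bar m, \bar q$ as external parameters rather than dynamical variables inside the PDE: this is legitimate precisely because the self-consistency equations are a separate extremization condition, so $\bar m, \bar q$ are independent of the interpolation coordinates $(t,\bm r)$. Without this observation the characteristic curves would not be straight lines and the source $S$ would acquire an $s$-dependence. As an \emph{a posteriori} check one may plug \eqref{hop_mechanicalsolution} back into \eqref{hop_GuerraAction_RSDE}: the $\partial_t$ derivative of the $\ln\cosh$ term cancels exactly against $\dot x\, \partial_x$ and $\dot w\, \partial_w$ acting on the same term, leaving precisely $S(t,\bm r)$ on the right-hand side.
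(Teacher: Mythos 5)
Your proof is correct and follows essentially the same route as the paper: invoke Remark \ref{remark:Vthermolim} to drop $V_N$ in the thermodynamic limit, integrate the resulting constant-coefficient transport equation by characteristics, and supply the Cauchy datum from the factorized $t=0$ one-body computation. Your added remark that $\bar m,\bar q$ must be regarded as external constants (so that the characteristics are straight and $S$ is $t$-independent) is a useful clarification of a point the paper leaves implicit, but it does not constitute a different approach.
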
	

\begin{proof}
We can find the transport equation applying Remark \eqref{remark:Vthermolim} in Prop. \eqref{prop:interp_transp_RS}. 

We compute the solution using the characteristic method on the transport equation: 
\begin{align}
    \mathcal{A}^{(P)}_{RS}(t, \bm r) = \mathcal{A}^{(P)}_{RS}(0, \bm r-\dot{\boldsymbol{r}}t) + S(t, \bm r)t.
\end{align}
where $\dot{\boldsymbol{r}}=(\dot{x},\dot{w})$. Along the characteristics, the fictitious motion in the $(t,\boldsymbol{r})$ time-space is linear and returns
\begin{equation}
    \begin{array}{lll}
         x=x_0-\dfrac{P}{2}\beta'\,^2J^2\q^{P-1}t &&  w=w_0-\dfrac{P}{2}\b J_0\m^{P-1}t
    \end{array}
\end{equation}
where $\boldsymbol{r}_0=(x_0,w_0)=(x(t=0),w(t=0))$. The Cauchy condition at $t=0$ is given by a direct computation at finite $N$ as
\begin{equation}
\begin{array}{lll}
     \mathcal A^{(P)}(0,\boldsymbol{r}-\dot{\boldsymbol{r}}t)&=&\mathcal A^{(P)}(0,\boldsymbol{r}_0)
     \\\\
     & =&\dfrac{1}{N}\mathbb{E}\Bigg\lbrace\sommaSigma{\boldsymbol \sigma} \exp{}\Bigg[w_0 N\psi\,m(\boldsymbol \sigma)+\sqrt{x_0}\SOMMA{i=1}{N} z_i\sigma_i\Bigg]\Bigg\rbrace
     \\\\
    & =&\ln{2}+\left\l\ln{\cosh{\left[\omega_0+z\sqrt{y_0}\right]}}\right\r_{z}\,.
\end{array}
\end{equation}
\end{proof}

Giving the suitable values of parameters we have the following
 \begin{corollary}
 \label{cor:A_RS_trans}
 The RS approximation of the quenched pressure for the P-spin glass model is obtained by posing $t=1$ and $\bm r = \bm 0$ in (\ref{hop_mechanicalsolution}), which returns
 \begin{equation}
\begin{array}{lll}
     \mathcal A^{(P)}_{\rm RS}(t,\boldsymbol{r})=&\ln{2}+\left\l\ln{\cosh{\left[\dfrac{P}{2}\b J_0\m^{P-1}+z\sqrt{\dfrac{P}{2}\beta'\,^2J^2\q^{P-1}}\right]}}\right\r_{z}
     \\\\
     \textcolor{white}{A^{(P)}_{\rm RS}(t,\boldsymbol{r})=}&-\dfrac{P-1}{2}\b J_0 \m^P+\dfrac{\beta'\,^2J^2}{4}\left(1-P\q^{P-1}+(P-1)\q^P\right)\,.
\end{array}
\label{hop_agssolution}
\end{equation}
where the order parameters are ruled by the following self-consistency equations
\begin{equation}
    \begin{array}{lll}
         \m=\left\langle\tanh{\left[\dfrac{P}{2}\beta J_0 \m^{^{P-1}} +z \beta J\sqrt{\dfrac{P }{2}\q^{^{P-1}}}\right]}\right\rangle_z \,,
         \\\\
         \q=\left\langle\tanh{}^2{\left[\dfrac{P}{2}\beta J_0 \m^{^{P-1}} +z \beta J\sqrt{\dfrac{P }{2}\q^{^{P-1}}}\right]}\right\rangle_z \,.
    \end{array}
    \label{eq:self_transRS}
\end{equation}
 \end{corollary}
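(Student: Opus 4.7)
The plan is to obtain the corollary by direct specialization of the general transport-equation solution \eqref{hop_mechanicalsolution} established in the previous proposition, and then to derive the self-consistency equations by extremizing the resulting expression over the order parameters $\bar m$ and $\bar q$. Since all the heavy lifting (the streaming lemma, the vanishing of $V_N$ under RS, the characteristic-method integration and the Cauchy datum at $t=0$) has already been done, the corollary is essentially a substitution together with a variational principle.

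First, I would recall that \eqref{hop_mechanicalsolution} gives the RS interpolating pressure as a function of $(t,\bm r) = (t,x,w)$, with the one-body Cauchy contribution evaluated along the characteristics $x_0 = x + \tfrac{P}{2}\beta'^2 J^2 \bar q^{P-1} t$, $w_0 = w + \tfrac{P}{2}\beta' J_0 \bar m^{P-1} t$. The quenched pressure of the original model is recovered, by construction, at the boundary values $t=1$ and $\bm r = \bm 0$, since at these values the interpolating partition function \eqref{def:partfunct_transpRS} collapses onto \eqref{eq:SK_BareZ}. Plugging these values directly into \eqref{hop_mechanicalsolution} yields the displayed expression \eqref{hop_agssolution}: the argument of the $\cosh$ becomes $\tfrac{P}{2}\beta' J_0 \bar m^{P-1} + z\sqrt{\tfrac{P}{2}\beta'^2 J^2 \bar q^{P-1}}$, and the source term $S(t,\bm r)$ contributes the polynomial tail in $\bar m$ and $\bar q$.

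For the self-consistency equations, I would impose $\partial_{\bar m}\mathcal{A}^{(P)}_{\rm RS} = 0$ and $\partial_{\bar q}\mathcal{A}^{(P)}_{\rm RS} = 0$. Differentiating the one-body $\langle\ln\cosh[\,\cdot\,]\rangle_z$ term produces a $\tanh$ inside the Gaussian average, multiplied by the derivative of the argument with respect to $\bar m$ or $\bar q$ respectively; for the $\bar q$ variation, a Gaussian integration by parts (Stein's lemma) turns the factor $z$ into a second $\tanh$, giving a $\tanh^2$ expectation. The polynomial tail $-\tfrac{P-1}{2}\beta' J_0 \bar m^P + \tfrac{\beta'^2 J^2}{4}\bigl(1 - P\bar q^{P-1} + (P-1)\bar q^P\bigr)$ produces terms proportional to $\bar m^{P-1}$ and $\bar q^{P-1}$ (and $\bar q^P$) with the correct prefactors to cancel the prefactors of the $\tanh$/$\tanh^2$ averages, so that after dividing through by a nonzero prefactor one reads off precisely \eqref{eq:self_transRS}.

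The only step that requires a little care, rather than pure substitution, is this extremization: one must check that the algebraic coefficients from the polynomial source $S(t,\bm r)$ combine with those coming from differentiating the $\cosh$ argument so that the self-consistency conditions reduce to the clean $\bar m = \langle\tanh[\cdot]\rangle_z$, $\bar q = \langle\tanh^2[\cdot]\rangle_z$ form. This is a routine bookkeeping exercise and coincides with the extremization already carried out in the Guerra-interpolation proof of Proposition \ref{SK_P_quenched}, confirming that the transport-equation route recovers exactly the same self-consistency equations \eqref{eq:self_GuerraRS}.
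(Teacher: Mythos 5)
Your proposal is correct and follows the same route the paper takes: substitute $t=1$, $\bm r=\bm 0$ into the transport solution \eqref{hop_mechanicalsolution} (whose Cauchy datum and source term were already established), and then obtain \eqref{eq:self_transRS} by extremizing in $\bar m,\bar q$, with Stein's lemma converting the $z\tanh$ factor in $\partial_{\bar q}$ into the $\tanh^2$ expectation. This matches the paper's (mostly implicit) treatment, which simply states the substitution and refers the extremization back to the Guerra-route Proposition~\ref{SK_P_quenched}.
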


We stress that the expression found with transport equation method in Corollary (\ref{cor:A_RS_trans}) is the same found with Guerra's interpolation in Proposition (\ref{SK_P_quenched}).

\subsubsection{1-RSB solution}\label{ssec:HRSB1}

In this subsection we turn to the solution of the P-spin glass model via the generalized  broken-replica interpolating technique, restricting the description at the first step of RSB.
\newline

The definition of two-replica overlap $q$ distribution is the same as Guerra's interpolating technique (See eq. \eqref{def:HM_RSB}).

Following the same route pursued in the previous sections, we need an interpolating partition function $\mathcal Z$ and an interpolating quenched pressure $\mathcal A$,  that are defined hereafter.
\begin{definition}
Given the interpolating parameters $\bm r = (x^{(1)}, x^{(2)},w), t$ and the i.i.d. auxiliary fields $\{z_i^{(1)}, z_i^{(2)}\}_{i=1,...,N}$, with $z_i^{(1,2)} \sim \mathcal N(0,1)$ for $i=1, ..., N$, we can write the 1-RSB interpolating partition function $\mathcal Z_N(t, \boldsymbol r)$ for the P-spin glass model (\ref{eq:SK_hbare}) recursively, starting by

\begin{equation}
\label{eqn:Z2_TrasP}
\footnotesize
\begin{array}{lll}
     \mathcal{Z}^{(P)}_2(t,\boldsymbol{r}) &:=& \sommaSigma{\boldsymbol \sigma} \exp{}\Bigg[t\dfrac{\b J_0 N}{2} m^P(\boldsymbol \sigma)+w N\,m(\boldsymbol \sigma)+
     \\\\
     & &+\sqrt{t}\b J\sqrt{\dfrac{1 }{2N^{P-1}}}\SOMMA{i_1,\cdots,i_{_{P}}=1}{N,\cdots,N}z_{i_1\cdots,i_{_{P}}}\sigma_{i_1}\cdots\sigma_{i_{P}}+
     \\\\
     & &+\SOMMA{a=1}{2}\left(\sqrt{x^{(a)}}\SOMMA{i=1}{N} z_i^{(a)}\sigma_i\right)\Bigg].
\end{array}
\end{equation}

Averaging out the fields recursively, we define
\begin{align}
\label{eqn:Z1_1RSB}
\mathcal Z_1(t, \bm r) \coloneqq& \mathbb E_2 \left [ \mathcal Z_2(t, \bm r)^\theta \right ]^{1/\theta} \\
\label{eqn:Z0_1RSB_trans}
\mathcal Z_0(t, \bm r) \coloneqq&  \exp \mathbb E_1 \left[ \ln \mathcal Z_1(t, \bm r) \right ] \\
\mathcal Z_N(t, \boldsymbol r) \coloneqq& \mathcal Z_0(t, \bm r) ,
\end{align}
where with $\mathbb E_a$ we mean the average over the variables $z_i^{(a)}$'s for $a=1, 2$, and with $\mathbb{E}_0$ we shall denote the average over the variables $z_{i_1\cdots,i_{_{P}}}$'s.
\end{definition}

The definition of 1RSB interpolating pressure at finite volume $N$ and in the thermodynamic limit is the same as Guerra's interpolating technique \eqref{def:interpPressRSB}. The notation for the generalized average are the analogous as Guerra's interpolation. 
\par\medskip

The next step is building a transport equation for the interpolating quenched pressure, for which we preliminary need to evaluate the related partial derivatives, as discussed in the next
\begin{lemma} \label{lemma:4}
The partial derivative of the interpolating quenched pressure with respect to a generic variable $\rho$ reads as

\begin{equation}
\label{eqn:partialrA}
\frac{\partial }{\partial \rho} \mathcal A_N^{(P)}(t, \bm r)=\frac{1}{N} \mathbb E_0  \mathbb E_1  \mathbb E_2 \left[\mathcal W_2 \omega \big( \partial_\rho \mathcal B(\bm \sigma; t, \bm r) \big)\right].
\end{equation}
In particular,
\begin{align}
\label{eqn:partialtA}
\frac{\partial }{\partial t} \mathcal A^{(P)}_N =&\frac{\beta^{'} J_0}{2}\langle m^P \rangle+ \frac{{\beta^{'}}^2 J^2}{4}\big (1 -(1-\theta)\langle q_{12}^{P} \rangle_2-\theta\langle q_{12}^{P} \rangle_1 \big) \\
\label{eqn:partialx1A}
\frac{\partial }{\partial x^{(1)}} \mathcal A^{(P)}_N =& \frac{1}{2}\big ( 1-(1-\theta)\langle q_{12} \rangle_2-\theta\langle q_{12} \rangle_1 \big) \\
\label{eqn:partialx2A}
\frac{\partial }{\partial x^{(2)}} \mathcal A^{(P)}_N =& \frac{1}{2}\big ( 1 -(1-\theta)\langle p_{12} \rangle_2
\big)\\
\label{eqn:partialwA}
\frac{\partial }{\partial w}  \mathcal A^{(P)}_N =&  \langle m \rangle
\end{align}
\end{lemma}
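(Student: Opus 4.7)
The plan is to exploit the nested structure of the broken-replica partition function, namely $\mathcal{Z}_0 = \exp \mathbb{E}_1[\ln \mathcal{Z}_1]$ and $\mathcal{Z}_1 = \mathbb{E}_2[\mathcal{Z}_2^\theta]^{1/\theta}$, extract the generic $\rho$-derivative by chain rule, and then apply Gaussian integration by parts (Stein's lemma) to reshape the linear-in-$z$ terms. First I would derive the master formula \eqref{eqn:partialrA}: writing $\mathcal{A}_N^{(P)} = \tfrac{1}{N\theta}\mathbb{E}_0 \mathbb{E}_1 \ln \mathbb{E}_2[\mathcal{Z}_2^\theta]$, a straightforward chain-rule computation together with $\partial_\rho \mathcal{Z}_2 = \mathcal{Z}_2\, \omega(\partial_\rho \mathcal{B})$ and the definition of $\mathcal{W}_2$ from Remark \ref{rem:medie} yields
\begin{equation*}
\partial_\rho \mathcal{A}_N^{(P)} = \frac{1}{N}\mathbb{E}_0 \mathbb{E}_1 \frac{\mathbb{E}_2[\mathcal{Z}_2^{\theta}\, \omega(\partial_\rho \mathcal{B})]}{\mathbb{E}_2[\mathcal{Z}_2^\theta]} = \frac{1}{N}\mathbb{E}_0 \mathbb{E}_1 \mathbb{E}_2\!\left[\mathcal{W}_2\, \omega\big(\partial_\rho \mathcal{B}(\bm\sigma; t, \bm r)\big)\right].
\end{equation*}

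Then I would specialize to each of the four variables. For $\rho = w$ the computation is immediate: $\partial_w\mathcal{B} = N m(\bm\sigma)$ is deterministic and the master formula collapses to $\langle m\rangle$ by \eqref{eq:unouno_a}. For $\rho \in \{t, x^{(1)}, x^{(2)}\}$ the factor $\omega(\partial_\rho \mathcal{B})$ carries an explicit linear Gaussian field, and Stein's lemma converts the $z$-moment into a derivative of the remaining integrand. The essential subtlety is the level at which each field is integrated: $z_{\bm i}$ by $\mathbb{E}_0$, $z_i^{(1)}$ by $\mathbb{E}_1$, and $z_i^{(2)}$ by $\mathbb{E}_2$. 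When Stein's lemma is applied to $z_i^{(2)}$, the denominator $\mathbb{E}_2[\mathcal{Z}_2^\theta]$ in $\mathcal{W}_2$ is independent of the integration variable, so only the numerator contributes: $\partial_{z_i^{(2)}} \mathcal{W}_2 = \theta \sqrt{x^{(2)}}\, \mathcal{W}_2\, \omega(\sigma_i)$, generating a single two-replica overlap of the ``type-2'' kind and producing the combination $1 - (1-\theta)\langle q_{12}\rangle_2$. When Stein's lemma is applied instead to $z_i^{(1)}$ or $z_{\bm i}$, the denominator of $\mathcal{W}_2$ is itself a function of these variables and its derivative contributes an additional subtraction $-\theta\, \mathcal{W}_2\, \mathbb{E}_2[\mathcal{W}_2\, \omega(\cdots)]$, which upon being multiplied back by $\omega(\cdots)$ and averaged yields the ``type-1'' overlap $\langle q_{12}\rangle_1$ (respectively $\langle q_{12}^P\rangle_1$) with coefficient $-\theta$.

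Collecting the contributions, the constant ``$1$'' in each formula arises from $\omega(\sigma_i^2) = 1$ and $\omega\big((\sigma_{i_1}\cdots\sigma_{i_P})^2\big) = 1$, while the combinatorial identity $\sum_{\bm i}\omega(\sigma_{i_1}\cdots\sigma_{i_P})^2 = N^P\, \omega(q_{12}^P)$ (with its one-replica analog for the subtraction term) converts the $z_{\bm i}$-sum into the $\langle q_{12}^P\rangle_{1,2}$ structures of \eqref{eqn:partialtA}, while the simpler one-body identities $\sum_i \omega(\sigma_i)^2 = N\omega(q_{12})$ yield the $\langle q_{12}\rangle_{1,2}$ terms at the $x^{(a)}$-level of \eqref{eqn:partialx1A}--\eqref{eqn:partialx2A}. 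The main obstacle is the careful bookkeeping of these ``levels'': one must track whether each Gaussian field enters only the numerator, or both numerator and denominator, of $\mathcal{W}_2$, because precisely this distinction controls the presence of the $-\theta\langle\cdot\rangle_1$ terms in \eqref{eqn:partialx1A} and \eqref{eqn:partialtA} versus their absence in \eqref{eqn:partialx2A}. Once the nested-expectation structure is handled consistently, the four partial derivatives reduce to routine (if lengthy) Stein-lemma manipulations.
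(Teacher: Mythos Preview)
Your proposal is correct and follows essentially the same route as the paper: the paper derives the master formula \eqref{eqn:partialrA} by the identical chain-rule computation through the nested definition $\mathcal Z_1=\mathbb E_2[\mathcal Z_2^\theta]^{1/\theta}$, and then declares the four specific derivatives to be ``standard calculations similar to RS case'' and omits them. Your outline of the level-by-level Stein-lemma bookkeeping (in particular the observation that $\mathbb E_2[\mathcal Z_2^\theta]$ is constant with respect to $z_i^{(2)}$ but not with respect to $z_i^{(1)}$ or $z_{\bm i}$, which is exactly what generates or suppresses the $-\theta\langle\cdot\rangle_1$ term) is precisely the mechanism behind the omitted part, so you are in fact supplying more detail than the paper does.
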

\begin{proof}
The proof is pretty lengthy and basically requires just standard calculations similar to RS case, since we omit it. Here we just prove that, in  complete generality
\begin{align}
\label{eqn:partialrA1}
\frac{\partial }{\partial \rho}  \mathcal A^{(P)}_N(t, \bm r)=& \frac{1}{N} \mathbb E_0 \mathbb E_1 \bigg [\partial_\rho \ln\mathcal Z_1 \bigg] \nonumber \\
=&\frac{1}{N} \mathbb E_0 \mathbb E_1 \bigg[\frac{1}{\theta}\frac{1}{\mathcal Z_1} \big[ \mathcal Z_2^\theta \big]^{1/\theta-1} \mathbb E_2 \big[\partial_\rho\mathcal Z_2^\theta \big] \bigg] \nonumber \\
=&\frac{1}{N} \mathbb E_0 \mathbb E_1 \mathbb E_2\bigg[\frac{\mathcal Z_2^\theta}{\mathbb E_2 \mathcal Z_2^\theta }\frac{\partial_\rho \mathcal Z_2}{\mathcal Z_2} \bigg]  \nonumber\\
=&\frac{1}{N} \mathbb E_0 \mathbb E_1 \mathbb E_2 \bigg[\mathcal W_2\frac{\partial_\rho \mathcal Z_2}{\mathcal Z_2} \bigg].
\end{align}
\end{proof}

\begin{proposition}
\label{prop:9_G}
The streaming of the 1-RSB interpolating quenched pressure obeys, at finite volume $N$, a standard transport equation, that reads as
\begin{align}
\label{eqn:transportequation}
\frac{d\mathcal A^{(P)}}{dt}&=\partial_t \mathcal A^{(P)}+\dot x^{(1)}\partial_{x_1} \mathcal A^{(P)} +\dot x^{(2)}\partial_{x_2} \mathcal A^{(P)}+\dot w \partial_{w} \mathcal A^{(P)} = S(t, \bm r) + V_N(t, \bm r)
\end{align}
where
\begin{align}
\label{eqn:f}
&S(t, \bm r)  \coloneqq \frac{{\beta^{'}}^2 J^2}{4} \left[ 1 -P\q_2^{P-1}+(P-1)\q_2^P-\theta (P-1)(\q_2^P  -  \q_1^P)\right] - (P-1)\frac{\b J_0 }{2}\m^P\,, \\
\label{eqn:V}
&V_N(t, \bm r) \coloneqq  \frac{{\beta^{'}}^2 J^2}{4}\left\{ (\theta -1) \sum_{k=2}^{P} \binom{P}{k} \bar{q}_2^{P - k} \langle (\Delta q_2)^k \rangle_2  - \theta \sum_{k=2}^{P} \binom{P}{k} \bar{q}_1^{P - k} \langle (\Delta q_1)^k \rangle_2 \right\} \notag \\
&\textcolor{white}{V_N(t, \bm r) \coloneqq }+ \frac{\b J_0}{2}\sum_{k=2}^P \binom{P}{k} \langle (\Delta m)^k \rangle\,,
\end{align}
\normalsize
where, as usual, we use $\Delta X_a=X-\bar{X}_a$.
\end{proposition}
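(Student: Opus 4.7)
The plan is to combine Lemma \ref{lemma:4}, which supplies the four partial derivatives of $\mathcal A^{(P)}_N$, with the 1RSB polynomial identities \eqref{potential_m_1RSB}–\eqref{potential_pq_1RSB} to rewrite $\langle m^P\rangle$, $\langle q_{12}^P\rangle_1$ and $\langle q_{12}^P\rangle_2$ in terms of the equilibrium values $\bar m$, $\bar q_1$, $\bar q_2$, the linear averages $\langle m\rangle$, $\langle q_{12}\rangle_1$, $\langle q_{12}\rangle_2$, and the fluctuations $\langle (\Delta X)^k\rangle_a$ for $k\ge 2$. The structural idea, mirroring the RS derivation in Proposition \ref{prop:interp_transp_RS}, is that once these expansions are plugged into $\partial_t\mathcal A^{(P)}$, the pieces linear in $\langle m\rangle$ and in $\langle q_{12}\rangle_a$ can be eliminated against $\dot w\,\partial_w\mathcal A^{(P)}$ and $\dot x^{(a)}\,\partial_{x^{(a)}}\mathcal A^{(P)}$, respectively, leaving only a deterministic source $S(t,\bm r)$ and a fluctuation remainder $V_N(t,\bm r)$.

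Concretely, after substitution the coefficient of $\langle m\rangle$ in $\partial_t\mathcal A^{(P)}$ is $\tfrac{P}{2}\beta' J_0\bar m^{P-1}$, so the choice $\dot w=-\tfrac{P}{2}\beta' J_0\bar m^{P-1}$ cancels it via equation \eqref{eqn:partialwA}. The coefficient of $\theta\langle q_{12}\rangle_1$ is $-\tfrac{P}{4}\beta'^{\,2}J^2\bar q_1^{P-1}$ and appears solely inside $\partial_{x^{(1)}}\mathcal A^{(P)}$, so setting $\dot x^{(1)}=-\tfrac{P}{2}\beta'^{\,2}J^2\bar q_1^{P-1}$ removes it by \eqref{eqn:partialx1A}. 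The coefficient of $(1-\theta)\langle q_{12}\rangle_2$ is shared between \eqref{eqn:partialx1A} and \eqref{eqn:partialx2A}: once $\dot x^{(1)}$ is fixed, one reads off $\dot x^{(2)}=-\tfrac{P}{2}\beta'^{\,2}J^2(\bar q_2^{P-1}-\bar q_1^{P-1})$ from the condition that the residual $(1-\theta)\langle q_{12}\rangle_2$ term vanishes.

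With these velocities fixed, I would collect the terms that survive. The deterministic leftover, assembled from $\bar m^P(1-P)$, the unity from $\partial_t\mathcal A^{(P)}$, the $\bar q_a^P(1-P)$ contributions weighted by $\theta$ and $1-\theta$, and the $-\tfrac{P}{4}\beta'^{\,2}J^2\bar q_a^{P-1}$ constants coming from the $1/2$ in $\partial_{x^{(a)}}\mathcal A^{(P)}$, simplifies (after grouping the $(P-1)\bar q_2^P$ and the $-\theta(P-1)(\bar q_2^P-\bar q_1^P)$ pieces) into the source $S(t,\bm r)$ of \eqref{eqn:f}. The remaining fluctuation terms, which carry the Newton-binomial sums $\sum_{k\ge 2}\binom{P}{k}\bar X^{P-k}\langle(\Delta X)^k\rangle$, directly reproduce $V_N(t,\bm r)$ as in \eqref{eqn:V}.

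The main obstacle is bookkeeping rather than conceptual: one has to track three independent $\theta$/$(1-\theta)$ weights simultaneously and ensure that the $\bar q_1^{P-1}$ contribution appearing through $\dot x^{(1)}\,\partial_{x^{(1)}}\mathcal A^{(P)}$ is correctly compensated by the $(\bar q_2^{P-1}-\bar q_1^{P-1})$ split chosen for $\dot x^{(2)}$, so that the $(1-\theta)\langle q_{12}\rangle_2$ terms cancel cleanly while the deterministic residues combine into exactly the stated form of $S$. Since the computations for $\partial_{x^{(a)}}\mathcal A^{(P)}$ and $\partial_w\mathcal A^{(P)}$ rely only on Stein's lemma applied to the Gaussian fields $z_i^{(a)}$—already used in Lemma \ref{lemma:4}—no new analytical ingredient is needed beyond this algebraic matching.
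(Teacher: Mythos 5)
Your proposal follows essentially the same route as the paper's proof: expand $\langle m^P\rangle$ and $\langle q_{12}^P\rangle_a$ via the Newton-binomial identities, substitute into $\partial_t\mathcal A^{(P)}_N$ from Lemma \ref{lemma:4}, and read off $\dot w$, $\dot x^{(1)}$, $\dot x^{(2)}$ as the unique velocities that recast the linear-in-$\langle m\rangle$ and linear-in-$\langle q_{12}\rangle_a$ terms as $\dot w\,\partial_w\mathcal A$ and $\dot x^{(a)}\,\partial_{x^{(a)}}\mathcal A$, with the residual constants and fluctuation sums yielding $S$ and $V_N$ respectively. The velocities you identify and the cancellation logic match the paper exactly, so the argument is sound.
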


\begin{proof}
Similar to the case RS (\ref{potential_m1}-\ref{potential_pq}), we have, for $a=1,2$
\begin{align}
\label{eq:q12_RSB}
    \langle q_{12}^{P}\rangle_a &= \sum_{k=1}^{P} \begin{pmatrix}P\\ k\end{pmatrix}\bar{q}_a^{P-k} \langle (\Delta q_a)^k \rangle_a +  \q_a^P (1-P) + P \q_a^{P-1} \langle q_{12} \rangle_a\,.
\end{align}
\normalsize
Now, starting to evaluate explicitly $\dt \mathcal{A}^{(P)}_N$ by using (\ref{eqn:partialx1A} - \ref{eqn:partialwA}) we write
\begin{align}
\dt \mathcal{A}^{(P)}_N=& \frac{\b J_0}{2} \left[ \sum_{k=2}^P \binom{P}{K} \langle (\Delta m)^k \rangle \bar{m}^{P-k} + \bar{m}^P (1-P) + P\bar{m}^{P-1}\langle m \rangle \right] \notag \\
&+ \frac{{\beta^{'}}^2 J^2}{4} \Bigg\{1+ (\theta-1)\left[  \sum_{k=1}^{P} \begin{pmatrix}P\\ k\end{pmatrix}\bar{q}_2^{P-k} \langle (\Delta q_2)^k \rangle_2 +  \q_2^P (1-P) + P \q_2^{P-1} \langle q_{12} \rangle_2\right] \notag \\
&- \theta \left[ \sum_{k=1}^{P} \begin{pmatrix}P\\ k\end{pmatrix}\bar{q}_1^{P-k} \langle (\Delta q_1)^k \rangle_a +  \q_1^P (1-P) + P \q_1^{P-1} \langle q_{12} \rangle_1\right]\Bigg\} = \notag \\
=& V_N(t, \bm r) + S(t, \bm r) + \frac{\b J_0 P}{2}\bar{m}^{P-1} \partial_w \mathcal{A}^{(P)}_N + \frac{\beta'\,^2 J^2 P }{2} \bar{q}_1^{P-1}\partial_{x^{(1)}}\mathcal{A}^{(P)}_N\notag \\
&+\frac{\beta'\,^2 J^2 P }{2}(\bar{q}_2^{P-1} - \bar{q}_1^{P-1})\partial_{x^{(2)}}\mathcal{A}^{(P)}_N \,.
\end{align}
\normalsize
Thus, by placing
\begin{align}
 \label{eqn:dotx1}
\dot{x}^{(1)}&= -\frac{\beta'\,^2 J^2 P }{2} \bar{q}_1^{P-1}\,, \\
\label{eqn:dotx2}
\dot{x}^{(2)}&= - \frac{\beta'\,^2 J^2 P }{2}(\bar{q}_2^{P-1} - \bar{q}_1^{P-1}) \,,\\
\dot{w} &= - \frac{\b J_0 P}{2}\bar{m}^{P-1}\,. \label{eq:dotw}
\end{align}
thus, we reach the thesis.

\end{proof}

\begin{remark} \label{r:above1}
In the thermodynamic limit, in the 1RSB scenario, we have
\begin{align}
\label{eqn:thlimaveragem}
 \lim_{N\rightarrow \infty} \langle (m - \bar m)^2 \rangle =& 0\,,\\
 \lim_{N\rightarrow \infty} \langle (q_{12}-\bar q_i)^2 \rangle_i=& 0; \: \: \: i=1,2\,.
\end{align}
Similar to the RS approximation, in the thermodynamic limit we have that the central moments greater than 2 tend to zero; in this way
\begin{equation} \label{eq:V0_HRSB}
\lim_{N \to \infty} V_N(t, \bm r) = 0.
\end{equation}
\end{remark}

Exploiting Remark \ref{r:above1} we can prove the following
\begin{proposition} \label{propHRSB_G}
The transport equation associated to the interpolating pressure function $\mathcal A^{(P)}(t, \bm r)$, in the thermodynamic limit and under the 1RSB assumption, can be written as
\begin{align}
    \label{transeqfinale}
    &\partial_t \mathcal A^{(P)}+\dot x^{(1)}\partial_{x_1} \mathcal A^{(P)} +\dot x^{(2)}\partial_{x_2} \mathcal A^{(P)} +\dot z \partial_{z} \mathcal A^{(P)}+\dot w \partial_{w} \mathcal A^{(P)} \notag \\
    &= \frac{\beta'\,^2 J^2}{4} \Bigg[ 1 -P\q_2^{P-1}+(P-1)\q_2^P- \theta(P-1)( \q_2^P  - \q_1^P)\Bigg] - \frac{\b J_0 \m^P (P-1)}{2}
\end{align}
\normalsize
whose explicit solution is given by
\begin{align}
    &\mathcal A^{(P)} = \log 2 + \frac{1}{\theta} \mathbb{E}_1 \log \mathbb{E}_2 \cosh^\theta \left( w_0 + z^{(1)} \sqrt{x_0^{(1)}} + z^{(2)} \sqrt{x_0^{(2)}}\right) \notag \\
    &+t \left\{ \frac{\beta'\,^2 J^2}{4} \Bigg[ 1 -P\q_2^{P-1}+(P-1)\q_2^P- \theta(P-1)( \q_2^P  - \q_1^P)\Bigg] - \frac{\b J_0 \m^P (P-1)}{2} \right\}
\end{align}
\end{proposition}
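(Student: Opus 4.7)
The plan is to split the proposition into two independent pieces and dispatch each in a dedicated step.

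First, I would obtain the transport equation (\ref{transeqfinale}) as the thermodynamic limit of (\ref{eqn:transportequation}). Proposition \ref{prop:9_G} already gives the finite-$N$ streaming $d_t \mathcal A^{(P)} = S(t, \bm r) + V_N(t, \bm r)$ with $V_N$ collecting the central-moment corrections listed in (\ref{eqn:V}). Invoking Remark \ref{r:above1}, which states that under 1RSB the fluctuations $\langle (m-\bar m)^k \rangle$ and $\langle (q_{12} - \bar q_i)^k \rangle_a$ vanish for $k \geq 2$ as $N \to \infty$, makes every summand of $V_N$ disappear termwise. What remains is exactly the advertised source $S(t, \bm r)$, and the four-term total derivative on the left unpacks via the chain rule into $\partial_t + \dot x^{(1)} \partial_{x^{(1)}} + \dot x^{(2)} \partial_{x^{(2)}} + \dot w \partial_w$ with the characteristic speeds fixed in (\ref{eqn:dotx1})--(\ref{eq:dotw}).

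Second, I would solve this first-order linear PDE by the method of characteristics. Because $\dot x^{(1)}, \dot x^{(2)}, \dot w$ depend only on the equilibrium order parameters $\bar m, \bar q_1, \bar q_2$ (and not on the running $\bm r$ or $t$), the characteristic curves are straight lines $x^{(a)}(t) = x_0^{(a)} + \dot x^{(a)} t$ and $w(t) = w_0 + \dot w\, t$. Along each characteristic the PDE collapses to $d\mathcal A^{(P)}/dt = S$, which is itself constant in $t$, so integrating yields $\mathcal A^{(P)}(t, \bm r) = \mathcal A^{(P)}(0, \bm r_0) + t\, S(t, \bm r)$. Substituting $S$ reproduces the explicit bracket written in the proposition, leaving only the identification of the Cauchy datum $\mathcal A^{(P)}(0, \bm r_0)$.

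The main technical step, and the only place where the 1RSB nested averaging plays a nontrivial role, is this Cauchy datum. At $t=0$ both the $P$-wise disordered term and the ferromagnetic $m^P$ term in (\ref{eqn:Z2_TrasP}) switch off, so $\mathcal Z_2^{(P)}(0, \bm r_0)$ factorizes site by site into hyperbolic cosines of the single-site field $h_i = w_0 + z_i^{(1)} \sqrt{x_0^{(1)}} + z_i^{(2)} \sqrt{x_0^{(2)}}$. Raising to the $\theta$-th power and applying $\mathbb E_2$ factorizes across sites by independence of the $z_i^{(2)}$; extracting the $1/\theta$-th root and the $\mathbb E_1$-logarithm prescribed by (\ref{eqn:Z1_1RSB})--(\ref{eqn:Z0_1RSB_trans}) turns the site-product into a sum of $N$ identically distributed terms, which cancels the $1/N$ in Definition \ref{def:interpPressRSB} and gives $\mathcal A^{(P)}(0, \bm r_0) = \log 2 + \theta^{-1}\, \mathbb E_1 \log \mathbb E_2 \cosh^\theta\!\bigl(w_0 + z^{(1)} \sqrt{x_0^{(1)}} + z^{(2)} \sqrt{x_0^{(2)}}\bigr)$. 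Assembling this with $t\,S$ yields the stated explicit solution. The only potential pitfall is keeping the site-index and replica-index bookkeeping straight when commuting $\mathbb E_2$ past $\prod_i$, a step legitimized precisely by the independence of the auxiliary fields across different sites.
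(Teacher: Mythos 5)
Your proof follows the same route as the paper's: invoke Remark \ref{r:above1} to kill $V_N$ in the thermodynamic limit, integrate the remaining transport equation along straight-line characteristics (since $\dot x^{(1)}, \dot x^{(2)}, \dot w$ are constants determined by the equilibrium order parameters), and supply the Cauchy datum $\mathcal A^{(P)}(0,\bm r_0)$ by a direct factorized computation of the one-body contribution. You actually go further than the paper here, which merely asserts \eqref{eqn:A0fin} by analogy with the Guerra-interpolation one-body term; your explicit account of how the site factorization commutes past $\mathbb E_2$ (thanks to the i.i.d. $z_i^{(2)}$), producing $N$ identical summands that cancel the $1/N$ normalization, is exactly the omitted step and is correct.
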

\begin{proof}
If we consider \eqref{eq:V0_HRSB} we have \eqref{transeqfinale}
We compute the transport equation trough characteristic method: 
\begin{equation}
\label{eqn:solutionzeroV}
\mathcal A^{(P)}_{\textrm{1RSB}}(t, \bm r)=\mathcal A^{(P)}_{\textrm{1RSB}}(0, \bm{r}-\bm{\dot r}t)+S_{\textrm{1RSB}}(t, \bm r )t.
\end{equation}
By putting (\ref{eqn:dotx1})-(\ref{eq:dotw}) into (\ref{eqn:transportequation}) we find
\begin{align}
\label{eqn:solutionzeroV1}
\mathcal A^{(P)}(t, \bm r)=&  \mathcal A^{(P)}(0, \bm r_0)+t \left\{ - \frac{\b J_0 \m^P (P-1)}{2} \right. \notag \\
&\left.+\frac{\beta'\,^2 J^2}{4} \Bigg[ 1 -P\q_2^{P-1}+(P-1)\q_2^P- \theta(P-1)( \q_2^P  - \q_1^P)\Bigg]  \right\}
\end{align}
\normalsize
where $r_0$ can be obtained by using the equation of motion
\begin{equation}
\label{eqn:linearmotion_trans}
\bm r = \bm r_0 + \dot{\bm r} t
\end{equation}
where the velocities are defined in (\ref{eqn:dotx1})-(\ref{eq:dotw}). Then, all we have to compute is $\mathcal A^{(P)}_0(0, \bm r_0)$, that can be easily done because at $t=0$ the two body interaction vanishes and it can be written as
\begin{align}
\label{eqn:A0fin}
\mathcal A^{(P)}_0(0, \bm r_0)=& \log 2 + \frac{1}{\theta} \mathbb{E}_1 \log \mathbb{E}_2 \cosh^\theta \left( w_0 + z^{(1)} \sqrt{x_0^{(1)}} + z^{(2)} \sqrt{x_0^{(2)}}\right).
\end{align}
We omit the computation since it is similar to one-body term in Guerra's interpolating scheme, 1RSB assumption.
\newline
Then, putting together (\ref{eqn:solutionzeroV1})-(\ref{eqn:A0fin}) 
and (\ref{eqn:dotx1})-(\ref{eq:dotw}), we finally achieve an explicit expression for the interpolating pressure of the P-spin glass model in the 1RSB approximation.
\end{proof}

To sum up, we have the following main theorem for the 1RSB scenario
\begin{theorem}
The 1-RSB quenched pressure for P-spin glass model, in the thermodynamic limit, reads as
\begin{equation}
\label{eqn:hopfieldAfinal}
\begin{array}{lll}
     \mathcal A^{(P)}(\beta)=& \log 2 + \dfrac{1}{\theta} \mathbb{E}_1 \log \mathbb{E}_2 \cosh^\theta g(\bm z, \m) -(P-1)\dfrac{\b J_0}{2}\m^{P}
     \\\\
&+ \dfrac{\beta'\,^2 J^2}{4} \left[ 1 + (\theta-1) \q_2^P (1-P) - \theta \q_1^P (1-P) - P \q_2^{P-1} \right]
\end{array}
\end{equation}
where 
\begin{align}
\small
    g(\bm z, \m) &= \b J_0 \dfrac{P}{2}\m^{P-1}+\b J z^{(1)} \sqrt{\frac{P}{2}\q_1^{P-1}}+ \b J z^{(2)} \sqrt{\frac{P}{2}(\q_2^{P-1}-\q_1^{P-1})}\,.
    \label{def_g_trans}
\end{align}

The order parameters are ruled by
\begin{align}
&\bar{m}= \mathbb{E}_1 \left[ \frac{\mathbb{E}_2 \cosh^\theta g(\bm z,\bar{m})\tanh g(\bm z,\bar{m})}{\mathbb{E}_2 \cosh^\theta g(\bm z,\bar{m})}\right] \\
&\bar{q}_1 = \mathbb{E}_1 \left[ \frac{\mathbb{E}_2 \cosh^\theta g(\bm z,\bar{m})\tanh g(\bm z,\bar{m})}{\mathbb{E}_2 \cosh^\theta g(\bm z,\bar{m})}\right]^2 \\
&\bar{q}_2 = \mathbb{E}_1 \left[ \frac{\mathbb{E}_2 \cosh^\theta g(\bm z,\bar{m})\tanh^2 g(\bm z,\bar{m})}{\mathbb{E}_2 \cosh^\theta g(\bm z,\bar{m})}\right]
\end{align}
with $g(\bm z, \m)$ is the same defined in \eqref{def_g_trans}.

\end{theorem}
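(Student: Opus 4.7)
The plan is to derive the stated form of $\mathcal A^{(P)}$ as the specialization of Proposition \ref{propHRSB_G} to the physical point, and then obtain the self-consistency equations by extremization. First I would invoke Proposition \ref{propHRSB_G}, which already provides an explicit solution of the transport equation \eqref{transeqfinale} in terms of the initial data $\bm r_0=(x_0^{(1)}, x_0^{(2)}, w_0)$. By Definition \ref{def:interpPressRSB}, the original quenched pressure is recovered by setting $t=1$ and $\bm r = \bm 0$, so I use the linear characteristic flow $\bm r = \bm r_0 + \dot{\bm r}\,t$ with the velocities \eqref{eqn:dotx1}--\eqref{eq:dotw} to read off
\[
w_0 = \frac{\b J_0 P}{2}\m^{P-1}, \qquad x_0^{(1)} = \frac{\beta'\,^2 J^2 P}{2}\q_1^{P-1}, \qquad x_0^{(2)} = \frac{\beta'\,^2 J^2 P}{2}(\q_2^{P-1}-\q_1^{P-1}).
\]

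Next I would substitute these values into the one-body expression $w_0 + z^{(1)}\sqrt{x_0^{(1)}}+z^{(2)}\sqrt{x_0^{(2)}}$ appearing inside $\cosh^\theta$, which by direct inspection reproduces the function $g(\bm z, \m)$ defined in \eqref{def_g_trans}. The source contribution $S(t, \bm r)\,t$ evaluated at $t=1$ coincides, up to an elementary rearrangement of the $\q_2^P$ and $\q_1^P$ terms (collecting the factors of $(P-1)$ and $\theta$), with the constant piece appearing in \eqref{eqn:hopfieldAfinal}. This gives the claimed form of $\mathcal A^{(P)}$.

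The self-consistency equations follow by extremizing $\mathcal A^{(P)}$ with respect to $\m$, $\q_1$, $\q_2$. Setting $\partial_{\m}\mathcal A^{(P)}=0$, $\partial_{\q_1}\mathcal A^{(P)}=0$, $\partial_{\q_2}\mathcal A^{(P)}=0$, one differentiates both the algebraic prefactor and $g(\bm z,\m)$ under the expectations $\mathbb E_1,\mathbb E_2$. The hyperbolic identities $\partial_g \cosh^\theta g = \theta \cosh^\theta g \tanh g$ collapse the derivatives into the quotients of Gaussian expectations displayed in the theorem. The main technical obstacle is that $\m,\q_1,\q_2$ enter $g$ nonlinearly and simultaneously through the square roots, so the three extremal conditions are coupled; one has to verify that the algebraic coefficients $-(P-1)\b J_0\m^{P-1}$, $\tfrac{\beta'\,^2 J^2}{4}\theta(P-1)\q_1^{P-1}$ and $\tfrac{\beta'\,^2 J^2}{4}[-P(P-1)\q_2^{P-2}+P(P-1)\q_2^{P-1}-\theta(P-1)P\q_2^{P-1}]$ cancel exactly against the derivatives of $g$ inside the logarithm, leaving only the compact self-consistency equations stated. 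Once this cancellation is checked, the proof is complete by invoking the RS/1RSB concentration of measure (Remark \ref{r:above1}) to justify that the stationary point characterizes $\bar m,\bar q_1,\bar q_2$ in the thermodynamic limit.
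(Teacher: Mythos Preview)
Your approach is correct and essentially identical to the paper's: the paper's proof is a single line stating that one takes $\bm r=\bm 0$ and $t=1$ in Proposition \ref{propHRSB_G}, which is exactly what you do (with the explicit computation of $\bm r_0$ along the characteristics filled in), and the self-consistency equations are obtained by extremization just as you outline. Your proposal simply spells out details that the paper leaves implicit.
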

\begin{proof}
By taking $\bm r= \boldsymbol 0$ and $t=1$ we find the P-spin glass pressure in the 1RSB approximation.
\end{proof}
We stress that the expression in \eqref{eqn:hopfieldAfinal} is the same found with Guerra's interpolating scheme in 1RSB assumption. 
%
%
%


\section{Gaussian P-Spin Glass}
In this section we face with Gaussian P-spin glass model, namely spin glass model with Gaussian spins with $P$-wise interactions. Several works describe the principal aspects of this model, either in our assumption \cite{BenArous,Bovier3} or in the so-called spherical version, where the Gaussian spin are on a spherical surface\cite{crisanti,PRL, TalaP}. 
We stress that two assumptions on Gaussian spins are equivalent, as shown in \cite{AdrianoGauss}. 
We deepen the model through Guerra's interpolating technique and transport equation for RS and 1RSB assumptions. 

\subsection{Generalities}
\label{GaussSection}

\begin{definition} 
\label{def:pspinhamGaus} 
Let $P\in\mathbb{N}$ and $z_i\in\mathbb{R},\;i=1,2,\cdots,N$ be a configuration of N gaussian spins, the Hamiltonian of the P-spin Gaussian model is defined as
	\begin{equation}
	H_N^{(P)}(\boldsymbol z| \boldsymbol J) \coloneqq -\dfrac{1}{P!}\SOMMA{i_1,\cdots,i_P=1}{N,\cdots,N}\,\mathcal{J}_{i_1,\cdots,i_P}z_{i_1}\cdots z_{i_P}
	\label{eq:G_hbare}
	\end{equation}
where the P-wise quenched couplings $\boldsymbol{\mathcal{J}} = \lbrace \mathcal{J}_{i_1,\cdots,i_P}\rbrace_{i_1,\cdots,i_P=1,\cdots,N}$ are given by
\begin{equation}
    \mathcal{J}_{i_1,...,i_P}\coloneqq\,\sqrt{\dfrac{2}{N^{P-1}}}J_{i_1,\cdots,i_P}
\end{equation}
with $J_{i_1,\cdots,i_P}$ i.i.d. standard random variables drawn from $P(J_{i_1,\cdots,i_P}) = N(0; 1)$.
\end{definition}

\begin{definition}
\label{G_BareZ}
The partition function related to the Hamiltonian \eqref{eq:G_hbare} is given by
\begin{equation}
    \begin{array}{lll}
    \label{eq:G_BareZ}
	Z_N(\beta,\boldsymbol J) &\coloneqq \displaystyle\int d\mu(z) \exp \left[ -\beta H_N^{(P)}(z | \boldsymbol J)\right]=\mathbb{E}_z \,e ^{-\beta H_N^{(P)}(z | \boldsymbol J)}\,,
    \end{array}
\end{equation}
	where, $d\mu(z)=\prod_i (2\pi)^{-1/2}e^{-\sum_i z_i^2/2}$ and $\beta\in \mathbb R^+$ is the inverse temperature in proper units.
	
	As early pointed out for
instance in \cite{Berlin}, unfortunately these kind of models need to be regularized; in
fact, the right side of \eqref{eq:G_BareZ} is not always well defined as the P-wise interactions bridges soft spins which are both Gaussian distributed. It will be clear soon that a good definition is
\begin{equation}
    \begin{array}{lll}
    \label{eq:G_BareZ_1}
	Z_N(\beta,\boldsymbol J,\lambda) &\coloneqq \mathbb{E}_z \,\exp\left[ -\beta H_N^{(P)}\left(z | \boldsymbol J\right)-\left(\dfrac{\beta}{P!}\right)^2\dfrac{1}{N^{P-1}}\left(\SOMMA{i=1}{N}z_i^2\right)^P+\dfrac{\lambda}{2}\SOMMA{i=1}{N}z_i^2\right]\,,
    \end{array}
\end{equation}
where the first additional term is needed for convergence of the integral over the Gaussian measure $d\mu(z)$. Instead, the new parameter $\lambda$, within the last term of \eqref{eq:G_BareZ_1}, is just to modify the variance of the soft spins, as in several applications this can sensibly vary.
\end{definition}

We introduce the \emph{Boltzmann average} induced by the partition function (\ref{eq:G_BareZ}), denoted with $\omega_{\boldsymbol J}$ and, for an arbitrary observable $O(\boldsymbol \sigma)$, defined as
	\begin{equation}
	\omega_{\boldsymbol J} (O (\boldsymbol \sigma)) : = \frac{\mathbb{E}_z O(\boldsymbol \sigma) e^{- \beta H_N(\boldsymbol \sigma| \boldsymbol J)}}{Z_N(\beta, \boldsymbol J,\lambda)}.
	\end{equation}
	This can be further averaged over the realization of the $J_{i_1,\cdots,i_P}$'s (also referred to as \emph{quenched average}) to get
	\begin{equation}
	  \langle O(\boldsymbol \sigma) \rangle \coloneqq \mathbb{E} \omega_{\boldsymbol J} (O(\boldsymbol \sigma)).  
	\end{equation}

\begin{definition}The intensive quenched pressure of the Gaussian P-spin glass model (\ref{eq:G_hbare}) is defined as
\begin{equation}
\label{PressureDef_G}
\mathcal A_N(\beta,\lambda) \coloneqq \frac{1}{N} \mathbb{E} \ln \mathcal Z_N(\beta, \boldsymbol J,\lambda),
\end{equation}
and its thermodynamic limit, assuming its existence, is referred to as
\begin{equation}
\mathcal A(\beta,\lambda) \coloneqq \lim_{N \to \infty} \mathcal A_N(\beta,\lambda).
\label{eq:quenched_pressure_def_G}
\end{equation}
\end{definition}

In order to solve the model we want to find out an explicit expression for the quenched pressure \eqref{eq:quenched_pressure_def_G} in terms of the natural order parameter of the theory, namely the two-replica overlap $q_{12}$, defined in the following

\begin{definition} The order parameter used to describe the macroscopic behavior of the model is the two-replica overlap, introduced as
\begin{equation}
\label{q_G}
q_{12} \coloneqq \frac{1}{N}\sum_{i=1}^N z_i^{(1)}z_i^{(2)}\,.
\end{equation}
\end{definition}

\subsection{Resolution via Guerra's interpolation}

Mirroring P-spin glass model, the purpose of this section is to solve the Gaussian P-spin glass model via Guerra's interpolating technique. To do so, we recover the expressions of statistical pressure (which is equivalent to free energy) and self consistency equations in the approximation of replica symmetry (RS) and first step of replica symmetry breaking (RSB).

\subsubsection{RS solution}
Using the same definition for $q_{12}$ presented for the P-spin glass model (def. \ref{defn: RSassumption}), we can introduce the Guerra's interpolating partition function of the Gaussian P-spin glass model.

\begin{definition} 
Given the interpolating parameter $t \in [0,1]$, $A,\ B \in \mathbb{R}$  and $\tilde{J}_i \sim \mathcal{N}(0,1)$ for $i=1, \hdots , N$ standard Gaussian variables i.i.d., the Guerra's interpolating partition function is given as 
\begin{equation}
\footnotesize
\begin{array}{lll}
     \mathcal{Z}^{(P)}_N(t) &\coloneqq& \mathbb{E}_z\exp{}\Bigg[\b\sqrt{t}\sqrt{\dfrac{1}{2N^{P-1}}}\SOMMA{i_1,\cdots,i_P=1}{N,\cdots,N}J_{i_1,\cdots,i_P}z_{i_1}\cdots z_{i_P}-t\dfrac{\beta'\,^2}{4N^{P-1}}\left(\SOMMA{i=1}{N}z_i^2\right)^P
     \\\\
     & &+ A\sqrt{1-t}\SOMMA{i=1}{N}\tilde{J}_i z_i+(1-t)\dfrac{B}{2}\SOMMA{i=1}{N}z_i^2+\dfrac{\lambda}{2}\SOMMA{i=1}{N}z_i^2\Bigg]\,,
     \label{def:partfunct_GuerraRS_G}
\end{array}
\end{equation}
where $\b=2\beta/P!$.
\end{definition}
\begin{definition} The interpolating pressure for the Gaussian P-spin glass model (\ref{eq:G_hbare}), at finite $N$, is introduced as
\begin{eqnarray}
\mathcal{A}^{(P)}_N(t) &\coloneqq& \frac{1}{N} \mathbb{E} \left[  \ln \mathcal{Z}^{(P)}_N(t)  \right],
\label{G_GuerraAction}
\end{eqnarray}
where the expectation $\mathbb E$ is now meant over $J_{i_1,\cdots,i_P}$ and $\tilde{J}_i$, in the thermodynamic limit,
\begin{equation}
\mathcal{A}^{(P)}(t) \coloneqq \lim_{N \to \infty} \mathcal{A}^{(P)}_N(t).
\label{G_GuerraAction_TDL}
\end{equation}
By setting $t=1$ the interpolating pressure recovers the original one (\ref{PressureDef_G}), that is $\mathcal A_N^{(P)} (\beta,\lambda) = \mathcal{A}^{(P)}_N(t=1)$.
\end{definition}

\begin{remark}
	The interpolating structure implies an interpolating measure, whose related Boltzmann factor reads as
	\begin{equation}
	\mathcal B (\boldsymbol z; t )\coloneqq  \exp \left[ \beta  H (\boldsymbol z; t) \right];
	\end{equation}

In this way the partition function is written as $\mathcal Z_N(t) =  \mathbb{E}_z \mathcal B (\boldsymbol z; t)$  .\\
A generalized average follows from this generalized measure as
\beq
	\omega_{t} (O (\boldsymbol z )) \coloneqq   \mathbb{E}_z O (\boldsymbol z ) \mathcal B (\boldsymbol z; t)
	\eeq
	and
\beq
\langle O (\boldsymbol z ) \rangle_{t}  \coloneqq \mathbb E [ \omega_{t} (O (\boldsymbol z)) ].
\eeq
where $ \mathbb E$ denotes the average over $\tilde{J}_{i_1,\cdots,i_P}$ and $\lbrace J_i\rbrace_{i=1,\cdots,N}$.

Of course, when $t=1$ the standard Boltzmann measure and related average is recovered.
Hereafter, in order to lighten the notation, we will drop the subindices $t$.
\end{remark}
\begin{lemma} 
The $t$ derivative of interpolating pressure is given by 
\begin{equation}
    \begin{array}{lll}
         \dfrac{d \mathcal{A}^{(P)}(t)}{d t}\coloneqq & -\dfrac{\beta'\,^2}{4}\left(\l q_{12}\r^P-\dfrac{2A^2}{\beta'^2}\l q_{12}\r\right)-\dfrac{1}{2N}\left(A^2+B\right)\SOMMA{i=1}{N}\l z_i^2\r
    \end{array}
    \label{eq:streaming_RS_Guerra_G}
\end{equation}
\end{lemma}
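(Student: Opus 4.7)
The plan is to mimic the proof of the analogous P-spin glass lemma (equations \eqref{eq:D1_GuerraRS}-\eqref{eq:D2_GuerraRS}) but adapted to soft spins, paying special attention to the regularizer term $-t\frac{\beta'\,^2}{4N^{P-1}}\left(\sum_i z_i^2\right)^P$ that is characteristic of the Gaussian case. I would start by differentiating $\mathcal{A}^{(P)}_N(t)=\frac{1}{N}\mathbb{E}\ln\mathcal{Z}^{(P)}_N(t)$ directly, which brings down four Boltzmann averages from the $t$-dependent pieces of \eqref{def:partfunct_GuerraRS_G}: one from the quenched coupling term (call it $D_1$), one from the regularizer (call it $D_0$), one from the one-body Gaussian field $A$-term (call it $D_2$), and one from the $B$-term and the $\lambda$-term (the $\lambda$ piece is $t$-independent, so only $B$ contributes, call it $D_3$).

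Next, I would apply Stein's lemma \eqref{eqn:gaussianrelation2} to $D_1$ and $D_2$ exactly as in the P-spin glass RS proof. For $D_1$ the derivative of $\omega(z_{i_1}\cdots z_{i_P})$ with respect to $J_{i_1,\dots,i_P}$ pulls down another product of $P$ soft spins from $\mathcal B$, so after applying Stein and summing over the multi-index $\boldsymbol i$ one obtains
\begin{equation*}
D_1=\frac{\beta'\,^2}{4N^{P-1}}\mathbb{E}\,\omega\!\left(\bigl(\textstyle\sum_i z_i^2\bigr)^P\right)-\frac{\beta'\,^2 N}{4}\langle q_{12}^P\rangle .
\end{equation*}
The first piece is precisely what $D_0$ cancels, since $D_0=-\frac{\beta'\,^2}{4N^{P-1}}\mathbb{E}\,\omega\!\left((\sum_i z_i^2)^P\right)$; this is the very reason the regularizer was inserted in Definition \ref{G_BareZ}. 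What survives from $D_0+D_1$, after dividing by $N$, is $-\frac{\beta'\,^2}{4}\langle q_{12}^P\rangle$.

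For $D_2$, Stein's lemma applied to $\tilde J_i$ produces $\omega(z_i^2)-\omega(z_i)^2$, summing to
\begin{equation*}
D_2=-\frac{A^2}{2N}\sum_{i=1}^N\langle z_i^2\rangle+\frac{A^2}{2}\langle q_{12}\rangle ,
\end{equation*}
and $D_3$ contributes the remaining $-\frac{B}{2N}\sum_i\langle z_i^2\rangle$ piece with no two-replica term (no Gaussian integration by parts needed). Collecting $D_0+D_1+D_2+D_3$ and grouping the $\langle q_{12}\rangle$ and $\langle q_{12}^P\rangle$ terms gives the stated form of \eqref{eq:streaming_RS_Guerra_G}.

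The main obstacle will be the bookkeeping behind the cancellation $D_0+D_1$: one must verify that $\sum_{\boldsymbol i}\omega((z_{i_1}\cdots z_{i_P})^2)=\omega\bigl((\sum_i z_i^2)^P\bigr)$ and that $\sum_{\boldsymbol i}\omega(z_{i_1}\cdots z_{i_P})^2=N^P\omega(q_{12}^P)$ (using a replica doubling of $\omega$), so that the non-integrable $\omega((\sum_iz_i^2)^P)$ piece is exactly absorbed by the regularizer. Apart from this step, the rest is essentially the same Stein-lemma calculation as in the P-spin glass RS case and requires no new ideas; the Gaussian nature of the spins enters only through the use of $\mathbb{E}_z$ in place of $\sum_{\boldsymbol\sigma}$ and through the presence of $\langle z_i^2\rangle$ (which is not identically $1$, unlike the Ising case).
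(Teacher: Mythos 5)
Your proposal is correct and takes essentially the same route as the paper: differentiate in $t$, apply Stein's lemma to the $J_{\boldsymbol i}$ and $\tilde J_i$ terms, observe that $\sum_{\boldsymbol i}\omega\bigl((z_{i_1}\cdots z_{i_P})^2\bigr)=\omega\bigl((\sum_i z_i^2)^P\bigr)$ exactly cancels the regularizer, and collect the surviving $\langle q_{12}^P\rangle$, $\langle q_{12}\rangle$ and $\langle z_i^2\rangle$ pieces. The only blemish is a bookkeeping inconsistency in normalization: you wrote $D_0$ and $D_1$ as contributions to $\frac{d}{dt}\mathbb{E}\ln\mathcal{Z}$ (hence the explicit ``dividing by $N$''), while $D_2$ and $D_3$ are already written as contributions to $\frac{d}{dt}\frac{1}{N}\mathbb{E}\ln\mathcal{Z}$; with a uniform convention (e.g. including the $1/N$ throughout, so $D_1=\frac{\beta'\,^2}{4N^P}\mathbb{E}\,\omega\bigl((\sum_i z_i^2)^P\bigr)-\frac{\beta'\,^2}{4}\langle q_{12}^P\rangle$) the four pieces add directly to the claimed expression.
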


\begin{proof}
Deriving equation \eqref{G_GuerraAction} with respect to $t$, we get 
\begin{equation}
\begin{array}{lll}
     \dfrac{d \mathcal{A}^{(P)}(t)}{d t}&=&\dfrac{\b }{2N \sqrt{t}}\sqrt{\dfrac{1}{ 2N^{P-1}}}\mathbb{E}\left[\sum\limits_{\boldsymbol{i}}J_{\boldsymbol{i}}\omega(z_{i_1}\cdots z_{i_{P}})\right]-\dfrac{\beta'\,^2}{4N^P}\left[\sum\limits_i\omega(z_i^2)\right]^P+
     \\\\
     &&-\dfrac{1}{2N\sqrt{1-t}}\mathbb{E}\left[A\sum\limits_{i} \tilde{J}_{i}\omega(z_{i})\right]-\dfrac{1}{2N}\mathbb{E}\left[B\sum\limits_{i}\omega(z_{i}^2)\right]\,.
\end{array}
\label{eq:proof_streaming_Guerra_G}
\end{equation}
Now, using the Stein's lemma (also known as Wick's theorem) on standard Gaussian variable $\tilde{J}_i$ and $J_{\boldsymbol{i}}$, we may rewrite the second and the third member of \eqref{eq:proof_streaming_Guerra_G} as
\begin{equation*}
\small
\begin{array}{lll}
     \dfrac{\b }{2N \sqrt{t}}\sqrt{\dfrac{1}{ 2N^{P-1}}}\mathbb{E}\left[\sum\limits_{\boldsymbol{i}}J_{\boldsymbol{i}}\omega(z_{i_1}\cdots z_{i_{P}})\right]-\dfrac{1}{2N\sqrt{1-t}}\mathbb{E}\left[A\sum\limits_{i} \tilde{J}_{i}\omega(z_{i})\right]=D_1+D_2\,.
\end{array}
\end{equation*}
Let’s investigate those three terms:
\begin{equation}
\small
\begin{array}{lll}
     D_1&=\dfrac{\b }{2N \sqrt{t}}\sqrt{\dfrac{1}{ 2N^{P-1}}}\mathbb{E}\left[\sum\limits_{\boldsymbol{i}}\partial_{J_{\boldsymbol{i}}}\omega(z_{i_1}\cdots z_{i_{P}})\right]
     \\\\
     &=\dfrac{\beta'\,^2}{4 N^{P}}\left(\sum\limits_{\boldsymbol{i}}\mathbb{E}\left[\omega((z_{i_1}\cdots z_{i_{P}})^2)\right]-\sum\limits_{\boldsymbol{i}}\mathbb{E}\left[\omega(z_{i_1}\cdots z_{i_{P}})^2\right]\right)
     \\\\
     &=\dfrac{\beta'\,^2}{4 N^{P}}\left(\sum\limits_{i}\mathbb{E}\left[\omega(z_{i}^2)\right]\right)^P-\dfrac{\beta'\,^2}{4}\l q_{12}^P\r\,;
\end{array}
\label{eq:D1_GuerraRS_G}
\end{equation}
\begin{equation}
\footnotesize
\begin{array}{lll}
     D_2&=
    -\dfrac{1}{2N\sqrt{1-t}}\mathbb{E}\left[A\sum\limits_{i} \partial_{\tilde{J}_{i}}\omega(z_{i})\right]
    \\\\
    &=-\dfrac{1}{2N}A^2\left(\sum\limits_{i}\mathbb{E}\left[ \omega(z^{2}_{i})\right]-\sum\limits_{i}\mathbb{E}\left[ \omega(z_{i})^{2}\right]\right)
     \\\\
     &=-\dfrac{1}{2}A^2\left[\dfrac{1}{N}\mathbb{E}\left[ \omega(z_{i})^{2}\right]-\l\qq\r\right]\,.
\end{array}
\label{eq:D2_GuerraRS_G}
\end{equation}
Rearranging together \eqref{eq:D1_GuerraRS_G} and \eqref{eq:D2_GuerraRS_G} we obtain the thesis.
\end{proof}

\begin{remark}
We stress that, for the RS assumption presented in Definition \eqref{defn: RSassumption}, we can use the relations \eqref{potential_m1} and \eqref{potential_pq} to fix the two constants as
\begin{equation}
    \begin{array}{lll}
         A^2=\dfrac{P}{2}\beta'\,^2 \q^{^{P-1}}\,,
         \\\\
         B=-A^2
    \end{array}
    \label{eq:constant_GuerraRS_G}
\end{equation}
the \eqref{eq:streaming_RS_Guerra_G} in the thermodynamical limit reads as
\begin{equation}
\begin{array}{lll}
     \dfrac{d \mathcal{A}^{(P)}(t)}{d t}&\coloneqq &(P-1)\dfrac{\beta'\,^2}{4}\q^P
\end{array}
\label{eq:streaming_RS_Guerra2_G}
\end{equation}
which is now independent of $t$.
\end{remark}

\begin{theorem}
\label{G_P_quenched}
In the thermodynamic limit ($N\to\infty$) and under RS assumption, applying the Fundamental Theorem of Calculus and using the suitable values of $A$ and $B$, we find the quenched pressure for the Gaussian P-spin glass model as
\begin{equation}
\label{eq:pressure_GuerraRS_G}
\begin{array}{lll}
     \mathcal{A}^{(P)}(\b,\lambda) &=& \dfrac{\frac{P}{2}\beta'\,^2\q^{P-1}}{2\left(1-\lambda+\frac{P}{2}\beta'\,^2\q^{P-1}\right)}-\dfrac{1}{2}\ln{\left[1-\lambda+\frac{P}{2}\beta'\,^2\q^{P-1}\right]}+
    \\\\
    &&+(P-1)\dfrac{\beta'\,^2}{4}\q^P\,.
\end{array}
\end{equation}
where the self-consistency equation which rule the order parameter are given by the resolution or the implicit equation
\begin{equation}
    \begin{array}{lll}
         \q=\dfrac{\frac{P}{2}\beta'\,^2\q^{P-1}}{\left(1-\lambda+\frac{P}{2}\beta'\,^2\q^{P-1}\right)^2} \,.
    \end{array}
    \label{eq:self_GuerraRS_G}
\end{equation}
\end{theorem}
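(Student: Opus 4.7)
The plan is to apply the Fundamental Theorem of Calculus in the interpolation parameter,
$$\mathcal{A}^{(P)} = \mathcal{A}^{(P)}(t=1) = \mathcal{A}^{(P)}(t=0) + \int_0^1 \partial_s \mathcal{A}^{(P)}(s)\Big|_{s=t}\,dt,$$
in complete analogy with Proposition \ref{SK_P_quenched}. Once the constants $A$ and $B$ are fixed as in the preceding Remark, equation \eqref{eq:streaming_RS_Guerra2_G} tells us that $\partial_t \mathcal{A}^{(P)}(t)$ is already $t$-independent in the thermodynamic limit and equal to $(P-1)\beta'\,^2 \bar q^{P}/4$, so the integral simply reproduces the last summand of the target formula. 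The whole derivation therefore reduces to computing the Cauchy datum $\mathcal{A}^{(P)}(t=0)$ and to extremizing the resulting expression with respect to $\bar q$.

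For the Cauchy datum, setting $t=0$ in \eqref{def:partfunct_GuerraRS_G} kills both the $P$-body glassy interaction and its quartic regularization, so the soft spins completely decouple. Collecting the quadratic and linear terms in each $z_i$ and recognizing a standard one-dimensional Gaussian kernel
$$(2\pi)^{-1/2}\int dz_i\,\exp\left[-\tfrac{1}{2}(1-\lambda-B)z_i^{2}+A\tilde J_i z_i\right]=\alpha^{-1/2}\exp\left[\tfrac{A^{2}\tilde J_i^{2}}{2\alpha}\right],$$
with $\alpha:=1-\lambda-B$, and then taking $N^{-1}\mathbb E \ln$ of the $N$-fold product and using $\mathbb E\tilde J_i^{2}=1$, I obtain $\mathcal{A}^{(P)}(t=0)=-\tfrac{1}{2}\ln\alpha+A^{2}/(2\alpha)$. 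Substituting $A^{2}=\tfrac{P}{2}\beta'\,^2 \bar q^{P-1}$ and $B=-A^{2}$ gives $\alpha=1-\lambda+\tfrac{P}{2}\beta'\,^2 \bar q^{P-1}$, and combining with the integrated streaming term produces \eqref{eq:pressure_GuerraRS_G}.

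The self-consistency equation \eqref{eq:self_GuerraRS_G} then follows from the extremality condition $\partial_{\bar q}\mathcal{A}^{(P)}=0$. Writing $u:=\tfrac{P}{2}\beta'\,^2 \bar q^{P-1}$, a direct chain-rule calculation makes the one-body block collapse to $-u\,u'/(2\alpha^{2})$, since $\partial_u[-\tfrac{1}{2}\ln\alpha+u/(2\alpha)]=-u/(2\alpha^{2})$ by cancellation of two of three terms. Using the identity $u'=(P-1)u/\bar q$ together with $\partial_{\bar q}[(P-1)\beta'\,^2 \bar q^{P}/4]=(P-1)u/2$ then yields, after dividing out the common nonzero factor, the implicit equation $\bar q=u/\alpha^{2}$.

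The only genuinely delicate point is the positivity of $\alpha$, required for the one-body Gaussian integrals to converge. This is the condition $\lambda<1+\tfrac{P}{2}\beta'\,^2 \bar q^{P-1}$, precisely the regime in which the quartic regularization introduced in Definition \ref{G_BareZ} plays its stabilizing role; outside it the whole scheme would break down. Apart from monitoring this bound, every step is a direct transplant of the argument behind Proposition \ref{SK_P_quenched}, with the Bernoulli one-body $\ln\cosh$ replaced by the Gaussian $-\tfrac{1}{2}\ln\alpha+A^{2}/(2\alpha)$, so no further analytical obstacle is expected.
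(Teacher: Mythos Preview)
Your proposal is correct and follows essentially the same route as the paper: apply the Fundamental Theorem of Calculus, compute the one-body Cauchy datum $\mathcal{A}^{(P)}(t=0)$ via the decoupled Gaussian integral to get $-\tfrac{1}{2}\ln(1-\lambda-B)+A^{2}/(2(1-\lambda-B))$, substitute the values of $A$ and $B$, and then extremize in $\bar q$. Your write-up is in fact slightly more explicit than the paper's on the extremization step and on the convergence condition $\alpha>0$, but the argument is the same.
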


\begin{proof}
Using the Fundamental Theorem of Calculus \eqref{eq:F_T_Calculus} and computing the one-body terms
\begin{equation}
\footnotesize
\begin{array}{lll}
     \mathcal{A}^{(P)}(t=0)&=&\dfrac{1}{N}\mathbb{E}_{\tilde{J}}\ln{}\Bigg\lbrace \mathbb{E}_z \exp{}\Bigg[A\,\SOMMA{i=1}{N} \tilde{J}_i z_i+\dfrac{1}{2}\left(B+\lambda\right)\SOMMA{i=1}{N}z_i^2\Bigg]\Bigg\rbrace\,,
     \\\\
         &=&\dfrac{A^2}{2\left(1-\lambda-B\right)}-\dfrac{1}{2}\ln{\left[1-\lambda-B\right]}
       \\\\
        &=&\dfrac{\frac{P}{2}\beta'\,^2\q^{P-1}}{2\left(1-\lambda+\frac{P}{2}\beta'\,^2\q^{P-1}\right)}-\dfrac{1}{2}\ln{\left[1-\lambda+\frac{P}{2}\beta'\,^2\q^{P-1}\right]}\,.
    \end{array}
    \label{eq:one_body_GuerraRS_G}
\end{equation}
Finally, putting \eqref{eq:streaming_RS_Guerra2_G} and \eqref{eq:one_body_GuerraRS_G} in \eqref{eq:F_T_Calculus}, we find \eqref{eq:pressure_GuerraRS_G}.

Extremizing the statistical pressure in \eqref{eq:pressure_GuerraRS_G} w.r.t. the order parameter we find equation \eqref{eq:self_GuerraRS_G}.
\end{proof}

A, briefly, deeper study of the interesting case of $P=2$ is done in Section \ref{sec:G_P2},  for a complete discussion we remand to \cite{AdrianoGauss}.

\begin{remark}
We highlight that we recover the same results in \cite{crisanti} in RS assumption computed via \textit{replica trick}. 

\par \medskip
Without going into details of implicit equation \eqref{eq:self_GuerraRS_G}, if we focus on $P > 2$ case, we stress that the solution $\q=0$ is always a saddle point; for $\q \neq 0$ we can see numerically that there is only an acceptable solutions, which maximize the quenched statistical pressure. For every details we remind to \cite{crisanti}. 
\end{remark}

\subsubsection{1RSB solution}

In this subsection we turn to the solution of the Gaussian P-spin glass model via the Guerra's interpolating technique, restricting the description at the first step of RSB.

Following the same route pursued in the previous sections, we need an interpolating partition function $\mathcal{Z}$ and an interpolating quenched pressure $\mathcal{A}$,  that are defined hereafter.
\begin{definition}
Given the interpolating parameter $t$ and the i.i.d. auxiliary fields $\lbrace \tilde{J}_i^{(1)}, \tilde{J}_i^{(2)}\rbrace_{i=1,...,N}$, with $\tilde{J}_i^{(1,2)} \sim \mathcal N(0,1)$ for $i=1, ..., N$ we can write the 1-RSB interpolating partition function $\mathcal Z_N(t)$ for the P-spin Gaussian model (\ref{eq:G_BareZ_1}) recursively, starting by

\begin{equation}
\begin{array}{lll}
     \mathcal{Z}^{(P)}_2(t) &:=& \mathbb{E}_z \exp \Bigg[ \sqrt{t}\b \sqrt{\dfrac{1 }{2N^{P-1}}}\SOMMA{i_1,\cdots,i_{_{P}}=1}{N,\cdots,N}J_{i_1\cdots,i_{_{P}}}z_{i_1}\cdots z_{i_{P}}-t\dfrac{\beta'\,^2}{4N^{P-1}}\left(\SOMMA{i=1}{N}z_i^2\right)^P
     \\\\
     & &+\sqrt{1-t}\SOMMA{a=1}{2}\left(A^{(a)}\SOMMA{i=1}{N} \tilde J_i^{(a)}z_i\right) +(1-t)\dfrac{B}{2}\SOMMA{i=1}{N}z_i^2+\dfrac{\lambda}{2}\SOMMA{i=1}{N}z_i^2\Bigg]
     \label{def:partfunct_GuerraRSB_G}
\end{array}
\end{equation}
where  the $J_{i_1\cdots,i_{_{P}}}$'s i.i.d. standard Gaussian. 
Averaging out the fields recursively, we define
\begin{align}
\label{eqn:Z1_G_trans}
\mathcal Z_1^{(P)}(t) \coloneqq& \mathbb E_2 \left [ \mathcal Z_2^{(P)}(t)^\theta \right ]^{1/\theta} \\
\label{eqn:Z0_G:trans}
\mathcal Z_0^{(P)}(t) \coloneqq&  \exp \mathbb E_1 \left[ \ln \mathcal Z_1^{(P)}(t) \right ] \\
\mathcal Z_N^{(P)}(t) \coloneqq & \mathcal Z_0^{(P)}(t) ,
\end{align}
where with $\mathbb E_a$ we mean the average over the variables $\tilde{J}_i^{(a)}$'s, for $a=1, 2$, and with $\mathbb{E}_0$ we shall denote the average over the variables $J_{i_1\cdots,i_{_{P}}}$'s.
\end{definition}

\begin{definition}
\label{def:interpPressRSB_G}
The 1RSB interpolating pressure, at finite volume $N$, is introduced as
\begin{equation}\label{AdiSK1RSB_G}
\mathcal A_N^{(P)} (t) \coloneqq \frac{1}{N}\mathbb E_0 \big[ \ln \mathcal Z_0^{(P)}(t) \big],
\end{equation}
and, in the thermodynamic limit, assuming its existence
\begin{equation}
\mathcal A^{(P)} (t) \coloneqq \lim_{N \to \infty} \mathcal A^{(P)}_N (t).
\end{equation}
By setting $t=1$, the interpolating pressure recovers the standard pressure (\ref{PressureDef_G}), that is, $\mathcal A^{(P)}_N(\beta,\lambda) = \mathcal A^{(P)}_N (t =1)$.
\end{definition}

\par \medskip
Now the next step is computing the $t$-derivative of the statistical pressure. In this way we can apply the fundamental theorem of calculus and find the solution of the original model. 

\begin{lemma}
\label{lemma:tderRSB_G}
The derivative w.r.t. $t$ of interpolating statistical pressure can be written as 
\begin{equation}
\begin{array}{lll}
      d_t \mathcal{A}^{(P)}_N =&  + \dfrac{\beta'\,^2}{4} \left[ (\theta-1) \langle q_{12}^P \rangle_2 - \theta \langle q_{12}^P \rangle_1 \right] - \frac{(A^{(1)})^2}{2} \left[\frac{1}{N} \sum_i \langle z_i^2 \rangle + (\theta-1) \langle q_{12} \rangle_2 - \theta \langle q_{12} \rangle_1 \right] \\\\
      &- \frac{(A^{(2)})^2}{2} \left[\frac{1}{N} \sum_i \langle z_i^2 \rangle + (\theta-1) \langle q_{12} \rangle_2 \right] - \frac{B}{2N} \sum_i \langle z_i^2 \rangle
\end{array}
\end{equation}
\end{lemma}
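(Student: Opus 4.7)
The plan is to follow the same recursive chain-rule scheme that produced Lemma~\ref{lemma:tderRSB} and Lemma~\ref{lemma:4} in the Ising P-spin setting, combined with the Stein calculus already used in the proof of the RS Gaussian streaming \eqref{eq:streaming_RS_Guerra_G}. Concretely, I would first show, exactly as in \eqref{eqn:partialrA1}, that
\begin{equation*}
\partial_t \mathcal{A}^{(P)}_N \;=\; \frac{1}{N}\,\mathbb{E}_0\mathbb{E}_1\mathbb{E}_2\bigl[\mathcal{W}_2\,\omega(\partial_t \mathcal{H})\bigr],
\end{equation*}
where $\mathcal{H}$ is the interpolating Hamiltonian read off from the exponent in \eqref{def:partfunct_GuerraRSB_G}. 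Its $t$-derivative splits into four pieces: the disordered $P$-body coupling carrying a $1/(2\sqrt{t})$, the deterministic regularizer $-\frac{\beta'\,^2}{4N^{P-1}}\bigl(\sum_i z_i^2\bigr)^P$, the two one-body Gaussian fields each weighted by $-1/(2\sqrt{1-t})$, and the soft-spin mass $-\frac{B}{2}\sum_i z_i^2$.

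Next, I would eliminate the inverse square roots by applying Stein's lemma \eqref{eqn:gaussianrelation2} separately on $J_{\boldsymbol i}$, $\tilde J_i^{(2)}$ and $\tilde J_i^{(1)}$. The $J_{\boldsymbol i}$ piece, by the same algebra as in \eqref{eq:D1_GuerraRS_G} but propagated through the weight $\mathcal{W}_2$, produces $\frac{\beta'\,^2}{4N^P}\bigl(\sum_i \omega(z_i^2)\bigr)^P - \frac{\beta'\,^2}{4}\langle q_{12}^P\rangle_2$; the first summand cancels exactly the contribution of the regularizer to $\partial_t\mathcal{H}$, leaving $-\frac{\beta'\,^2}{4}\langle q_{12}^P\rangle_2$. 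The $\tilde J_i^{(2)}$ piece is entirely contained inside the innermost Gaussian average $\mathbb{E}_2$, so Stein reproduces the RS-type output $-\frac{(A^{(2)})^2}{2}\bigl[\tfrac{1}{N}\sum_i \langle z_i^2\rangle - \langle q_{12}\rangle_2\bigr]$, and the $B$-term contributes directly $-\frac{B}{2N}\sum_i \langle z_i^2\rangle$.

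The genuinely $1$RSB piece, which I expect to be the main technical obstacle, comes from $\tilde J_i^{(1)}$: this field enters $\mathcal{Z}_2$ and hence also the weight $\mathcal{W}_2$, so Stein's lemma produces an additional cross contribution via
\begin{equation*}
\partial_{\tilde J_i^{(1)}} \mathcal{W}_2 \;=\; \theta\,\b\sqrt{1-t}\,A^{(1)}\,\mathcal{W}_2\Bigl(\omega(z_i) - \mathbb{E}_2[\mathcal{W}_2\,\omega(z_i)]\Bigr).
\end{equation*}
Combining this with the standard $\partial_{\tilde J_i^{(1)}}\omega(z_i) = \b\sqrt{1-t}\,A^{(1)}[\omega(z_i^2)-\omega(z_i)^2]$ and reorganizing according to Remark~\ref{rem:medie}, one has to recognize $\mathbb{E}_0\mathbb{E}_1\mathbb{E}_2[\mathcal{W}_2\,\omega(z_i)^2]$ as generating $\langle q_{12}\rangle_2$ and $\mathbb{E}_0\mathbb{E}_1\bigl(\mathbb{E}_2[\mathcal{W}_2\,\omega(z_i)]\bigr)^2$ as generating $\langle q_{12}\rangle_1$; this yields precisely the claimed bracket $-\frac{(A^{(1)})^2}{2}\bigl[\tfrac{1}{N}\sum_i\langle z_i^2\rangle + (\theta-1)\langle q_{12}\rangle_2 - \theta\langle q_{12}\rangle_1\bigr]$, and by the identical mechanism upgrades the $-\frac{\beta'\,^2}{4}\langle q_{12}^P\rangle_2$ obtained above into the full $\frac{\beta'\,^2}{4}[(\theta-1)\langle q_{12}^P\rangle_2 - \theta\langle q_{12}^P\rangle_1]$. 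Assembling the four pieces proves the lemma. Since the remaining manipulations are structurally identical to those worked out in Appendix~\ref{app:tder1RSB} for the Ising case, I would defer the detailed bookkeeping to a parallel appendix.
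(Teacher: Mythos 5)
Your overall plan is exactly the route the paper intends: the paper itself says the Gaussian $1$RSB streaming lemma is proved just as Lemma~\ref{lemma:tderRSB}, whose proof (Appendix~\ref{app:tder1RSB}) is precisely the scheme you outline --- write $\partial_t\mathcal A^{(P)}_N=\tfrac{1}{N}\mathbb E_0\mathbb E_1\mathbb E_2[\mathcal W_2\,\omega(\partial_t\mathcal H)]$, split into pieces, and apply Stein separately to $J_{\bm i}$, $\tilde J_i^{(1)}$, $\tilde J_i^{(2)}$. Your treatment of the $J_{\bm i}$ piece, of the regularizer cancellation, and of the $\tilde J_i^{(1)}$ piece (including the crucial $\partial_{\tilde J_i^{(1)}}\mathcal W_2$ contribution and the identification of $\langle q_{12}\rangle_1$ versus $\langle q_{12}\rangle_2$) are all consistent with the paper.

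However, there is a concrete error in your handling of the $\tilde J_i^{(2)}$ piece. You claim that because $\tilde J_i^{(2)}$ is the innermost averaged variable, Stein ``reproduces the RS-type output'' $-\tfrac{(A^{(2)})^2}{2}\bigl[\tfrac{1}{N}\sum_i\langle z_i^2\rangle-\langle q_{12}\rangle_2\bigr]$, i.e.\ without any $\theta$ dependence. This does not follow: the weight $\mathcal W_2=\mathcal Z_2^\theta/\mathbb E_2[\mathcal Z_2^\theta]$ sits \emph{inside} $\mathbb E_2$, and while its denominator is independent of $\tilde J_i^{(2)}$, its numerator $\mathcal Z_2^\theta$ is not, so
\begin{equation*}
\partial_{\tilde J_i^{(2)}}\mathcal W_2 \;=\; \theta\,\mathcal W_2\,\frac{\partial_{\tilde J_i^{(2)}}\mathcal Z_2}{\mathcal Z_2}\;=\;\theta\,\mathcal W_2\,\sqrt{1-t}\,A^{(2)}\,\omega(z_i)\;\neq\;0.
\end{equation*}
Feeding this into Stein gives the extra piece $-\tfrac{(A^{(2)})^2}{2}\,\theta\langle q_{12}\rangle_2$, so the correct contribution is $-\tfrac{(A^{(2)})^2}{2}\bigl[\tfrac{1}{N}\sum_i\langle z_i^2\rangle+(\theta-1)\langle q_{12}\rangle_2\bigr]$ --- exactly what the lemma asserts and what the analogous $B_3$ term in Appendix~\ref{app:tder1RSB} produces. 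As written, your intermediate answer for the $A^{(2)}$ block would not assemble into the stated formula. Your distinction is also internally inconsistent: by the same reasoning the $J_{\bm i}$ Stein derivative would also miss the $\theta$-term, yet you correctly include it there. The fix is simply to treat $\tilde J_i^{(2)}$ like the others, noting only that $\partial_{\tilde J_i^{(2)}}$ does not see the denominator of $\mathcal W_2$ (hence no $\langle q_{12}\rangle_1$ term), so the coefficient is $(\theta-1)$ rather than $(\theta-1)$ together with $-\theta\langle q_{12}\rangle_1$.
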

Since the proof is similar to SK spin glass case, we omit it. 

\begin{remark}
If we use \eqref{eq:q12_RSB}, in 1RSB assumption we fix the contasts as

\begin{align}
\label{value_psi_G}
    A_1 &= \beta'\,^2 \frac{P}{2} \q_1^{P-1}\,,\\
    A_2 &= \beta'\,^2 \frac{P}{2} (\q_2^{P-1} - \q_1^{P-1})\,,\\
    B &=  - \beta'\,^2 \frac{P}{2} \q_2^{P-1}\,.
\end{align}
Thus, the derivative w.r.t. $t$ in the thermodynamic limit is computed as 
\begin{align}
\label{dtA_1RSB_thermogauss_G}
    &d_t \mathcal{A}^{(P)}_N =\frac{\beta'\,^2}{4} (P-1) \q_2^P- \frac{\beta'\,^2}{4} \theta (P-1)( \q_2^P - \q_1^P)\,.
\end{align}
\end{remark}

\begin{proposition}
At finite size and under 1RSB assumption applying the Fundamental Theorem of Calculus and using the suitable values of $A^{(1)}, A^{(2)},B$, we find the quenched pressure for the Gaussian P-spin glass model as
\begin{equation}
\label{A_1RSB_finiteGauss_G}
\begin{array}{lll}
     \mathcal{A}^{(P)} =& \dfrac{\beta'\,^2}{4} (\theta-1) \left[ \SOMMA{K=2}{P} \langle q_{12}^k \rangle_2 \q_2^{P-k} + (1-P) \q_2^P \right] 
     \\\\&- \dfrac{\beta'\,^2 }{4}\theta\left[  \SOMMA{k=2}{P} \langle q_{12}^k \rangle_1 \q_1^{P-k} + (1-P) \q_1^P\right] -\dfrac{1}{2}\log \left( 1-\lambda + \beta'\,^2 \frac{P}{2} \q_2^{P-1} \right) 
    \\\\
    &+ \dfrac{1}{2\theta} \log \left( \dfrac{ 1-\lambda + \beta'\,^2 \frac{P}{2} \q_2^{P-1}}{ 1-\lambda + \beta'\,^2 \frac{P}{2} \q_2^{P-1} - \beta'\,^2 \frac{P}{2} \theta (\q_2^{P-1} - \q_1^{P-1})}\right)  
    \\\\
    &+\dfrac{\beta'\,^2 P}{4}\dfrac{\q_1^{P-1}}{ 1-\lambda + \beta'\,^2 \frac{P}{2} \q_2^{P-1} - \beta'\,^2 \theta \frac{P}{2}(\q_2^{P-1} - \q_1^{P-1})}
\end{array}
\end{equation}
\end{proposition}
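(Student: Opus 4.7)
The plan is to mirror the strategy used for the 1RSB case of the P-spin glass model (Proposition after Lemma~\ref{lemma:tderRSB}) and adapt it to the Gaussian spins. Concretely, I apply the fundamental theorem of calculus
\[
\mathcal{A}^{(P)}=\mathcal{A}^{(P)}(t=1)=\mathcal{A}^{(P)}(t=0)+\int_0^1 \partial_s \mathcal{A}^{(P)}(s)\,ds,
\]
so that the proof splits into two independent pieces: the streaming (already packaged by Lemma~\ref{lemma:tderRSB_G}) and the one-body Cauchy term at $t=0$.

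For the streaming term, I would plug the choices of $A^{(1)}, A^{(2)}, B$ given in \eqref{value_psi_G} into the expression supplied by Lemma~\ref{lemma:tderRSB_G}. At finite $N$ the 1RSB relation \eqref{eq:q12_RSB} for $\langle q_{12}^P\rangle_a$ is to be used to expose the fluctuation terms $\langle(\Delta q_a)^k\rangle_a \bar q_a^{P-k}$ that appear in the statement; the terms proportional to $\langle q_{12}\rangle_a$ and $\frac1N\sum_i\langle z_i^2\rangle$ that \emph{would} survive from $d_t\mathcal{A}^{(P)}_N$ cancel exactly against pieces coming from $\frac{1}{N}\sum_i\langle z_i^2\rangle$ because of the specific choice $B=-A_1^2-A_2^2$, leaving only the contributions written in the first line of \eqref{A_1RSB_finiteGauss_G}. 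Integrating in $t$ is trivial since the result is $t$-independent.

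The one-body Cauchy term $\mathcal{A}^{(P)}(t=0)$ is the computational heart of the proof. At $t=0$ the Hamiltonian reduces to a quadratic form in the soft spins, so $\mathcal{Z}_2(0)$ factorizes over sites. Each single-site Gaussian integral in $z_i$ yields a factor $(1-\lambda-B)^{-1/2}\exp\!\bigl[(A^{(1)}\tilde J_i^{(1)}+A^{(2)}\tilde J_i^{(2)})^2/(2(1-\lambda-B))\bigr]$. I then raise this to the $\theta$-th power and take $\mathbb{E}_2$: the remaining integral over $\tilde J_i^{(2)}$ is again Gaussian and produces both a determinant factor $(1-\frac{\theta (A^{(2)})^2}{1-\lambda-B})^{-1/2}$ and an effective quadratic form in $\tilde J_i^{(1)}$. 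Taking the logarithm, dividing by $\theta$, and then evaluating $\mathbb{E}_1$ of the remaining $(\tilde J_i^{(1)})^2$ (which averages to $1$) produces precisely the three terms in the last two lines of \eqref{A_1RSB_finiteGauss_G} once the assignments \eqref{value_psi_G} are substituted in.

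The main obstacle is bookkeeping in the one-body computation: the two nested Gaussian integrals over $z_i$ and $\tilde J_i^{(2)}$ produce a ratio of quantities of the form $1-\lambda+\beta'^{\,2}\frac{P}{2}\bar q_2^{P-1}$ and $1-\lambda+\beta'^{\,2}\frac{P}{2}\bar q_2^{P-1}-\theta\beta'^{\,2}\frac{P}{2}(\bar q_2^{P-1}-\bar q_1^{P-1})$, and one has to track signs and factors of $2$ carefully so that the logarithmic determinants and the exponential contribution line up with the ratio and the last additive term of \eqref{A_1RSB_finiteGauss_G}. Apart from this algebraic care no new ideas beyond those already present in the paper are needed; Stein's lemma, the recursive definition \eqref{eqn:Z1_G_trans}--\eqref{eqn:Z0_G:trans}, and a standard Gaussian integral calculation suffice.
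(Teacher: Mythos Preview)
Your proposal is correct and follows essentially the same route as the paper: apply the fundamental theorem of calculus, take the streaming term from Lemma~\ref{lemma:tderRSB_G} with the constants \eqref{value_psi_G} (your observation that $B=-(A^{(1)})^2-(A^{(2)})^2$ kills the $\frac1N\sum_i\langle z_i^2\rangle$ terms is exactly what makes this work), and compute the one-body term $\mathcal{A}^{(P)}(t=0)$ by the nested Gaussian integrals you describe. The paper's proof is in fact terser than yours---it simply states ``computing the Gaussian integral'' and records the result \eqref{eq:one body_soft}---so your more detailed outline of the $z_i\to\tilde J_i^{(2)}\to\tilde J_i^{(1)}$ integration chain is a welcome elaboration rather than a departure.
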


\begin{proof}
We use the fundamental theorem of Calculus \eqref{eq:F_T_Calculus}. The last step we need is the computation of one-body term. Computing the Gaussian integral, we have
\begin{equation}
\label{eq:one body_soft}
\begin{array}{lll}
     \mathcal{A}^{(P)}(t=0)=& -\dfrac{1}{2}\log \left( 1-\lambda + \beta'\,^2 \frac{P}{2} \q_2^{P-1} \right) 
     \\\\
     &+\dfrac{1}{2\theta} \log \left( \dfrac{ 1-\lambda + \beta'\,^2 \frac{P}{2} \q_2^{P-1}}{ 1-\lambda + \beta'\,^2 \frac{P}{2} \q_2^{P-1} - (\b)^2 \frac{P}{2} \theta (\q_2^{P-1} - \q_1^{P-1})}\right)
     \\\\
    &+\dfrac{\beta'\,^2 P}{4}\dfrac{\q_1^{P-1}}{ 1-\lambda + \beta'\,^2 \frac{P}{2} \q_2^{P-1} - \beta'\,^2 \theta \frac{P}{2}(\q_2^{P-1} - \q_1^{P-1})}\,.
\end{array}
\end{equation}
In this way, we obtain the thesis. 
\end{proof}

Now we have the following
\begin{theorem}
In the thermodynamic limit, under 1RSB assumption, the expression of quenched statistical pressure for Gaussian P-spin glass model \eqref{eq:G_BareZ_1} is
\begin{equation}
    \begin{array}{lll}
         \mathcal{A}^{(P)}(\beta,\lambda)=& \dfrac{\beta'\,^2}{4} (P-1) \q_2^P- \dfrac{\beta'\,^2}{4} \theta (P-1)( \q_2^P - \q_1^P)
         \\\\
         &-\dfrac{1}{2}\log \left( 1-\lambda + \beta'\,^2 \frac{P}{2} \q_2^{P-1} \right) 
         \\\\
    &+ \dfrac{1}{2\theta} \log \left( \dfrac{ 1-\lambda + \beta'\,^2 \frac{P}{2} \q_2^{P-1}}{ 1-\lambda + \beta'\,^2 \frac{P}{2} \q_2^{P-1} - (\b)^2 \frac{P}{2} \theta (\q_2^{P-1} - \q_1^{P-1})}\right) 
    \\\\
    &+\dfrac{\beta'\,^2 P}{4}\dfrac{\q_1^{P-1}}{ 1-\lambda + \beta'\,^2 \frac{P}{2} \q_2^{P-1} - \beta'\,^2 \theta \frac{P}{2}(\q_2^{P-1} - \q_1^{P-1})}
    \end{array}
    \label{A_1RSB_finalissima_G}
\end{equation}
where the order parameters are ruled by the implicit equations
\begin{equation}
\footnotesize
    \begin{array}{lll}
         \bar{q}_1 = \dfrac{\beta'\,^2 \frac{P}{2} \q_1^{P-1}}{\left(1-\lambda + \beta'\,^2 \frac{P}{2} \q_2^{P-1} - \beta'\,^2\theta \frac{P}{2} (\q_2^{P-1}-\q_1^{P-1})\right)^2}
         \\\\
\bar{q}_2 = \q_1 + \dfrac{\beta'\,^2 \frac{P}{2} \left(\q_2^{P-1}-\q_1^{P-1}\right)}{\left(1-\lambda + \beta'\,^2 \frac{P}{2} \q_2^{P-1} - \beta'\,^2\theta \frac{P}{2} \left(\q_2^{P-1}-\q_1^{P-1}\right)\right)\left( 1-\lambda + \beta'\,^2 \frac{P}{2} \q_2^{P-1}\right)}
    \end{array}
    \label{eqn:self1rsbgauss}
\end{equation}
\end{theorem}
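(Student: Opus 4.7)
The plan is to start from the finite-size 1RSB pressure in \eqref{A_1RSB_finiteGauss_G}, take the thermodynamic limit using the 1RSB self-averaging hypothesis (the analogue of Remark \ref{r:above1} for the Gaussian model, i.e.\ $\langle(q_{12}-\bar q_a)^k\rangle_a \to 0$ for $k\ge 2$ and $a=1,2$), and then extremize the resulting functional with respect to $\bar q_1$ and $\bar q_2$ to obtain the self-consistency equations.

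First I would rewrite $\langle q_{12}^k\rangle_a$ using the decomposition \eqref{eq:q12_RSB}: in the thermodynamic limit every fluctuation term $\langle(\Delta q_a)^k\rangle_a$ vanishes, so $\langle q_{12}^k\rangle_a \to \bar q_a^k$, and in particular $\sum_{k=2}^P \langle q_{12}^k\rangle_a \bar q_a^{P-k}$ reduces cleanly. Combining this with the coefficient $(1-P)\bar q_a^P$ already present in \eqref{A_1RSB_finiteGauss_G}, the bracketed factors simplify to $(P-1)\bar q_2^P$ and $(P-1)\bar q_1^P$ in the two lines, producing the prefactor $\tfrac{\beta'^2}{4}(P-1)\bar q_2^P - \tfrac{\beta'^2}{4}\theta(P-1)(\bar q_2^P-\bar q_1^P)$ that appears in \eqref{A_1RSB_finalissima_G}. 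The one-body/logarithmic block of \eqref{A_1RSB_finiteGauss_G} is $\bar q_a$-dependent only through the values $\bar q_1,\bar q_2$ and so is left unchanged by the limit.

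Next I would derive the self-consistency equations \eqref{eqn:self1rsbgauss} by imposing $\partial_{\bar q_1}\mathcal{A}^{(P)}=0$ and $\partial_{\bar q_2}\mathcal{A}^{(P)}=0$ on \eqref{A_1RSB_finalissima_G}. For the $\bar q_1$ equation, only the terms $-\tfrac{\beta'^2}{4}\theta(P-1)(\bar q_2^P-\bar q_1^P)$, the inner logarithm, and the last (ratio) term depend on $\bar q_1$; differentiating and multiplying through by the common positive denominator yields, after algebraic simplification, the stated expression for $\bar q_1$. Similarly, differentiating with respect to $\bar q_2$ — keeping track of the $-\tfrac12\log(1-\lambda+\beta'^2\tfrac{P}{2}\bar q_2^{P-1})$ piece, the logarithmic ratio and the last term — produces the implicit relation for $\bar q_2$. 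A clean way to handle this step is to isolate the factor $(P-1)\bar q_a^{P-2}$ that appears in every derivative, so it cancels out of the extremality conditions and leaves the displayed rational formulas.

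The main obstacle will be purely algebraic: the derivative of the inner logarithm in \eqref{A_1RSB_finalissima_G} produces several terms involving both $\bar q_1^{P-2}$ and $\bar q_2^{P-2}$, and one must verify that the pieces combine so that the $\bar q_1$ equation involves only the single denominator $\bigl(1-\lambda+\beta'^2\tfrac{P}{2}\bar q_2^{P-1}-\beta'^2\theta\tfrac{P}{2}(\bar q_2^{P-1}-\bar q_1^{P-1})\bigr)^2$, while the $\bar q_2$ equation telescopes into the product of the two denominators appearing in \eqref{eqn:self1rsbgauss}. This cancellation is the nontrivial content of the theorem; the rest is bookkeeping that parallels the RS derivation in Theorem \ref{G_P_quenched}.
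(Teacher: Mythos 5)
Your route matches the paper's: the paper plugs the thermodynamic-limit $t$-derivative \eqref{dtA_1RSB_thermogauss_G} and the one-body term \eqref{eq:one body_soft} into the Fundamental Theorem of Calculus and then extremizes, which is the same as taking $N\to\infty$ in \eqref{A_1RSB_finiteGauss_G} and extremizing as you propose. One small slip to flag: given the coefficients $\tfrac{\beta'^2}{4}(\theta-1)$ and $-\tfrac{\beta'^2}{4}\theta$ in front, the two brackets must reduce to $(1-P)\q_a^P$ (the fluctuation sums vanish and only the $(1-P)\q_a^P$ piece survives), \emph{not} $(P-1)\q_a^P$ as you wrote — otherwise the resulting polynomial contribution has the opposite sign to the one you then correctly quote.
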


\begin{proof}

If we use \eqref{dtA_1RSB_thermogauss_G} and \eqref{eq:one body_soft} in \eqref{eq:F_T_Calculus} we recover \eqref{A_1RSB_finalissima_G}.

Extremizing the statistical pressure in \eqref{A_1RSB_finalissima_G} w.r.t. the order parameters we find the self-consistency equations.
\end{proof}

\begin{remark}
In the case of pairwise interaction ($P=2$) the 1RSB solution coincide with the RS one, as shown in Section \ref{sec:G_P2}. 

Moreover, we recover the same expression in \cite{crisanti}; we remind to it for the study of solutions behaviour of order parameters \eqref{eqn:self1rsbgauss}. 
\end{remark}


\subsection{Resolution via transport equation}
Now the aim is to show that, using the transport equation technique, already introduced in Section \ref{sec:transport1} for the RS assumption and for 1RSB in P-spin glass network, we are able to retrieve the same expression of statistical pressure above found by Guerra's Interpolation. 

\subsubsection{RS solution} \label{ssec:HRS_G}
As previous presented in Section \ref{sec:transport1}, now we introduce the interpolating partitions function for the Gaussian P-spin model.
\begin{definition} 
Given the interpolating parameter $t, x, w$, and $\tilde{J}_i \sim \mathcal{N}(0,1)$  standard Gaussian variables iid, the partition function is given as 
\begin{equation}
\footnotesize
\begin{array}{lll}
     \mathcal{Z}^{(P)}_N(t,\boldsymbol{r}) &\coloneqq& \mathbb{E}_z\exp{}\Bigg[\b\sqrt{t}\sqrt{\dfrac{1}{2N^{P-1}}}\SOMMA{i_1,\cdots,i_P=1}{N,\cdots,N}J_{i_1,\cdots,i_P}z_{i_1}\cdots z_{i_P}-t\dfrac{\beta'\,^2}{4N^{P-1}}\left(\SOMMA{i=1}{N}z_i^2\right)^P
     \\\\
     & &+ \sqrt{x}\SOMMA{i=1}{N}\tilde{J}_i z_i+\dfrac{w}{2}\SOMMA{i=1}{N}z_i^2+\dfrac{\lambda}{2}\SOMMA{i=1}{N}z_i^2\Bigg]\,,
     \label{def:partfunct_transpRS_G}
\end{array}
\end{equation}
where we set $\beta'=2 \beta/P!$.
\end{definition}
Similar to RS Guerra's interpolation, we can define the interpolating pressure, the Boltzmann factor and the generalized measure. 

\begin{lemma} 
The partial derivatives of the interpolating pressure (\ref{G_GuerraAction}) w.r.t. $t,x,w$ give the following expectation values:
	\bea
	\label{G_expvalsa}
	\frac{\partial \mathcal{A}^{(P)}_N}{\partial t} &=& -\dfrac{\beta'\,^2}{4}\l q_{12}^P\r,
	\\
	\label{G_expvalsmiddle}
	\frac{\partial \mathcal{A}^{(P)}_N}{\partial x}  &=& \dfrac{1}{2N}\SOMMA{i=1}{N}\l z_i^2\r-\dfrac{1}{2}\l q_{12}\r,\\
	\frac{\partial \mathcal{A}^{(P)}_N}{\partial w}  &=&  \dfrac{1}{2N}\SOMMA{i=1}{N}\l z_i^2\r.
	\label{G_expvalsb}
	\eea
\end{lemma}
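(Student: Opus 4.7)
The plan is to differentiate the finite-$N$ interpolating pressure $\mathcal{A}^{(P)}_N(t,\bm r)=\frac{1}{N}\mathbb{E}\ln\mathcal{Z}^{(P)}_N(t,\bm r)$ directly in each variable, commute the derivative with $\mathbb{E}$, and apply Stein's lemma \eqref{eqn:gaussianrelation2} to every Gaussian disorder variable that appears paired with a Boltzmann average $\omega(\cdot)$. The calculation is essentially the soft-spin analogue of the RS streaming computations already carried out for the hard-spin P-spin glass, with two distinctive features: the soft-spin measure $d\mu(z)$ in place of $\sum_{\bm\sigma}$ (so $\langle z_i^2\rangle$ is not identically $1$ and has to be kept explicit), and the presence of the quartic-in-overlap regularizer $-t\,\frac{\beta'^{\,2}}{4N^{P-1}}(\sum_i z_i^2)^P$ which produces an extra contribution to $\partial_t$.

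I would start with the easiest derivative, $\partial_w$: bringing the derivative inside gives $\partial_w\ln\mathcal{Z}_N=\frac{1}{2}\sum_i\omega(z_i^2)$, whence \eqref{G_expvalsb} follows at once. For $\partial_x$ I would differentiate the coupling term and obtain $\frac{1}{2\sqrt{x}}\sum_i\tilde{J}_i\,\omega(z_i)$; Stein's lemma on $\tilde{J}_i$ then gives $\mathbb{E}[\tilde{J}_i\,\omega(z_i)]=\sqrt{x}\,\mathbb{E}[\omega(z_i^2)-\omega(z_i)^2]$, and interpreting $\mathbb{E}\omega(z_i)^2$ as a two-replica Boltzmann expectation $\mathbb{E}\omega(z_i^{(1)}z_i^{(2)})$ assembles the result into $\frac{1}{2N}\sum_i\langle z_i^2\rangle-\frac{1}{2}\langle q_{12}\rangle$, which is \eqref{G_expvalsmiddle}.

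The $t$-derivative is the only step with real content. Direct differentiation produces two pieces: the stochastic-coupling piece $\frac{\beta'}{2N\sqrt{t}}\sqrt{\tfrac{1}{2N^{P-1}}}\sum_{\bm i}\mathbb{E}[J_{\bm i}\,\omega(z_{i_1}\cdots z_{i_P})]$, and the regularization piece $-\frac{\beta'^{\,2}}{4N^P}\mathbb{E}\bigl[\omega\bigl((\sum_i z_i^2)^P\bigr)\bigr]$. Applying Stein's lemma to $J_{\bm i}$ in the first piece yields a ``diagonal'' contribution $+\frac{\beta'^{\,2}}{4N^P}\sum_{\bm i}\mathbb{E}[\omega((z_{i_1}\cdots z_{i_P})^2)]$ and an ``off-diagonal'' two-replica one $-\frac{\beta'^{\,2}}{4N^P}\sum_{\bm i}\mathbb{E}[\omega(z_{i_1}\cdots z_{i_P})^2]$. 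The decisive observation—and the very reason the regularizer was introduced in Definition \ref{G_BareZ}—is the identity $(z_{i_1}\cdots z_{i_P})^2=z_{i_1}^2\cdots z_{i_P}^2$, which implies $\sum_{\bm i}\omega((z_{i_1}\cdots z_{i_P})^2)=\omega\bigl((\sum_i z_i^2)^P\bigr)$; thus the diagonal Stein contribution cancels exactly against the regularization piece. The surviving off-diagonal term collapses via $\sum_{\bm i}\omega(z_{i_1}\cdots z_{i_P})^2=N^P\,\omega(q_{12}^P)$ into $-\frac{\beta'^{\,2}}{4}\langle q_{12}^P\rangle$, giving \eqref{G_expvalsa}.

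The main obstacle, if any, is bookkeeping the index-sum identity that makes the diagonal Stein term coincide with the regularization derivative; everything else is the routine Gaussian-integration-by-parts calculation already performed in the P-spin glass section, and I would invoke that analogy rather than repeat the algebra.
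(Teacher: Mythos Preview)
Your proposal is correct and follows essentially the same approach as the paper: direct differentiation, Stein's lemma on the Gaussian disorder variables, and, for the $t$-derivative, the key cancellation between the diagonal Stein contribution $\sum_{\bm i}\omega\bigl((z_{i_1}\cdots z_{i_P})^2\bigr)=\omega\bigl((\sum_i z_i^2)^P\bigr)$ and the derivative of the regularizer. The paper only writes out the $t$-derivative and declares the $x$- and $w$-derivatives analogous; your treatment of those two is exactly what the analogy amounts to.
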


\begin{proof}
We prove only the derivatives w.r.t. $t$, as long as the procedure is analogous for each parameter. The  partial  derivative  of  the  interpolating  quenched  pressure  with respect to $t$ reads as 
\begin{equation}
    \begin{array}{lll}
           \dfrac{\partial \mathcal{A}^{(P)}_N}{\partial t}&=&\dfrac{\b }{2N\sqrt{t}}\sqrt{\dfrac{1}{2N^{P-1}}}\sum\limits_{\boldsymbol{i}}\mathbb{E}\left[\omega(z_{i_1}\cdots z_{i_{P}})\right]-\dfrac{\beta'^2}{4N^P}\mathbb{E}\left[\omega\left(\SOMMA{i=1}{N}z_i^2\right)^P\right]\,.
    \end{array}
\end{equation}
Now, using the Stein's lemma \eqref{eqn:gaussianrelation2} on standard Gaussian variable $z_{i_1,\cdots,i_P}$
\begin{equation}
    \footnotesize
    \begin{array}{lll}
           \dfrac{\partial \mathcal{A}^{(P)}_N}{\partial t}&=& \dfrac{\beta'\,^2}{4N^{^{P}}}\left\lbrace\mathbb{E}\left[\omega((z_{i_1}\cdots z_{i_{P}})^2)\right]-\mathbb{E}\left[\omega(z_{i_1}\cdots z_{i_{P}})^2\right]-\mathbb{E}\left[\omega\left(\SOMMA{i=1}{N}z_i^2\right)^P\right]\right\rbrace-\dfrac{\beta'\,^2}{4}\l\qq^{P}\r\,.
    \end{array}
\end{equation}
\end{proof}

\begin{proposition} \label{prop:interp_transp_RS_G}
The interpolating pressure \eqref{def:partfunct_transpRS_G} at finite size obeys the following differential equation:
	\begin{equation}
	\frac{d \mathcal{A}^{(P)}_N}{dt} =  \pder{\mathcal{A}_N^{(P)}}{t} + \dot{x} \pder{\mathcal{A}^{(P)}_N}{x} +\dot w \pder{\mathcal{A}^{(P)}_N}{w}= S(t, \bm r)+V_N(t, \bm r),
	\label{hop_GuerraAction_DE_G}
	\end{equation}
	where we set 
	\begin{equation}
	    \begin{array}{lll}
	         \dot x = -\dfrac{P}{2}\beta'\,^2\q^{P-1}\,, && \dot w = -\dot x
	    \end{array}
	    \label{eq:deriv_r_trasp_RS_G}
	\end{equation}
	and
	\begin{equation}
	\begin{array}{lll}
	     S(t, \bm r) &\coloneqq& \dfrac{\beta'\,^2}{4}(P-1)\q^P,
	\end{array}
	\end{equation}
	\begin{equation}
	\begin{array}{lll}
	     V_N(t, \bm r) &\coloneqq& \dfrac{\beta'\,^2}{4}\SOMMA{k=2}{P}\begin{pmatrix}
	    P\\k
	\end{pmatrix} \q^{P-k}\l(\Delta q_{12})^k\r.
	\end{array}
	\label{potenziale-RS-G}
	\end{equation}
	\end{proposition}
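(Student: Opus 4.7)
The plan is to reproduce for the Gaussian $P$-spin the same reorganization of the total $t$-derivative performed in Proposition \ref{prop:interp_transp_RS} for the SK-like $P$-spin glass, but now adapted to the soft-spin partial derivatives computed in the preceding lemma. Concretely, I would start by writing out
\begin{equation*}
\frac{d \mathcal{A}^{(P)}_N}{dt} = \pder{\mathcal{A}^{(P)}_N}{t} + \dot x \,\pder{\mathcal{A}^{(P)}_N}{x} + \dot w \,\pder{\mathcal{A}^{(P)}_N}{w},
\end{equation*}
and substitute the three expressions \eqref{G_expvalsa}-\eqref{G_expvalsb}. The unknowns to be determined are the two velocities $\dot x, \dot w$ and the split of the right-hand side into $S(t,\bm r)+V_N(t,\bm r)$.

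The first key step is to apply the moment expansion \eqref{potential_pq}, which holds verbatim for $\langle q_{12}^P\rangle$ in the Gaussian model since it is a purely algebraic identity for centred moments around $\bar q$. This rewrites $\partial_t \mathcal{A}^{(P)}_N = -\tfrac{\beta'^2}{4}\langle q_{12}^P\rangle$ as the sum of three pieces: (i) the higher-order centred-moment sum $\tfrac{\beta'^2}{4}\sum_{k=2}^P \binom{P}{k}\bar q^{P-k}\langle(\Delta q_{12})^k\rangle$, which I will identify with $V_N(t,\bm r)$ up to an overall sign, (ii) a pure function of $\bar q$, namely $\tfrac{\beta'^2}{4}(P-1)\bar q^P$, which is the candidate source $S(t,\bm r)$, and (iii) a residual term linear in $\langle q_{12}\rangle$, proportional to $\bar q^{P-1}\langle q_{12}\rangle$.

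The second key step, which is really the only nontrivial algebraic observation, is that $\langle q_{12}\rangle$ can be extracted from the spatial derivatives by
\begin{equation*}
\langle q_{12}\rangle = 2\bigl(\pder{\mathcal{A}^{(P)}_N}{w} - \pder{\mathcal{A}^{(P)}_N}{x}\bigr),
\end{equation*}
because the awkward terms $\tfrac{1}{2N}\sum_i \langle z_i^2\rangle$ in \eqref{G_expvalsmiddle} and \eqref{G_expvalsb} are identical and cancel in the difference. This is precisely the point at which the Gaussian model simplifies relative to SK: there is no magnetization field so the combination $\partial_w-\partial_x$ isolates the overlap. Matching the residual linear piece against $\dot x\, \partial_x\mathcal{A}^{(P)}_N + \dot w\, \partial_w\mathcal{A}^{(P)}_N$ then forces the choice $\dot x = -\tfrac{P}{2}\beta'^2 \bar q^{P-1}$ and $\dot w = -\dot x$, both to cancel the $\langle q_{12}\rangle$ coefficient and to guarantee that the $\sum_i \langle z_i^2\rangle$ terms drop out of the final equation. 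With these velocities installed, what remains on the right-hand side is exactly $S(t,\bm r)+V_N(t,\bm r)$ (with the sign convention used throughout the paper).

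The main obstacle is sign bookkeeping in the central-moment expansion and verifying that the unwanted $\tfrac{1}{2N}\sum_i \langle z_i^2\rangle$ contributions really do cancel under the prescription $\dot w=-\dot x$; no new analytic idea beyond the SK argument is needed, since all Gaussian-integration-by-parts (Stein's lemma) work has already been absorbed into the previous lemma.
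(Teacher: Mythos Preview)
Your proposal is correct and follows essentially the same route as the paper's proof: expand $\partial_t\mathcal{A}^{(P)}_N=-\tfrac{\beta'^2}{4}\langle q_{12}^P\rangle$ via \eqref{potential_pq}, then use the identity $\tfrac12\langle q_{12}\rangle=\partial_w\mathcal{A}^{(P)}_N-\partial_x\mathcal{A}^{(P)}_N$ (which is exactly how the paper absorbs the residual linear term into the spatial derivatives) to read off $\dot x$ and $\dot w=-\dot x$. Your observation that the $\tfrac{1}{2N}\sum_i\langle z_i^2\rangle$ pieces cancel in this difference is precisely the mechanism the paper relies on, and your caveat about the overall sign of $V_N$ is well taken given the paper's own sign conventions.
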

	
	\begin{proof}
	Starting to evaluate explicitly $\dt \mathcal{A}_N$ by using (  \eqref{G_expvalsmiddle} - \eqref{G_expvalsb} and Definition \eqref{potential_m1}-\eqref{potential_pq}) we write
\begin{equation}
\footnotesize
    \begin{array}{lll}
         \dfrac{\partial}{\partial t}\mathcal{A}^{(P)}_N=-\dfrac{\beta'\,^2 }{4}\left(\SOMMA{k=2}{P} \begin{pmatrix}P\\k\end{pmatrix} \langle (q_{12}-\q)^k \rangle \q^{P-k} -\q^P (1-P) - P\q^{P-1}\langle q_{12} \rangle\right) 
         \\\\
         =V_N(t,\boldsymbol{r})+S(t,\boldsymbol{r})+\dfrac{\beta'\,^2}{2}P\q^{P-1}\dfrac{1}{2} \langle q_{12} \rangle
         \\\\
         =V_N(t,\boldsymbol{r})+S(t,\boldsymbol{r}) +\dfrac{\beta'\,^2}{2}P\q^{P-1}\Big(\frac{\partial \mathcal{A}^{(P)}_N}{\partial w} \Big)-\dfrac{\beta'\,^2 J}{2}P\q^{P-1}\Big(\frac{\partial \mathcal{A}^{(P)}_N}{\partial x} \Big)  
    \end{array}
\end{equation}
Thus, by placing $\dot{\boldsymbol{r}}=(\dot{x},\dot{w})$ as in \eqref{eq:deriv_r_trasp_RS_G} we reach the thesis.
	\end{proof}

	\begin{remark}
	\label{remark:Vthermolim_G}
	As shown for the P-spin glass model in the RS assumption (Remark \ref{remark:Vthermolim}), even for the Gaussian P-spin glass model, we can say that $V_N$, defined in \eqref{potenziale-RS-G}, vanishes. 
	\end{remark}
	
	\begin{proposition}
The transport equation associated to the interpolating pressure $\mathcal{A}_N(t, \bm r)$ in the thermodynamic limit and under the RS assumption is
	\begin{equation}
	\footnotesize
	\begin{array}{lll}
	     \pder{\mathcal{A}^{(P)}_{\textrm{RS}}}{t}+\dfrac{\beta'\,^2}{2}P\q^{P-1} \left(\pder{\mathcal{A}^{(P)}_{\textrm{RS}}}{x}\right)-\dfrac{\beta'\,^2}{2}P\q^{P-1} \left(\pder{\mathcal{A}^{(P)}_{\textrm{RS}}}{w}\right)= \dfrac{\beta'\,^2}{4}(P-1)\q^P,
	\end{array}
	\label{G_GuerraAction_RSDE}
	\end{equation}
	whose solution is given by
\begin{equation}
\small
\begin{array}{lll}
     \mathcal A^{(P)}_{\rm RS}(t,\boldsymbol{r})=&\frac{1}{2}-\ln{\left[1-\lambda-(w-\frac{P}{2}\b\q^{P-1}t)\right]}+\dfrac{x+\frac{P}{2}\beta'\,^2\q^{P-1}t}{2\left(1-\lambda-(w-\frac{P}{2}\b\q^{P-1}t)\right)}
     \\\\
     \textcolor{white}{A^{(P)}_{\rm RS}(t,\boldsymbol{r})=}&+\dfrac{\beta'\,^2}{4}t(P-1)\q^P\,.
\end{array}
\label{G_mechanicalsolution}
\end{equation}

\end{proposition}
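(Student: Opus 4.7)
The plan is to follow the exact same strategy used for the P-spin glass in Corollary \ref{cor:A_RS_trans}, adapted to the soft-spin case. First, I would obtain the transport equation \eqref{G_GuerraAction_RSDE} as a direct consequence of Proposition \ref{prop:interp_transp_RS_G}: one simply combines \eqref{hop_GuerraAction_DE_G} with the explicit velocities \eqref{eq:deriv_r_trasp_RS_G}, and invokes Remark \ref{remark:Vthermolim_G} to discard $V_N(t, \bm r)$ in the thermodynamic limit (the higher central moments of $q_{12}$ vanish under the RS hypothesis of Definition \ref{defn: RSassumption}).

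Next, I would solve the resulting first-order linear PDE by the method of characteristics. Since $\dot{x}$ and $\dot{w}$ depend only on the equilibrium value $\bar{q}$ and not on $(t,\bm r)$, the characteristic curves are straight lines, and along them
\begin{equation}
\mathcal{A}^{(P)}_{\mathrm{RS}}(t,\bm r) = \mathcal{A}^{(P)}_{\mathrm{RS}}(0,\bm r_0) + S(t,\bm r)\, t,
\end{equation}
where $x_0 = x + \tfrac{P}{2}\beta'^{\,2}\q^{P-1} t$ and $w_0 = w - \tfrac{P}{2}\beta'^{\,2}\q^{P-1} t$ by integrating \eqref{eq:deriv_r_trasp_RS_G}; the source $S(t,\bm r) = \tfrac{\beta'^{\,2}}{4}(P-1)\q^P$ is $t$-independent, so the integral of $S$ along the characteristic just becomes $S\cdot t$.

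The only genuine calculation is the Cauchy datum $\mathcal{A}^{(P)}_{\mathrm{RS}}(0,\bm r_0)$. At $t=0$ both the disordered $P$-wise coupling and the quartic regulator $-t\tfrac{\beta'^{\,2}}{4N^{P-1}}(\sum_i z_i^2)^P$ in \eqref{def:partfunct_transpRS_G} drop out, so the partition function decouples into $N$ independent Gaussian integrals
\begin{equation}
\int\frac{dz_i}{\sqrt{2\pi}}\exp\!\left[-\tfrac{1-\lambda-w_0}{2}z_i^2 + \sqrt{x_0}\,\tilde J_i z_i\right]
= \frac{1}{\sqrt{1-\lambda-w_0}}\exp\!\left[\frac{x_0\,\tilde J_i^2}{2(1-\lambda-w_0)}\right].
\end{equation}
Taking $\tfrac{1}{N}\mathbb{E}\ln$ and using $\mathbb{E}[\tilde J_i^2]=1$ yields
\begin{equation}
\mathcal{A}^{(P)}_{\mathrm{RS}}(0,\bm r_0) = -\tfrac{1}{2}\ln\bigl(1-\lambda-w_0\bigr) + \frac{x_0}{2(1-\lambda-w_0)}.
\end{equation}
Substituting the characteristic relations for $x_0, w_0$ and adding $S\cdot t$ produces the expression \eqref{G_mechanicalsolution}.

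I do not expect any serious obstacle. The computation of the Cauchy datum and the integration along characteristics are routine, and the only point requiring attention is verifying that the regulator term indeed suppresses the otherwise divergent Gaussian integral for $t>0$ (which is handled implicitly by Definition \ref{G_BareZ}) and vanishes at the boundary $t=0$ where the one-body computation is carried out. The consistency check that setting $t=1$, $\bm r = \bm 0$ reproduces the Guerra result \eqref{eq:pressure_GuerraRS_G} of Theorem \ref{G_P_quenched} is immediate.
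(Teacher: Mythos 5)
Your proof is correct and follows essentially the same route as the paper: discard $V_N$ via Remark \ref{remark:Vthermolim_G} to obtain the transport equation from Proposition \ref{prop:interp_transp_RS_G}, integrate along the linear characteristics determined by the constant velocities \eqref{eq:deriv_r_trasp_RS_G}, and evaluate the Cauchy datum at $t=0$ by an explicit single-site Gaussian integral yielding $-\tfrac{1}{2}\ln(1-\lambda-w_0)+\tfrac{x_0}{2(1-\lambda-w_0)}$. (Incidentally, your Cauchy datum makes clear that the first two terms of \eqref{G_mechanicalsolution} as printed in the paper, $\tfrac{1}{2}-\ln[\cdots]$, contain a typo for $-\tfrac{1}{2}\ln[\cdots]$, and that the $\b$ appearing inside the logarithm and denominator should read $\beta'^{\,2}$; your version matches the proof and the Guerra result \eqref{eq:pressure_GuerraRS_G}.)
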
	

\begin{proof}
We can find the transport equation applying Remark \eqref{remark:Vthermolim_G} in Prop. \eqref{prop:interp_transp_RS_G}. 

We compute the solution using the characteristic method on the transport equation: 
\begin{align}
    \mathcal{A}^{(P)}_{RS}(t, \bm r) = \mathcal{A}^{(P)}_{RS}(0, \bm r-\dot{\boldsymbol{r}}t) + S(t, \bm r)t.
\end{align}
where $\dot{\boldsymbol{r}}=(\dot{x},\dot{w})$. Along the characteristics, the fictitious motion in the $(t,\boldsymbol{r})$ time-space is linear and returns
\begin{equation}
    \begin{array}{lll}
         x=x_0-\dfrac{P}{2}\beta'\,^2\q^{P-1}t &&  w=w_0+\dfrac{P}{2}\beta'\,^2 \q^{P-1}t
    \end{array}
\end{equation}
where $\boldsymbol{r}_0=(x_0,w_0)=(x(t=0),w(t=0))$. The Cauchy condition at $t=0$ is given by a direct computation at finite $N$ as
\begin{equation}
\begin{array}{lll}
     \mathcal A^{(P)}(0,\boldsymbol{r}-\dot{\boldsymbol{r}}t)&=&\mathcal A^{(P)}(0,\boldsymbol{r}_0)
     \\\\
     & =&\dfrac{1}{N}\mathbb{E}\Bigg\lbrace\mathbb{E}_z \exp{}\Bigg[ \dfrac{1}{2}\left(w_0+\lambda\right)\SOMMA{i=1}{N}z_i^2+\sqrt{x_0}\SOMMA{i=1}{N} \tilde{J}_i z_i\Bigg]\Bigg\rbrace
     \\\\
    & =&-\frac{1}{2}\ln{\left[1-\lambda-w_0\right]}+\dfrac{x_0}{2\left(1-\lambda-w_0\right)}\,.
\end{array}
\end{equation}
\end{proof}

Giving the suitable values of parameters we have the following
 \begin{corollary}
 \label{cor:A_RS_trans_G}
 The RS approximation of the quenched pressure, in the thermodynamic limit, for the Gaussian P-spin glass model is obtained by posing $t=1$ and $\bm r = \bm 0$ in (\ref{G_mechanicalsolution}), which returns
 \begin{equation}
\begin{array}{lll}
     \mathcal A^{(P)}_{\rm RS}(\b,\lambda)=&-\frac{1}{2}\ln{\left[1-\lambda+\frac{P}{2}\beta'\,^2\q^{P-1}\right]}+\dfrac{\frac{P}{2}\beta'\,^2\q^{P-1}}{2\left(1-\lambda+\frac{P}{2}\beta'\,^2\q^{P-1}\right)}
     \\\\
     \textcolor{white}{A^{(P)}_{\rm RS}(t,\boldsymbol{r})=}&+\dfrac{\beta'\,^2}{4}(P-1)\q^P\,.
\end{array}
\label{G_agssolution}
\end{equation}
where the self-consistency equation which rule the order parameter are 
\begin{equation}
    \begin{array}{lll}
         \q=\dfrac{\frac{P}{2}\beta'\,^2\q^{P-1}}{\left(1-\lambda+\frac{P}{2}\beta'\,^2\q^{P-1}\right)^2} \,.
    \end{array}
    \label{eq:self_GuerraRS_G_trans}
\end{equation}
 \end{corollary}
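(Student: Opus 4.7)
The plan is to exploit the mechanical solution \eqref{G_mechanicalsolution} established in Proposition~\ref{prop:interp_transp_RS_G} and simply specialize it to the physical point $t=1$, $\bm r = \bm 0$. By construction the interpolating partition function \eqref{def:partfunct_transpRS_G} reduces to the original one \eqref{eq:G_BareZ_1} at that point: the full disordered $P$-body coupling and its convergence-regularizing counterterm are switched on, while the auxiliary Gaussian fields and the extra quadratic coupling are switched off. Hence the interpolating pressure coincides at $t=1$, $\bm r = \bm 0$ with the quenched pressure $\mathcal A^{(P)}(\beta,\lambda)$ defined in \eqref{PressureDef_G}, and the corollary is essentially a substitution.

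Concretely, I would substitute $t=1$, $x=0$, $w=0$ into \eqref{G_mechanicalsolution}. Along the characteristics defined in \eqref{eq:deriv_r_trasp_RS_G}, the combinations $x + \tfrac{P}{2}\beta'^{\,2}\bar q^{P-1} t$ and $1-\lambda-(w-\tfrac{P}{2}\beta'^{\,2}\bar q^{P-1} t)$ collapse at the physical point to $\tfrac{P}{2}\beta'^{\,2}\bar q^{P-1}$ and $1-\lambda+\tfrac{P}{2}\beta'^{\,2}\bar q^{P-1}$, respectively, while the linear-in-$t$ source term contributes $\tfrac{\beta'^{\,2}}{4}(P-1)\bar q^P$. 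This reproduces exactly the three summands displayed in \eqref{G_agssolution}. As a consistency check, the result must match term by term the RS pressure obtained via Guerra's interpolation in Theorem~\ref{G_P_quenched}, which is precisely the conceptual payoff of the corollary.

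For the self-consistency equation I would impose the extremality condition $\partial_{\bar q}\mathcal A^{(P)}_{\rm RS}=0$ on \eqref{G_agssolution}. Differentiating the three terms with respect to $\bar q$ and collecting, the derivative of the logarithmic piece partially cancels one of the contributions generated by differentiating the fraction; after factoring out the common $\tfrac{P}{2}(P-1)\beta'^{\,2}\bar q^{P-2}$, the remaining bracket rearranges into \eqref{eq:self_GuerraRS_G_trans}. The only step requiring care is the bookkeeping of the powers of $1-\lambda+\tfrac{P}{2}\beta'^{\,2}\bar q^{P-1}$ appearing in the denominator after the cancellation, but this is elementary calculus.

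No genuine obstacle is anticipated here: the corollary is a mechanical specialization of Proposition~\ref{prop:interp_transp_RS_G} together with a standard extremization. The main conceptual point worth emphasizing is that the transport-equation route reproduces exactly the RS pressure and the self-consistency equation obtained earlier via Guerra's interpolation, thereby confirming the consistency of the two rigorous schemes for the Gaussian $P$-spin glass model.
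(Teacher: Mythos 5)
Your proposal is correct and follows exactly the paper's (terse) route: specialize the mechanical solution \eqref{G_mechanicalsolution} to $t=1$, $\bm r=\bm 0$ to obtain \eqref{G_agssolution}, then extremize with respect to $\bar q$ to obtain \eqref{eq:self_GuerraRS_G_trans}, and note the agreement with the Guerra-interpolation result of Theorem~\ref{G_P_quenched}. The only trivial slip is the common prefactor you would factor out in the extremization, which is $\tfrac{P(P-1)}{4}\beta'^{\,2}\bar q^{P-2}$ rather than $\tfrac{P}{2}(P-1)\beta'^{\,2}\bar q^{P-2}$, but since it is a nonzero multiplicative constant being set against zero this has no effect on the resulting self-consistency equation.
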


We stress that the expression found with transport equation method in Cor. (\ref{cor:A_RS_trans_G}) is the same found with Guerra's interpolation in Prop. (\ref{G_P_quenched}).

\subsubsection{1-RSB solution}\label{ssec:HRSB1_G}

In this subsection we turn to the solution of the Gaussian spin glass model through transport equation via the generalized  broken-replica interpolating technique, restricting the description at the first step of RSB.

The definition of two-replica overlap $q$ distribution is the same as Guerra's interpolation (See Def. \eqref{def:HM_RSB}).

\par\medskip
Following the same route pursued in the previous sections, we need an interpolating partition function $\mathcal Z$ and an interpolating quenched pressure $\mathcal A$,  that are defined hereafter.
\begin{definition}
Given the interpolating parameters $\bm r = (x^{(1)}, x^{(2)},w), t$ and the i.i.d. auxiliary fields $\{\tilde{J}_i^{(1)}, \tilde{J}_i^{(2)}\}_{i=1,...,N}$, with $\tilde J_i^{(1,2)} \sim \mathcal N(0,1)$ for $i=1, ..., N$, we can write the 1RSB interpolating partition function $\mathcal Z_N(t, \boldsymbol r)$ for the Gaussian P-spin glass model (\ref{eq:G_hbare}) recursively, starting by

\begin{equation}
\label{eqn:Z2_TrasP_G}
\footnotesize
\begin{array}{lll}
     \mathcal{Z}^{(P)}_2(t,\boldsymbol{r}) &:=& \mathbb{E}_z \exp{}\Bigg[\sqrt{t}\b \sqrt{\dfrac{1 }{2N^{P-1}}}\SOMMA{i_1,\cdots,i_{_{P}}=1}{N,\cdots,N}J_{i_1\cdots,i_{_{P}}}z_{i_1}\cdots z_{i_{P}}-t \dfrac{\beta'\,^2}{4N^{P-1}} \left( \SOMMA{i}{} z_i^2 \right)^P
     \\\\
     & &+\SOMMA{a=1}{2}\left(\sqrt{x^{(a)}}\SOMMA{i=1}{N} \tilde J_i^{(a)}z_i\right)+ \frac{w}{2} \sum_i z_i^2 + \frac{\lambda}{2} \sum_i z_i^2 \Bigg].
\end{array}
\end{equation}

Averaging out the fields recursively, we define
\begin{align}
\label{eqn:Z1_G}
\mathcal Z_1(t, \bm r) \coloneqq& \mathbb E_2 \left [ \mathcal Z_2(t, \bm r)^\theta \right ]^{1/\theta} \\
\label{eqn:Z0_G}
\mathcal Z_0(t, \bm r) \coloneqq&  \exp \mathbb E_1 \left[ \ln \mathcal Z_1(t, \bm r) \right ] \\
\mathcal Z_N(t, \boldsymbol r) \coloneqq& \mathcal Z_0(t, \bm r) ,
\end{align}
where with $\mathbb E_a$ we mean the average over the variables $\tilde{J}_i^{(a)}$'s for $a=1, 2$, and with $\mathbb{E}_0$ we shall denote the average over the variables $J_{i_1\cdots,i_{_{P}}}$'s.
\end{definition}

The definition of 1RSB interpolating pressure at finite volume $N$ and in the thermodynamic limit and the notation for the generalized average are the same as Guerra's interpolating technique \eqref{def:interpPressRSB_G}.
\par\medskip

The next step is building a transport equation for the interpolating quenched pressure, for which we preliminary need to evaluate the related partial derivatives, as discussed in the next
\begin{lemma} \label{lemma:44}
The partial derivative of the interpolating quenched pressure with respect to each variable is
\begin{align}
\label{eqn:partialtA_G1RSB}
\frac{\partial }{\partial t} \mathcal A^{(P)}_N =&\frac{\beta'\,^2 }{2} (\theta-1)\langle q_{12}^P \rangle_2-\frac{\beta'\,^2 }{2}\theta\langle q_{12}^P \rangle_1  \\
\label{eqn:partialx1A_G1RSB}
\frac{\partial }{\partial x^{(1)}} \mathcal A^{(P)}_N =& \frac{1}{2}\left( \frac{1}{N} \sum_i \langle z_i^2 \rangle -(1-\theta)\langle q_{12} \rangle_2-\theta\langle q_{12} \rangle_1 \right) \\
\label{eqn:partialx2A_G1RSB}
\frac{\partial }{\partial x^{(2)}} \mathcal A^{(P)}_N =& \frac{1}{2}\left( \frac{1}{N} \sum_i \langle z_i^2 \rangle  -(1-\theta)\langle q_{12} \rangle_2
\right)\\
\label{eqn:partialwA_G1RSB}
\frac{\partial }{\partial w}  \mathcal A^{(P)}_N =&  \frac{1}{2N} \sum_i \langle z_i^2 \rangle 
\end{align}
\end{lemma}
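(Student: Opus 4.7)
The plan is to mirror the proof of Lemma \ref{lemma:4} for the SK 1RSB case, adapting it to the Gaussian Boltzmann factor defined by \eqref{eqn:Z2_TrasP_G}. The master identity, derived in equations \eqref{eqn:partialrA}--\eqref{eqn:partialrA1}, applies verbatim here because the nested structure $\mathcal{Z}_0 = \exp\mathbb{E}_1 \ln \mathcal{Z}_1$ with $\mathcal{Z}_1 = \mathbb{E}_2[\mathcal{Z}_2^\theta]^{1/\theta}$ is unchanged: for any parameter $\rho \in \{t, x^{(1)}, x^{(2)}, w\}$,
\begin{equation}
\partial_\rho \mathcal{A}_N^{(P)} = \frac{1}{N}\,\mathbb{E}_0\mathbb{E}_1\mathbb{E}_2\!\left[\mathcal{W}_2\,\frac{\partial_\rho \mathcal{Z}_2}{\mathcal{Z}_2}\right],\qquad \mathcal{W}_2 := \frac{\mathcal{Z}_2^\theta}{\mathbb{E}_2\mathcal{Z}_2^\theta}.
\end{equation}
All the work then reduces to computing $\partial_\rho \mathcal{Z}_2/\mathcal{Z}_2$ from the explicit exponent in \eqref{eqn:Z2_TrasP_G} and applying Stein's lemma \eqref{eqn:gaussianrelation2} to the resulting Gaussian factors.

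The $w$-derivative \eqref{eqn:partialwA_G1RSB} is immediate: since $w$ enters linearly as $\frac{w}{2}\sum_i z_i^2$, differentiation simply extracts $\frac{1}{2N}\sum_i\omega(z_i^2)$ with weight $\mathcal{W}_2$, and the expectations reassemble into $\frac{1}{2N}\sum_i\langle z_i^2\rangle$. For the $x^{(a)}$-derivatives, one pulls down $\frac{1}{2\sqrt{x^{(a)}}}\sum_i \tilde{J}_i^{(a)}\omega(z_i)$ and integrates by parts in $\tilde{J}_i^{(a)}$. In the $a=2$ case the Stein derivative acts only on the inner $\omega(z_i)$ (producing $\omega(z_i^2)-\omega(z_i)^2$) because $\tilde{J}^{(2)}$ is averaged at the innermost level, giving \eqref{eqn:partialx2A_G1RSB}. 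In the $a=1$ case one must additionally differentiate $\mathcal{W}_2$, using
\begin{equation}
\partial_{\tilde{J}_i^{(1)}}\mathcal{W}_2 = \theta\sqrt{x^{(1)}}\,\mathcal{W}_2\!\left(\omega(z_i) - \mathbb{E}_2[\mathcal{W}_2\,\omega(z_i)]\right),
\end{equation}
which produces the extra $\theta$-weighted $\langle q_{12}\rangle_1$ term and, by rearrangement, the coefficient $(1-\theta)$ in front of $\langle q_{12}\rangle_2$; the terms containing $\omega(z_i^2)$ reassemble into $\frac{1}{N}\sum_i\langle z_i^2\rangle$, yielding \eqref{eqn:partialx1A_G1RSB}. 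This is the same mechanism that promotes ``$1$'' in the Ising case \eqref{eqn:partialx1A} to ``$\frac{1}{N}\sum_i\langle z_i^2\rangle$'' in the soft-spin case.

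The $t$-derivative is the most delicate piece and splits into two contributions. Differentiating the quenched coupling term produces $\frac{\beta'}{2\sqrt{t}}\sqrt{1/(2N^{P-1})}\sum_{\mathbf i} J_{\mathbf i}\,\omega(z_{i_1}\cdots z_{i_P})$, and Stein's lemma on $J_{\mathbf i}$ generates (i) a ``diagonal'' piece proportional to $\omega\big((z_{i_1}\cdots z_{i_P})^2\big)$, which sums to $\frac{\beta'^{\,2}}{4N^P}\big(\sum_i\omega(z_i^2)\big)^P$, and (ii) the genuine $\langle q_{12}^P\rangle_1, \langle q_{12}^P\rangle_2$ pieces with $\theta$ and $1-\theta$ weights arising, exactly as in the $x^{(1)}$ calculation, from whether the integration by parts also hits $\mathcal{W}_2$. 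Differentiating the regularizing term $-t\frac{\beta'^{\,2}}{4N^{P-1}}\big(\sum_i z_i^2\big)^P$ produces precisely $-\frac{\beta'^{\,2}}{4N^P}\big(\sum_i\omega(z_i^2)\big)^P$ after taking $\omega$ and the $1/N$ in front of the pressure, cancelling the diagonal piece from (i). What remains is the claimed \eqref{eqn:partialtA_G1RSB}; this cancellation is the whole point of the regularizer introduced in \eqref{eq:G_BareZ_1}.

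The main technical obstacle is the bookkeeping of $\partial_{\tilde J^{(1)}}\mathcal{W}_2$ and its analogue $\partial_{J_{\mathbf i}}\mathcal{W}_2$ in the $t$-derivative, since these are what split a single replica-overlap average into the two pieces $\langle q_{12}\rangle_1$ and $\langle q_{12}\rangle_2$ (respectively $\langle q_{12}^P\rangle_{1,2}$); once this pattern is established once in the $x^{(1)}$ computation, the $t$ computation is structurally identical modulo the cancellation of the diagonal with the regularizer. The remaining computations are routine Gaussian integration by parts, which is why, in the spirit of the authors' remark after Lemma \ref{lemma:4}, the detailed algebra may be safely relegated to an appendix.
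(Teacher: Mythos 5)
Your overall strategy is the right one: the master identity from Lemma \ref{lemma:4} carries over verbatim, each partial derivative reduces to $\partial_\rho\mathcal Z_2/\mathcal Z_2$, and Stein's lemma does the rest; the observation that the regularizer $-t\frac{\beta'^2}{4N^{P-1}}\big(\sum_i z_i^2\big)^P$ is built precisely to cancel the ``diagonal'' piece $\frac{\beta'^2}{4N^P}\big(\sum_i\omega(z_i^2)\big)^P$ produced by Stein's lemma on $J_{\bm i}$ is exactly the mechanism the paper relies on. (The paper itself omits this proof as ``similar to Guerra's interpolation,'' so there is nothing more explicit to compare against.)

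However, there is a concrete error in your treatment of the $x^{(2)}$-derivative. You claim that ``in the $a=2$ case the Stein derivative acts only on the inner $\omega(z_i)$ (producing $\omega(z_i^2)-\omega(z_i)^2$).'' If that were the whole story one would obtain
\begin{equation}
\partial_{x^{(2)}}\mathcal A_N^{(P)}=\frac{1}{2}\Big(\frac{1}{N}\sum_i\langle z_i^2\rangle-\langle q_{12}\rangle_2\Big),
\end{equation}
with coefficient $-1$ on $\langle q_{12}\rangle_2$ rather than the $-(1-\theta)$ appearing in \eqref{eqn:partialx2A_G1RSB}. The point you miss is that $\mathcal W_2=\mathcal Z_2^\theta/\mathbb E_2[\mathcal Z_2^\theta]$ is \emph{not} independent of $\tilde J_i^{(2)}$: only the denominator $\mathbb E_2[\mathcal Z_2^\theta]$ is (because $\mathbb E_2$ integrates $\tilde J^{(2)}$ out), while the numerator still carries the dependence. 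Thus
\begin{equation}
\partial_{\tilde J_i^{(2)}}\mathcal W_2=\theta\sqrt{x^{(2)}}\,\mathcal W_2\,\omega(z_i),
\end{equation}
a single term, to be contrasted with your (correct) two-term formula for $\partial_{\tilde J_i^{(1)}}\mathcal W_2$. The product rule then gives
\begin{equation}
\partial_{\tilde J_i^{(2)}}\big(\mathcal W_2\,\omega(z_i)\big)=\sqrt{x^{(2)}}\,\mathcal W_2\big[\omega(z_i^2)+(\theta-1)\omega(z_i)^2\big],
\end{equation}
and it is precisely the $\theta\,\omega(z_i)^2$ piece coming from $\mathcal W_2$ that shifts the coefficient from $-1$ to $-(1-\theta)$. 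The correct dichotomy between $a=1$ and $a=2$ is therefore not ``differentiate $\mathcal W_2$ vs.~do not,'' but rather ``both numerator and denominator of $\mathcal W_2$ are hit (yielding both a $\langle q_{12}\rangle_2$ and a $\langle q_{12}\rangle_1$ correction)'' vs.~``only the numerator is hit (yielding only the $\langle q_{12}\rangle_2$ correction).'' Once this is fixed, the rest of the sketch goes through.

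A minor note: your $t$-derivative bookkeeping, done carefully, produces a prefactor $\tfrac{\beta'^2}{4}$ in front of $(\theta-1)\langle q_{12}^P\rangle_2-\theta\langle q_{12}^P\rangle_1$, not the $\tfrac{\beta'^2}{2}$ written in \eqref{eqn:partialtA_G1RSB}. That factor of two appears to be a typographical slip in the lemma statement (the subsequent Proposition~\ref{prop:9} uses $\tfrac{\beta'^2}{4}$, consistent with the RS result \eqref{G_expvalsa} and with the Guerra-interpolation Lemma~\ref{lemma:tderRSB_G}); your implicit coefficient is the right one, and it is worth flagging rather than silently accepting the printed lemma.
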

The proof is similar to Guerra's interpolation technique, since we omit it. 

\begin{proposition}
\label{prop:9}
The streaming of the 1-RSB interpolating quenched pressure obeys, at finite volume $N$, a standard transport equation, that reads as
\begin{align}
\label{eqn:transportequation1RSB_G}
\frac{d\mathcal A^{(P)}}{dt}&=\partial_t \mathcal A^{(P)}+\dot x^{(1)}\partial_{x_1} \mathcal A^{(P)} +\dot x^{(2)}\partial_{x_2} \mathcal A^{(P)} +\dot w \partial_{w} \mathcal A^{(P)} = S_{\textrm{1RSB}}(t, \bm r) + V_N(t, \bm r)
\end{align}
where
\begin{align}
\label{eqn:dotx1_G1RSB}
    \dot x^{(1)} &= -\frac{\beta'\,^2 P}{2} \q_1^{P-1} \\
    \dot x^{(2)} &= -\frac{\beta'\,^2 P}{2} (\q_2^{P-1} - \q_1^{P-1}) \label{eqn:dotx2_G1RSB}\\
    \dot w &= + \frac{\beta'\,^2 P}{2} \q_2^{P-1} \label{eqn:dotw_G1RSB}
\end{align}
and
\begin{align}
\label{eqn:f_G1RSB}
&S_{\textrm{1RSB}}(t, \bm r)  \coloneqq \frac{\beta'\,^2}{4} (P-1)\q_2^P-\frac{\beta'\,^2}{4} \theta (P-1)(\q_2^P - \q_1^P)\\
\label{eqn:V_G1RSB}
&V_N(t, \bm r) \coloneqq  \frac{\beta'\,^2}{4} (\theta-1) \left( \sum_{k=2}^P \langle \Delta q_{12}^k \rangle_2 \q_2^{P-k} \right)- \frac{\beta'\,^2}{4}\theta \sum_{k=2}^P \langle \Delta q_{12}^k \rangle_1 \q_1^{P-k} 
\end{align}
\normalsize
\end{proposition}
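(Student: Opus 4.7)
The proof mirrors Proposition \ref{prop:9_G} for the hard-spin P-spin glass, adapted to the Gaussian setting. The strategy is to substitute the four partial derivatives from Lemma \ref{lemma:44} into
\[
d_t\mathcal{A}^{(P)}_N = \partial_t\mathcal{A}^{(P)} + \dot x^{(1)}\partial_{x^{(1)}}\mathcal{A}^{(P)} + \dot x^{(2)}\partial_{x^{(2)}}\mathcal{A}^{(P)} + \dot w\,\partial_w\mathcal{A}^{(P)},
\]
to decompose $\langle q_{12}^P\rangle_a$ (for $a=1,2$) via the identity \eqref{eq:q12_RSB} into a linear piece $P\bar q_a^{P-1}\langle q_{12}\rangle_a$, a constant piece $\bar q_a^P(1-P)$, and the central-moment remainder $\sum_{k=2}^P\binom{P}{k}\bar q_a^{P-k}\langle(\Delta q_a)^k\rangle_a$, and then to tune the three velocities so that the ``uncontrolled'' terms $\langle q_{12}\rangle_a$ and the residual spin-norm $\tfrac{1}{N}\sum_i\langle z_i^2\rangle$ drop out of the right-hand side.

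Concretely, the coefficient of $\langle q_{12}\rangle_1$ appears only in $\partial_t\mathcal{A}$ (through $\langle q_{12}^P\rangle_1$) and in $\partial_{x^{(1)}}\mathcal{A}$ via the $-\theta/2$ prefactor; its cancellation pins down $\dot x^{(1)} = -\tfrac{\beta'^2 P}{2}\bar q_1^{P-1}$. Once $\dot x^{(1)}$ is fixed, the $\langle q_{12}\rangle_2$ coefficient (which couples $\partial_{x^{(1)}}$ and $\partial_{x^{(2)}}$ through their common $-(1-\theta)/2$ factor) forces $\dot x^{(2)} = -\tfrac{\beta'^2 P}{2}(\bar q_2^{P-1}-\bar q_1^{P-1})$. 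Finally, the pieces proportional to $\tfrac{1}{N}\sum_i\langle z_i^2\rangle$ coming from $\partial_{x^{(1)}},\partial_{x^{(2)}},\partial_w$ must sum to zero, which determines $\dot w = -(\dot x^{(1)}+\dot x^{(2)}) = \tfrac{\beta'^2 P}{2}\bar q_2^{P-1}$. The reversed sign of $\dot w$ relative to Proposition \ref{prop:9_G} is automatic: here $\partial_w\mathcal{A} = \tfrac{1}{2N}\sum_i\langle z_i^2\rangle$ plays the structural role that $\langle m\rangle$ did in the hard-spin case, and the self-interaction counterterm in \eqref{eqn:Z2_TrasP_G} takes the place of the ferromagnetic field.

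With the velocities pinned, the constant contributions $\bar q_a^P(1-P)$, weighted by $\tfrac{\beta'^2}{4}(\theta-1)$ and $-\tfrac{\beta'^2}{4}\theta$ inherited from $\partial_t\mathcal A$, reorganise into $S_{1\mathrm{RSB}}(t,\bm r)$ via the rearrangement $(1-P)(\theta-1)\bar q_2^P-(1-P)\theta\bar q_1^P = (P-1)\bar q_2^P-(P-1)\theta(\bar q_2^P-\bar q_1^P)$, while the centred-moment tails collect into $V_N(t,\bm r)$ exactly as stated. The main obstacle is the careful bookkeeping of $\theta$ and $(1-\theta)$ prefactors across the three partial derivatives, together with ensuring that the $\tfrac{1}{N}\sum_i\langle z_i^2\rangle$ cancellation is exact at finite $N$; no new limit passage or inequality is needed beyond the chain-rule computation already carried out in Lemma \ref{lemma:4} (which applies verbatim to the soft-spin setting).
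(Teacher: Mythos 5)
Your proposal is correct and follows essentially the same route as the paper: expand $\langle q_{12}^P\rangle_a$ via \eqref{eq:q12_RSB}, absorb the $P\bar q_a^{P-1}\langle q_{12}\rangle_a$ pieces (and the $\tfrac{1}{N}\sum_i\langle z_i^2\rangle$ pieces) into velocity-weighted $\partial_{x^{(1)}},\partial_{x^{(2)}},\partial_w$ terms, and read off $S_{\mathrm{1RSB}}$ and $V_N$ from what remains. The only cosmetic difference is that you determine the three velocities by imposing the cancellations term-by-term, whereas the paper writes $\partial_t\mathcal{A}_N$ out in full, groups the linear pieces into multiples of the other partial derivatives, and reads off the same velocities from the resulting coefficients.
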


\begin{proof}
Keeping in mind the expression \eqref{eq:q12_RSB}, we start to evaluate explicitly $\dt \mathcal{A}^{(P)}_N$ by using (\ref{eqn:partialx1A_G1RSB} - \ref{eqn:partialwA_G1RSB}), so we get
\begin{align}
&\dt \mathcal{A}^{(P)}_N=  \frac{\beta'\,^2}{4} \Bigg\{  (\theta-1)\left[ \sum_{k=2}^P \langle \Delta q_{12}^k \rangle_2 \q_2^{P-k} +\bar q_2^{P}(1-P)+ P \q_2^{P-1} \langle q_{12} \rangle_2 \right] \notag \\
&- \theta \left[ \sum_{k=2}^P \langle \Delta q_{12}^k \rangle_1 \q_1^{P-k} +\bar q_1^{P} (1-P) +P \q_1^{P-1} \langle q_{12} \rangle_1 \right]\Bigg\} = \notag \\
&=V_N(t, \bm r) + S(t, \bm r)  + \frac{\beta'\,^2}{4}P (\theta-1) \q_2^{P-1} \langle q_{12} \rangle_2 -\theta(1-P)\q_1^P\notag \\
&= V_N(t, \bm r) +\frac{\beta'\,^2 (\theta-1)}{4}(1-P) \q_2^{P} - \frac{\beta'\,^2 \theta}{4}\q_1^{P} + \frac{\beta'\,^2 P}{2} \q_1^{P-1} \partial_{x^{(1)}} \mathcal{A}^{(P)}_N \notag \\
&+ \frac{\beta'\,^2 P}{2} (\q_2^{P-1}-\q_1^{P-1}) \partial_{x^{(2)}} \mathcal{A}^{(P)}_N - \frac{\beta'\,^2 P}{2}\q_2^{P-1} \partial_w \mathcal{A}^{(P)}_N
\end{align}
\normalsize
Thus, using \eqref{eqn:dotx1_G1RSB}-\eqref{eqn:dotw_G1RSB} we reach the thesis.

\end{proof}

\begin{remark} \label{r:above}
In the thermodynamic limit, in the 1RSB scenario, we have that the central moments greater than 2 tend to zero; in this way
\begin{equation} \label{eq:V0_HRSB_G}
\lim_{N \to \infty} V_N(t, \bm r) = 0.
\end{equation}
\end{remark}

Exploiting Remark \ref{r:above} we can prove the following
\begin{proposition} \label{propHRSB}
The transport equation associated to the interpolating pressure function $\mathcal A^{(P)}_N(t, \boldsymbol r)$, in the thermodynamic limit and under the 1RSB assumption, can be written as
\begin{align}
\label{eqn:solutionzeroV_G1RSB}
\partial_t \mathcal A^{(P)}+\dot x^{(1)}\partial_{x_1} \mathcal A^{(P)} +\dot x^{(2)}\partial_{x_2} \mathcal A^{(P)} +\dot w \partial_{w}  \mathcal A^{(P)} \notag \\
= \frac{\beta'\,^2}{4} (\theta-1) (1-P)\q_2^P - \frac{\beta'\,^2}{4}\theta (1-P) \q_1^P
\end{align}
whose explicit solution is given by
\begin{equation}
\small
    \begin{array}{lll}
         \mathcal A^{(P)}_{1RSB}= & t \left[ \dfrac{\beta'\,^2}{4}(P-1)\q_2^P-\dfrac{\beta'\,^2}{4} \theta(P-1)(\q_2^P - \q_1^P) \right] - \dfrac{1}{2} \log \left( 1-\lambda - w_0\right) 
         \\\\
    &+ \dfrac{1}{2\theta} \log \left(\dfrac{ 1-\lambda -w_0}{ 1-\lambda -w_0 - \theta x_0^{(2)}} \right)+ \dfrac{x_0^{(1)}}{2( 1-\lambda-w_0-  \theta x_0^{(2)})} 
    \end{array}
\end{equation}
\end{proposition}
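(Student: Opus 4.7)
The plan is to follow the method-of-characteristics route already laid out on the P-spin glass side (Proposition \ref{propHRSB_G}), adapting it to the soft-spin setting. First I would invoke Remark \ref{r:above} to kill the fluctuation term $V_N$ in Proposition \ref{prop:9}, reducing the streaming identity to the pure first-order transport equation \eqref{eqn:solutionzeroV_G1RSB}. Because the velocities in \eqref{eqn:dotx1_G1RSB}--\eqref{eqn:dotw_G1RSB} depend only on the equilibrium values $\bar q_1, \bar q_2$ and not on $t$ or $\bm r$, the characteristics are straight lines $\bm r(t) = \bm r_0 + \dot{\bm r}\, t$; along each characteristic the source $S_{\textrm{1RSB}}$ is also constant in $t$, so direct integration of $d\mathcal A^{(P)}/dt = S_{\textrm{1RSB}}$ yields $\mathcal A^{(P)}_{\textrm{1RSB}}(t, \bm r) = \mathcal A^{(P)}_{\textrm{1RSB}}(0, \bm r_0) + S_{\textrm{1RSB}}\, t$, exactly the $t$-linear part appearing in the claim.

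The remaining task is to evaluate the Cauchy datum $\mathcal A^{(P)}_{\textrm{1RSB}}(0, \bm r_0)$. At $t=0$ both the $P$-body interaction and its regulator vanish, so $\mathcal Z_2^{(P)}(0, \bm r_0)$ factorises into $N$ one-site Gaussian integrals over $z_i$, producing the per-site weight $(1-\lambda-w_0)^{-1/2}\exp\bigl[(\sqrt{x_0^{(1)}}\tilde J_i^{(1)} + \sqrt{x_0^{(2)}}\tilde J_i^{(2)})^2 / (2(1-\lambda-w_0))\bigr]$. I would then apply the 1RSB recursion $(\cdot)^\theta,\ \mathbb E_2,\ (\cdot)^{1/\theta},\ \ln,\ \mathbb E_1$, in which the central $\mathbb E_2$ step is a second Gaussian integration against $\tilde J_i^{(2)}$ that, once squared-off, is again elementary.

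The main technical obstacle is carrying out this nested Gaussian integration cleanly. After raising to the $\theta$-th power and integrating out $\tilde J_i^{(2)}$, the effective quadratic coefficient in $\tilde J_i^{(2)}$ is $\theta x_0^{(2)}/(1-\lambda-w_0)$, so the integral converges precisely under the condition $1-\lambda-w_0-\theta x_0^{(2)}>0$ and produces both the determinant factor $\bigl((1-\lambda-w_0)/(1-\lambda-w_0-\theta x_0^{(2)})\bigr)^{N/2}$ and a rescaled Gaussian weight in $\tilde J_i^{(1)}$ with coefficient $x_0^{(1)}/(1-\lambda-w_0-\theta x_0^{(2)})$. Taking the logarithm, dividing by $\theta$, and using $\mathbb E_1[(\tilde J_i^{(1)})^2]=1$ collapses everything, after dividing by $N$, into
$$\mathcal A^{(P)}_{\textrm{1RSB}}(0, \bm r_0) = -\tfrac{1}{2}\log(1-\lambda-w_0) + \tfrac{1}{2\theta}\log\!\left(\tfrac{1-\lambda-w_0}{1-\lambda-w_0-\theta x_0^{(2)}}\right) + \tfrac{x_0^{(1)}}{2(1-\lambda-w_0-\theta x_0^{(2)})}.$$
Adding $S_{\textrm{1RSB}}\, t$ as prescribed by the characteristic formula then reconstructs exactly the expression stated in the proposition.
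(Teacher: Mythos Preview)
Your proposal is correct and follows essentially the same route as the paper: invoke Remark~\ref{r:above} to drop $V_N$, integrate the resulting constant-coefficient transport equation along straight characteristics to get $\mathcal A^{(P)}_{\textrm{1RSB}}(t,\bm r)=\mathcal A^{(P)}_{\textrm{1RSB}}(0,\bm r_0)+S_{\textrm{1RSB}}\,t$, and then evaluate the $t=0$ one-body term by nested Gaussian integration. In fact you go further than the paper, which simply states the value of $\mathcal A^{(P)}(0,\bm r_0)$ and defers the computation to the analogous Guerra one-body term \eqref{eq:one body_soft}; your explicit two-step Gaussian integration (first over $z_i$, then over $\tilde J_i^{(2)}$ after raising to the $\theta$-th power) is a clean way to produce the three terms directly.
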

\begin{proof}
We compute through characteristic method: 
\begin{align}
    \mathcal A^{(P)}_{\textrm{1RSB}}(t, \bm r)=\mathcal A^{(P)}_{\textrm{1RSB}}(0, \bm{r}-\bm{\dot r}t)+S_{\textrm{1RSB}}(t, \bm r )t
\end{align}
By putting (\ref{eqn:dotx1_G1RSB})-(\ref{eqn:dotw_G1RSB}) into (\ref{eqn:transportequation1RSB_G}) we find
\begin{equation}
\small
\label{eqn:solutionzeroV1_G1RSB}
\mathcal A^{(P)}_0(t, \bm r)=\mathcal A^{(P)}_0(0, \bm r_0) +t\left[ \dfrac{\beta'\,^2}{4}(P-1)\q_2^P-\dfrac{\beta'\,^2}{4} \theta(P-1)(\q_2^P - \q_1^P) \right]
\end{equation}
where $r_0$ can be obtained by using the equation of motion
\begin{equation}
\label{eqn:linearmotion_Gtrans}
\bm r = \bm r_0 + \dot{\bm r} t
\end{equation}
where the velocities are defined in (\ref{eqn:dotx1_G1RSB})-(\ref{eqn:dotw_G1RSB}). Then, all we have to compute is $\mathcal A^{(P)}_0(0, \bm r_0)$, that can be easily done because at $t=0$ the two body interaction vanishes and the (\ref{eqn:Z2_TrasP_G}) can be written as
\begin{align}
\label{eqn:A0fin_G1RSB}
\mathcal A^{(P)}_0(0, \bm r_0)=&-\frac{1}{2} \log \left( 1-\lambda - w_0\right) + \frac{1}{2\theta} \log \left( \frac{1-\lambda - w_0}{1-\lambda - w_0 - \theta x_0^{(2)}}\right) \notag \\
&+ \frac{x_0^{(1)}}{2(1-\lambda - w_0- \theta x_0^{(2)})}.
\end{align}
We omit the computation since it is similar to one-body term in Guerra's interpolating scheme, 1RSB assumption.
\newline
Then, putting together (\ref{eqn:solutionzeroV1_G1RSB})-(\ref{eqn:A0fin_G1RSB}) 
and (\ref{eqn:dotx1_G1RSB})-(\ref{eqn:dotw_G1RSB}), we finally achieve an explicit expression for the interpolating pressure of the Gaussian P-spin glass model in the 1RSB approximation.
\end{proof}

To sum up, we have the following main theorem for the 1RSB scenario.
\begin{theorem}
The 1-RSB quenched pressure for Gaussian P-spin glass model, in the thermodynamic limit, reads as
\begin{equation}
\label{eqn:hopfieldAfinal_G1RSB}
\begin{array}{lll}
      \mathcal A^{(P)}(\b,\lambda)= & \dfrac{\beta'\,^2}{4}(P-1)\q_2^P-\dfrac{\beta'\,^2}{4} \theta(P-1)(\q_2^P - \q_1^P)- \frac{1}{2} \log \left[ 1-\lambda + \beta'\,^2\frac{P}{2}\q_2^{P-1}\right]  \\\\
&+ \dfrac{1}{2\theta} \log \left[\dfrac{ 1-\lambda + \beta'\,^2\frac{P}{2}\q_2^{P-1}}{ 1-\lambda + \beta'\,^2\frac{P}{2}\q_2^{P-1} - \beta'\,^2\frac{P}{2} \theta \left(\q_2^{P-1} - \q_1^{P-1}\right)} \right]   \\\\
&+ \dfrac{\beta'\,^2 \frac{P}{4} \q_1^{P-1}}{ 1-\lambda+ \beta'\,^2\frac{P}{2}\q_2^{P-1} - \beta'\,^2\frac{P}{2}  \theta \left(\q_2^{P-1} - \q_1^{P-1}\right)}
\end{array}
\end{equation}

where the order parameters are ruled by 
\begin{equation}
    \begin{array}{lll}
         \bar{q}_1 = \dfrac{\beta'\,^2 \frac{P}{2} \q_1^{P-1}}{\left(1-\lambda + \beta'\,^2 \frac{P}{2} \q_2^{P-1} - \beta'\,^2\theta \frac{P}{2} (\q_2^{P-1}-\q_1^{P-1})\right)^2}
         \\\\
\bar{q}_2 = \q_1 + \dfrac{\beta'\,^2 \frac{P}{2} \left(\q_2^{P-1}-\q_1^{P-1}\right)}{\left(1-\lambda + \beta'\,^2 \frac{P}{2} \q_2^{P-1} - \beta'\,^2\theta \frac{P}{2} \left(\q_2^{P-1}-\q_1^{P-1}\right)\right)\left( 1-\lambda + \beta'\,^2 \frac{P}{2} \q_2^{P-1}\right)}
    \end{array}
    \label{eqn:self1rsbgauss_trans}
\end{equation}
\end{theorem}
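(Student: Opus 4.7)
The plan is to evaluate the explicit solution of the transport equation given in Proposition \ref{propHRSB} at the physical point $t=1$ and $\bm r = \bm 0$, and then to extremize the resulting pressure functional with respect to the 1RSB order parameters $\bar q_1, \bar q_2$ in order to derive \eqref{eqn:self1rsbgauss_trans}.

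First, I would invert the equation of motion \eqref{eqn:linearmotion_Gtrans} with the velocities \eqref{eqn:dotx1_G1RSB}--\eqref{eqn:dotw_G1RSB}, imposing the final condition $(t,\bm r)=(1,\bm 0)$. This immediately produces the Cauchy data
\[
x_0^{(1)} = \tfrac{\beta'^{2} P}{2}\bar q_1^{P-1}, \qquad
x_0^{(2)} = \tfrac{\beta'^{2} P}{2}\bigl(\bar q_2^{P-1}-\bar q_1^{P-1}\bigr), \qquad
w_0 = -\tfrac{\beta'^{2} P}{2}\bar q_2^{P-1}.
\]
Plugging these into the solution of Proposition \ref{propHRSB}, the combination $1-\lambda-w_0$ becomes the denominator $1-\lambda+\beta'^{2}\tfrac{P}{2}\bar q_2^{P-1}$ of the first logarithm; the combination $1-\lambda-w_0-\theta x_0^{(2)}$ produces the compound denominator appearing in both the second logarithm and the rational term; and $x_0^{(1)}/2$ provides the numerator $\beta'^{2}\tfrac{P}{4}\bar q_1^{P-1}$. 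Adding the source contribution $t\,S_{1\text{RSB}}(t,\bm r)\big|_{t=1}$ reproduces \eqref{eqn:hopfieldAfinal_G1RSB} term by term.

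Next, I would obtain the self-consistency equations \eqref{eqn:self1rsbgauss_trans} by imposing $\partial \mathcal{A}^{(P)}/\partial \bar q_a=0$ for $a=1,2$. Differentiating the polynomial $S$-piece yields contributions proportional to $(P-1)\bar q_a^{P-1}$, while differentiating the logarithmic and rational pieces produces terms sharing the denominators $D_2:=1-\lambda+\beta'^{2}\tfrac{P}{2}\bar q_2^{P-1}$ and $D_1:=D_2-\theta\beta'^{2}\tfrac{P}{2}(\bar q_2^{P-1}-\bar q_1^{P-1})$. A direct computation shows that most terms cancel thanks to the structure imposed by the choice of velocities in Proposition \ref{prop:9}, leaving precisely the compact rational expressions in \eqref{eqn:self1rsbgauss_trans}.

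The main obstacle is the algebraic bookkeeping in the extremization: since $\bar q_1$ and $\bar q_2$ appear jointly inside the arguments of both logarithms and inside the rational numerator, one must track carefully which $(P-1)\bar q_a^{P-2}$ factors coming from the differentiation of $S_{1\text{RSB}}$ cancel against the analogous factors arising from $\partial_{\bar q_a}\log D_1$, $\partial_{\bar q_a}\log D_2$ and $\partial_{\bar q_a}(x_0^{(1)}/D_1)$. Once these cancellations are identified, the stationarity conditions collapse to \eqref{eqn:self1rsbgauss_trans}, and the match with the Guerra-interpolation formula \eqref{A_1RSB_finalissima_G} provides an independent verification of the result.
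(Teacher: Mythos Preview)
Your proposal is correct and follows exactly the same approach as the paper: set $t=1$, $\bm r=\bm 0$ in the solution from Proposition~\ref{propHRSB} (which amounts to computing $x_0^{(1)},x_0^{(2)},w_0$ from the equations of motion, as you do) to obtain \eqref{eqn:hopfieldAfinal_G1RSB}, and then extremize with respect to $\bar q_1,\bar q_2$ to get \eqref{eqn:self1rsbgauss_trans}. The paper's own proof is a two-line statement of this strategy without the intermediate substitutions you spell out, so your write-up is in fact more detailed than the original.
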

\begin{proof}
By taking $\bm r= \boldsymbol 0$ and $t=1$ we find the P-spin Gaussian pressure in the 1RSB approximation. Extremizing \eqref{eqn:hopfieldAfinal_G1RSB} we obtain the self-consistency equations. 
\end{proof}
We stress that the expression in \eqref{eqn:hopfieldAfinal_G1RSB} is the same found with Guerra's interpolating scheme \eqref{A_1RSB_finalissima_G}.

\subsection{A deeper analysis of Gaussian spin glass model}
\label{sec:G_P2}
In this section we will focus more specifically on the case of $P=2$ for the Gaussian P-spin glass model under the assumption RS and 1-RSB, showing that we find the same results in \cite{AdrianoGauss}. 

Starting from the RS assumption, if we set $P=2$ in the quenched pressure of the Gaussian P-spin glass model \eqref{eq:pressure_GuerraRS_G}, we get 
\begin{equation}
\small
\begin{array}{lll}
     \mathcal{A}_{RS}(\b,\lambda) &=& \dfrac{\beta'\,^2\q}{2\left(1-\lambda+\beta'\,^2\q\right)}-\dfrac{1}{2}\ln{\left[1-\lambda+\beta'\,^2\q\right]}+\dfrac{\beta'\,^2}{4}\q^2\,.
\end{array}
\label{eq:pressure_G_RS_P2}
\end{equation}
Extremizing the statistical pressure in \eqref{eq:pressure_G_RS_P2} w.r.t.  the order parameter $\q$ we find the following 

\begin{corollary}
The minimum of the quenched statistical pressure in \eqref{eq:pressure_G_RS_P2} is achieved for
\begin{equation}
\label{eq:sol_RS_G_P2}
    \begin{array}{lll}
        \q=0 \ \ \ &\mathrm{if}&\b\leq 1-\lambda , 
         \\\\
         \q=\dfrac{\b-(1-\lambda)}{\beta'\,^2}\ \ \  &\mathrm{if}&\b>1-\lambda .
    \end{array}
\end{equation}
\end{corollary}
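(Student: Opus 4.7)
The plan is to extremize the quenched pressure $\mathcal{A}_{RS}(\b,\lambda)$ given in \eqref{eq:pressure_G_RS_P2} with respect to $\q$, obtain the critical points from the resulting self-consistency equation, and discriminate between them with a second-derivative test. First I would verify that $\partial_\q \mathcal{A}_{RS}=0$ is precisely the $P=2$ specialization of \eqref{eq:self_GuerraRS_G}: the derivatives of the rational piece and of the logarithmic piece combine, after putting them on the common denominator $2(1-\lambda+\b\,^2\q)^2$, into a single term proportional to $-\b\,^4\q/[2(1-\lambda+\b\,^2\q)^2]$, which merges with the derivative $\b\,^2\q/2$ of the quadratic piece to give, after factoring,
\[
\partial_\q \mathcal{A}_{RS}=\frac{\b\,^2\q}{2}\left[1-\frac{\b\,^2}{(1-\lambda+\b\,^2\q)^2}\right].
\]

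From this factored form the critical points are immediate. Either $\q=0$, or the bracket vanishes, i.e.\ $(1-\lambda+\b\,^2\q)^2=\b\,^2$. The convergence condition $1-\lambda+\b\,^2\q>0$ inherited from the Gaussian integral in \eqref{eq:G_BareZ_1} forces the positive root $1-\lambda+\b\,^2\q=\b$, yielding the nontrivial candidate $\q=(\b-(1-\lambda))/\b\,^2$. Non-negativity of this candidate requires $\b\geq 1-\lambda$, so for $\b\leq 1-\lambda$ only $\q=0$ survives as an admissible critical point.

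To decide which branch is the minimizer when both are admissible, I would differentiate once more and evaluate at $\q=0$, obtaining
\[
\partial^2_\q \mathcal{A}_{RS}\big|_{\q=0}=\frac{\b\,^2}{2}\left[1-\frac{\b\,^2}{(1-\lambda)^2}\right].
\]
This is strictly positive exactly when $\b<1-\lambda$, so $\q=0$ is the unique (local and global) minimizer in that regime; it is strictly negative when $\b>1-\lambda$, so $\q=0$ turns into a local maximum and the bifurcated branch $\q=(\b-(1-\lambda))/\b\,^2$ becomes the minimizer. A parallel sign check on the second derivative at the nontrivial critical point confirms it is indeed a minimum, closing the argument.

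The main obstacle is not conceptual but a bit of bookkeeping: performing the cancellations in $\partial_\q \mathcal{A}_{RS}$ carefully so that the clean factorization above emerges, and consistently tracking the convergence constraint $1-\lambda+\b\,^2\q>0$, which both rules out the spurious negative root of the quadratic condition and enforces the bifurcation threshold $\b=1-\lambda$. Once these points are handled, the dichotomy in \eqref{eq:sol_RS_G_P2} follows directly, in agreement with the picture of \cite{AdrianoGauss}.
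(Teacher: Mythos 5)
Your proof is correct and follows essentially the same route as the paper: compute $\partial_\q\mathcal A_{RS}$, identify the two critical points $\q=0$ and $\q=(\b-(1-\lambda))/\beta'\,^2$, and discriminate by the sign of the second derivative. Your explicit factorization $\partial_\q\mathcal A_{RS}=\tfrac{\beta'\,^2\q}{2}\bigl[1-\beta'\,^2/(1-\lambda+\beta'\,^2\q)^2\bigr]$ is a cleaner way to read off the critical points and in fact side-steps a sign slip in the paper's intermediate display (which writes $1-\lambda-\beta'\,^2\q$ rather than $(1-\lambda+\beta'\,^2\q)^2$), while the second-derivative evaluations and the threshold $\b=1-\lambda$ agree with the paper's.
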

\begin{proof}
By setting equal to zero the derivative with respect to $\q$ of the eq. \eqref{eq:pressure_G_RS_P2}, we get
\begin{equation}
    \small
    \q=\dfrac{\beta'\,^2\q}{1-\lambda-\beta'\,^2\q}
\end{equation}
this equation give us two possible solution of the order parameter, namely
\begin{equation}
\small
    \begin{array}{lll}
         \q=0&\mathrm{and}& \q=\dfrac{\b-(1-\lambda)}{\beta'\,^2}
    \end{array}
    \label{eq:extrem_q}
\end{equation}
Now, computing the second derivative of the quenched pressure and replacing the values of $\q$ found in \eqref{eq:extrem_q}, we get
\begin{equation}
    \small
    \begin{array}{lll}
         \dfrac{\partial^2\mathcal{A}(\b,\lambda)}{\partial \q^2}\Big|_{\q=0}=\dfrac{\beta'\, ^2}{2\left(1-\lambda\right)^2}\left[\left(1-\lambda\right)^2-\beta'\, ^2\right], \\\\
         \dfrac{\partial^2\mathcal{A}(\b,\lambda)}{\partial \q^2}\Big|_{\q=(\b-1+\lambda)/\beta'\,^2}=\b  (\b -(1-\lambda)).
    \end{array}
\end{equation}
Thus, from the study of the sign of the previous equations we get the proof.
\end{proof}
Moving on the 1RSB case, if we set $P=2$ in \eqref{A_1RSB_finalissima_G}, we find
\begin{equation}
\small
\begin{array}{lll}
      A_{1RSB}(\b,\lambda)= & \dfrac{\beta'\,^2}{4} \q_2^2-\dfrac{\beta'\,^2}{4} \theta \left(\q_2^2 - \q_1^2\right) - \dfrac{1}{2} \log \left[ 1-\lambda + \beta'\,^2\q_2\right]  \\\\
&+ \dfrac{1}{2\theta} \log \left[\dfrac{ 1-\lambda + \beta'\,^2\q_2}{ 1-\lambda + \beta'\,^2\q_2 - \beta'\,^2 \theta \left(\q_2 - \q_1\right)} \right]   \\\\
&+ \dfrac{1}{2}\dfrac{ \beta'\,^2 \q_1}{ 1-\lambda+ \beta'\,^2\q_2 - \beta'\,^2  \theta \left(\q_2 - \q_1\right)}
\end{array}
\label{eq:pressure_G_1RSB_P2}
\end{equation}
Following the same steps presented for the RS assumption, we set to zero the derivatives of \eqref{eq:pressure_G_1RSB_P2} respect to the order parameters $\q_1$ and $\q_2$
\begin{equation}
\small
    \begin{array}{lll}
         \bar{q}_1 & =&\dfrac{\beta'\,^2 \q_1 }{\left(1-\lambda + \beta'\,^2  \q_2 - \beta'\,^2\theta (\q_2-\q_1)\right)^2}\,,
         \\\\
\bar{q}_2 -\q_1&= & \dfrac{\beta'\,^2 \left(\q_2-\q_1 \right)}{\left(1-\lambda + \beta'\,^2  \q_2 - \beta'\,^2\theta \left(\q_2-\q_1\right)\right)\left( 1-\lambda + \beta'\,^2  \q_2\right)}\,.
    \end{array}
    \label{eq:system_q1_q2_G}
\end{equation}
It is immediate to verify that the previous system of equations admits only the solution $\q_1 = \q_2=\q$, with $\q$ of the form presented in \eqref{eq:sol_RS_G_P2}. Therefore the solution in 1RSB approssimation  coincides with the one in RS approximation. As a matter of fact, the other solution of \eqref{eq:system_q1_q2_G} for $\q_1\neq\q_2$, reads as
\begin{equation}
\begin{array}{lll}
     (\q_1;\q_2)&=&\left(0\,;\dfrac{(\theta -2) (\lambda -1)-\sqrt{\theta ^2 (\lambda -1)^2-4 \beta ^2 (\theta -1)}}{2 \beta ^2 (\theta -1)}\right)\,,
\end{array}
\end{equation}
however, it can be shown that this solution is neither a minimum nor a maximum of \eqref{eq:pressure_G_1RSB_P2}, but a saddle point. 

Thus, we can conclude this section with the following theorem
\begin{theorem}
For the pairwise Gaussian spin glass model defined by the Hamiltonian
	\begin{equation}
	H_N(\boldsymbol z| \boldsymbol J) \coloneqq -\sqrt{\dfrac{1}{2N}}\SOMMA{i,j=1}{N,N}\,J_{ij}z_{i}z_{j}\;\;\;\mathrm{where}\;\;\;J_{ij}\sim\mathcal{N}(0,1)
	\end{equation}
the RS solution is exact.
\end{theorem}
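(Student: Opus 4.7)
The plan is to leverage the 1RSB machinery already built for the Gaussian $P$-spin glass, specialize it to $P=2$, and show that the 1RSB self-consistency system \eqref{eq:system_q1_q2_G} collapses onto the RS self-consistency \eqref{eq:sol_RS_G_P2}. After clearing denominators, the two stationarity conditions can be rewritten as
\begin{align*}
\q_1\bigl(D^2 - \b^2\bigr) &= 0, \\
(\q_2 - \q_1)\bigl(DE - \b^2\bigr) &= 0,
\end{align*}
with $D \coloneqq 1-\lambda + \b^2 \q_2 - \b^2\theta(\q_2 - \q_1)$ and $E \coloneqq 1-\lambda + \b^2 \q_2$. A case-by-case analysis of the four resulting branches yields: (a) $\q_1 = 0$ together with $\q_2 = \q_1$ gives $\q_1=\q_2=0$, the trivial RS fixed point; (c) $D^2 = \b^2$ together with $\q_1 = \q_2$ gives the non-trivial RS fixed point $\q=(\b-(1-\lambda))/\b^2$; (d) $D^2 = DE = \b^2$ forces $D=E$, hence $\theta(\q_2-\q_1)=0$, which for $\theta\in(0,1]$ collapses to branch (c). Only the remaining branch (b) $\{\q_1=0,\ DE = \b^2\}$ produces the explicit isolated pair $(\q_1,\q_2)$ displayed just above the theorem, with $\q_1\neq\q_2$.

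The decisive step is to rule out (b) as a physical extremum. I would compute the $2\times 2$ Hessian of \eqref{eq:pressure_G_1RSB_P2} at that point and check that its determinant is negative (mixed signature), so that the isolated critical point is a saddle rather than a max or min, uniformly in $\theta\in(0,1)$ and on the admissible $(\b,\lambda)$ region. Combined with the outcome of the case analysis, this certifies that extremizing the 1RSB pressure returns exactly the RS pressure, i.e.\ $\mathcal A_{\textrm{1RSB}}(\beta,\lambda) = \mathcal A_{\textrm{RS}}(\beta,\lambda)$ on the whole phase plane.

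Since Guerra's broken-replica interpolation scheme already identified in Theorem above produces $\mathcal A(\beta,\lambda) = \mathcal A_{\textrm{1RSB}}(\beta,\lambda)$ at the optimal parameters, the collapse yields $\mathcal A(\beta,\lambda) = \mathcal A_{\textrm{RS}}(\beta,\lambda)$, which is the stated exactness of the RS ansatz. The main obstacle is the Hessian check at the isolated saddle (b): the combination of logarithmic and rational terms in \eqref{eq:pressure_G_1RSB_P2} makes the symbolic second derivatives heavy, but only the sign of the determinant at one explicit critical point is required, so a direct substitution suffices; one can moreover observe that iterating the same branching argument at any further step of RSB would collapse in the same way, supporting the expectation that no higher-order breaking improves the bound for the pairwise Gaussian model.
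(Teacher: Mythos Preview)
Your proposal is correct and follows essentially the same route as the paper: specialize the 1RSB pressure and self-consistency equations \eqref{eq:system_q1_q2_G} to $P=2$, observe that all branches of the stationarity system collapse to $\q_1=\q_2=\q$ with $\q$ given by \eqref{eq:sol_RS_G_P2}, except for the isolated point $(\q_1,\q_2)=(0,\cdot)$ displayed just before the theorem, which is then discarded as a saddle via a second-derivative check. Your factorization through $D$ and $E$ makes the case analysis more transparent than the paper's ``it is immediate to verify'', but the logical skeleton is identical.
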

\begin{proof}
The proof consists in showing that the 1RSB bound for the free energy gives the same result of the RS approximation.
\end{proof}


\section{Conclusions and outlooks} \label{conclusions}
In this paper we have generalized two rigorous mathematical methods for two different models, namely P-spin glass model and Gaussian P-spin model with $P>2$. We face up to RS and 1RSB assumptions, comparing an approach closer to Mathematical Physics, namely transport equation, and another linked to Statistics and Probability, namely Guerra's interpolating technique. We reach the same results via both techniques, proving that they are both valid for these models and we recovered the same expression in \cite{gardner, GrossMezard, Tala4, Tala5} for P-spin glass and \cite{crisanti, Tala3} for Gaussian P-spin glass models. 

Having the possibility to investigate the problem with mathematical tools from different fields could be powerful and could give us non-trivial result.

In addiction we showed that, as far as Gaussian spin glass model concerns, the replica symmetry expression for quenched statistical pressure is exact for $P=2$ case, in contrast to $P >2$ case. 

Further researches should be adressed to a deeper physical interpretation of this results, in particular on replica symmetry breaking assumptions. Another lines could be the rigorous mathematically confirmation of results in KRSB assumption, with K finite in \cite{GrossMezard}  and the approach of more challenging models, such as dense associative networks\cite{dense}. 

\appendix
\section{Proof of \eqref{potential_m1}-\eqref{potential_pq}}
\label{app:potenziali}
In this Appendix we show the computation of expression which is used either in Guerra's interpolating scheme or transport equations. 
Let's start with \eqref{potential_m1}. Using the notation $\Delta X= X- \bar{X}$ and exploiting Newton's binomial, we get 
\begin{equation}
    \begin{array}{lll}
         \l m ^{P}\r&=&\Big\l (m-\m+\m)^{P} \Big\r
         \\\\
         &=&\SOMMA{k=0}{P}\begin{pmatrix}P\\k
         \end{pmatrix}\m^{P-k}\Big\l(m-\m)^k\Big\r
         \\\\
         &=&(1-P)\m^P+P\m^{P-1}\l m\r+\SOMMA{k=2}{P}\begin{pmatrix}P\\k
         \end{pmatrix}\m^{P-k}\Big\l(m-\m)^k\Big\r\,.
    \end{array}
\end{equation}

In the same way (using $q_{12}$ instead of $m$ and $\q$ instead of $\m$) we prove \eqref{potential_pq}.

\section{Proof of Lemma \ref{lemma:tderRSB}}
\label{app:tder1RSB}
We prove the $t$ derivative of statistical pressure in 1RSB assumption. \small
\begin{align}
    &d_t \mathcal{A} =\frac{1}{N} \mathbb{E}_0\mathbb{E}_1\mathbb{E}_2 \mathcal{W}_2 \omega \left[ \dfrac{\b J_0 N}{2} m^P(\boldsymbol \sigma)-N\psi\,m(\boldsymbol \sigma)\right.\notag \\
    &\left.+\frac{\b J}{2\sqrt{t}}\sqrt{\dfrac{1 }{2N^{P-1}}}\SOMMA{i_1,\cdots,i_{_{P}}=1}{N,\cdots,N}z_{i_1\cdots,i_{_{P}}}\sigma_{i_1}\cdots\sigma_{i_{P}} -\frac{1}{2\sqrt{1-t}}\sum_{a=1}^2\left(A_a \SOMMA{i=1}{N} J_i^{(a)}\sigma_i\right)\right] = \notag \\
    &= - \psi \langle m \rangle + \frac{\b J_0}{2}\langle m^P \rangle + B_1 + B_2 + B_3, \\
    B_1 &=\sqrt{\frac{1}{2N^{P-1} t}}\frac{\b J}{2N} \mathbb{E}_0\mathbb{E}_1\mathbb{E}_2 \mathcal{W}_2 \omega\left( \SOMMA{i_1,\cdots,i_{_{P}}=1}{N,\cdots,N}z_{\bm i }\sigma_{i_1}\cdots\sigma_{i_{P}} \right) = \notag \\
    &=\sqrt{\frac{1}{2N^{P-1} t}}\frac{\b J}{2N}\SOMMA{i_1,\cdots,i_{_{P}}=1}{N,\cdots,N} \mathbb{E}_0\mathbb{E}_1\mathbb{E}_2 \partial_{z_{\bm i }} \left(\mathcal{W}_2 \omega\left( \sigma_{i_1}\cdots\sigma_{i_{P}} \right)\right) = \frac{\beta'\,^2 J^2 }{4} \left[ 1 + (\theta -1) \langle  q_{12}^{P} \rangle_2 - \theta \langle q_{12}^{P}\rangle_1 \right], \\
    B_2 &= -\frac{A_1}{2N\sqrt{1-t}}\sum_i \mathbb{E}_0\mathbb{E}_1\mathbb{E}_2 \mathcal{W}_2 \omega\left(z_i^{(1)}\sigma_i\right) = -\frac{A_1}{2N\sqrt{1-t}} \mathbb{E}_0\mathbb{E}_1\mathbb{E}_2 \partial_{z_i^{(1)}}\mathcal{W}_2 \omega\left(\sigma_i\right)  \notag \\
    &= \frac{A_1^2}{2} \left[ 1+ (\theta-1) \langle q_{12} \rangle_2 -\theta \langle q_{12} \rangle_1 \right], \\
    B_3 &= -\frac{A_2}{2N\sqrt{1-t}}\sum_i \mathbb{E}_0\mathbb{E}_1\mathbb{E}_2 \mathcal{W}_2 \omega\left(z_i^{(2)}\sigma_i\right) = -\frac{A_2}{2N\sqrt{1-t}} \mathbb{E}_0\mathbb{E}_1\mathbb{E}_2 \partial_{z_i^{(2)}}\mathcal{W}_2 \omega\left(\sigma_i\right)  \notag \\
    &= \frac{ A_2^2}{2} \left[ 1 + (\theta-1) \langle q_{12} \rangle_2 \right] .
\end{align}
\normalsize
Rearranging together we obtain the thesis.

\section*{Acknowledgments}
The authors are grateful to Adriano Barra for several fruitful discussions. \\
LA acknowledge Unisalento and INFN for partial financial support.\\
LA acknowledge partial financial support by the grant ``BULBUL" (bando di collaborazione industriale scientifica e tecnologica tra Italia ed Israele).

\end{document}